\documentclass[a4paper]{article}

\usepackage{wrapfig}
\usepackage[utf8]{inputenc}
\usepackage{graphicx}
\usepackage{amsmath}
\usepackage{amsthm}
\usepackage{verbatim}
\usepackage{comment}
\usepackage{amssymb}
\usepackage{todonotes}
\usepackage{enumerate}
\usepackage[noend, ruled]{algorithm2e}
\usepackage{authblk}
\usepackage{microtype}
\DontPrintSemicolon

\newtheorem{theorem}{Theorem}
\newtheorem{lemma}[theorem]{Lemma}

\newtheorem{observation}[theorem]{Observation}
\theoremstyle{remark}

\newtheorem{example}[theorem]{Example}
\newtheorem{corollary}[theorem]{Corollary}
\newtheorem{claim}[theorem]{Claim}

\newcommand{\Oh}{\mathcal{O}}

\newcommand{\Seq}{\mbox{\sf{SEQ}}}
\newcommand{\PATH}{\mbox{\sf{PATH}}}
\newcommand{\SeedGraph}{\mbox{\sf{SeedGraph}}}
\newcommand{\ST}{\mbox{\sf{ST}}}
\newcommand{\SW}{\mbox{\sf{SW}}}
\newcommand{\W}{\mathbf{W}}
\newcommand{\V}{\mathbf{V}}

\newcommand{\R}{\mathcal{R}}
\newcommand{\G}{\mathcal{G}}

\usepackage[hidelinks]{hyperref}
\usepackage[capitalise]{cleveref}
%

\begin{document}
\title{Syntactic View of Sigma-Tau Generation \texorpdfstring{\\}{}
of Permutations}

\author{Wojciech Rytter}
\author{Wiktor Zuba}
  \affil{\normalsize Faculty~of Mathematics, Informatics and Mechanics,
    University of Warsaw, Warsaw, Poland\\
    \texttt{[rytter,w.zuba]@mimuw.edu.pl}}

\date{\vspace{-0.5cm}}

\maketitle              

\begin{abstract}
We give a syntactic view of the Sawada-Williams
$(\sigma,\tau)$-generation of permutations.
The corresponding sequence of $\sigma\-\tau$-operations,
of length $n!-1$ is shown to be highly
compressible: it has $\Oh(n^2\log n)$ bit description.
Using this compact description we design
fast algorithms for ranking and unranking permutations.
\end{abstract}

\section{Introduction}
 We consider permutations of the set $\{1,2,...,n\}$, called here $n$-permutations.\\
For an $n$-permutation $\pi=(a_1,...,a_n)$ denote:

 $\sigma(\pi)=(a_2,a_3,...,a_n,a_1),\;
\tau(\pi)=(a_2,a_1,a_3,...,a_n).$ 

\smallskip\noindent In their classical book on combinatorial algorithms 
Nijenhuis and Wilf asked in 1975 
if all $n$-permutations can be generated, each exactly once, using 
in each iteration a single operation $\sigma$ or $\tau$.
This difficult problem was open for more than 40 years. 
Very recently Sawada and Williams 
presented an algorithmic solution at the conference SODA'2018.
In this paper we give new insights into their algorithm
by looking at the generation from syntactic point of view.

Usually in a generation of combinatorial objects of size $n$ 
we have a starting object and
some  set $\Sigma$ of very local operations. Next object results by
applying an operation from $\Sigma$, 
the generation is efficient iff each local
operation uses small memory and time. Usually the sequence 
of generated objects is
exponential w.r.t. $n$.
From a syntactic point of view the
generation globally can be seen as a very large word in
the alphabet $\Sigma$ describing the sequence
of operations. It is called the {\it syntactic sequence}
of the generation. Its textual properties can help to understand better the generation and
to design efficient ranking and unranking. Such syntactic approach was 
used for example by Ruskey and Williams in generation of  (n-1)-permutations of an n-set
 in \cite{DBLP:journals/talg/RuskeyW10}.

\noindent Here we are interested whether the syntactic sequence
is highly compressible.
We consider compression in terms of Straight-Line Programs 
({\it SLP}, in short), which represent large words by recurrences,
see \cite{DBLP:conf/icalp/Rytter04}, using operations of concatenation.
We construct SLP with $\Oh(n^2)$ recurrences, which has $\Oh(n^2\log n)$ bit description.

The syntactic sequence for some generations is highly compressible and
for others is not. For example in case of reflected binary Gray code 
of rank $n$ each local operation is the position of the changed bit.
Here $\Sigma=\{1,2,...,n\}$ and the syntactic sequence $T(n)$ is described by 
the short SLP of only $\Oh(n)$ size:
 $T_1=1;\ \ T(k)\,=\, T(k-1),\,k,\,T(k-1) \ \mbox{for}\ 2\le k \le n.$

In case of de Bruijn words of length $n$ each operation 
corresponds to a single letter appended at the end.
However in this case the syntactic sequence  is not highly compressible
though the sequence can be iteratively computed in a very simple way,
see \cite{DBLP:journals/dm/SawadaWW16}.
In this paper we consider 
the syntactic  sequence $\Seq_n$ (over alphabet $\Sigma=\{\sigma,\,\tau\}$)
 of Sawada-Williams
$\sigma\tau$-generation of permutations presented in \cite{DBLP:conf/soda/SawadaW18,sawadahomepage}.
An SLP of size $\Oh(n^2)$ describing $\Seq_n$ is given in this paper.
The $\sigma\tau$-generation of $n$-permutations
by Sawada and Williams can be seen as a Hamiltonian path $\SW(n)$
in the Cayley graph $\G_n$.
The   nodes of this graph are permutations and the
edges correspond to operations $\sigma$ and $\tau$.

\noindent We assume that (simple) arithmetic operations used in the paper 
are computable in constant time.
\paragraph{\bf Our results.} We show:
\begin{enumerate}
\item 
$\Seq_n$ 
can be represented 
by the straight-line program of $\Oh(n^2)$ size:

\smallskip
\begin{itemize} 
\item $
\W_0=\sigma,\ \ \
\W_k \;=\; \tau\ \cdot\ \prod_{i=1}^{n-2}\,\sigma^i\,
\W_{\Delta(k,i)}\,\gamma_{n-2-i}
$

\smallskip\hspace*{0.5cm}
$\ \mbox{for}\ 1\le k < n-3;$

\medskip
\item
$\V_n \;=\; \gamma_{n-3}\ \cdot\ \prod_{i=2}^{n-3}\,\sigma^i\,
\W_{\Delta(n-3,i)}\,\gamma_{n-2-i} \ \cdot \ \sigma^{n-1};
$

\medskip
\item
$ \Seq_n\;=\; \gamma_1^{n-2}\sigma^2\;(\V_n\,\tau)^{n-2}\; \V_n.$
\end{itemize}

\smallskip where $\Delta(k,i)=\min(k-1,n-2-i)$ and $\gamma_k\,=\, \sigma^k\tau$.

\medskip
\item  {\bf Ranking:} using compact description of $\Seq_n$ 
 the number of  steps (the rank of the permutation) 
needed to obtain a given permutation from
a starting one can be computed in time $\Oh(n\sqrt{\log n})$
using inversion-vectors of permutations.

\smallskip
\item
{\bf Unranking:} again using $\Seq_n$
the $t$-th permutation generated by 
$\Seq_n$ can be computed in $\Oh(n\frac{\log n}{\log\log n})$ time.
\end{enumerate}

\section{Preliminaries}
Denote by $cycle(\pi)$ all permutations 
cyclically equivalent to $\pi$.
Sawada and Williams introduced an ingenious concept of 
a seed: a {\it shortened permutation} 
representing a group of $(n-1)$ cycles.
Informally it represents a set of permutations which are cyclically
equivalent {\it modulo} one fixed element, which can appear in any place.

\medskip\noindent
Let $\oplus$ denote a modified addition modulo $n-1$, where $n-1\oplus 1 = 1$.
It gives a cyclic order of elements $\{1,...,n-1\}$. We write $a\ominus 1\,=b$
iff $b\oplus 1=a$.

\noindent Formally a  {\it seed} is a $(n-1)$ tuple of
distinct elements of $\{1,2,...,n\}$ of the form 
$\psi=(a_1,a_2,...,a_{n-1})$, such that $a_1=n$ and 
$(a_1,a_2\oplus1,a_2,...,a_{n-1})$ is a permutation. 
The element $x=mis(\psi)=a_2\oplus1$ is called a {\it missing} element.

Denote by $perms(\psi)$ 
the set of all $n$-permutations resulting by making 
a single insertion of $x$ into any position in  $\psi$, and making cyclic shifts.
The sets $perms(\psi)$ are called {\it packages}, the seed $\psi$ is
the {\it identifier} of its package $perms(\psi)$.
One of the main tricks in the Sawada-Williams construction is 
the requirement that the missing element equals $a_2\oplus1$.
In particular this implies the following:
\begin{observation} A given $n$-permutation belongs to one or two
packages. We can find identifiers of these packages
in linear time.
\end{observation}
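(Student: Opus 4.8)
The plan is to convert the membership relation $\pi\in perms(\psi)$ into a purely local condition on $\pi$, and then to enumerate all admissible seeds by a short case analysis driven by the defining requirement $mis(\psi)=a_2\oplus 1$.

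First I would record a clean reformulation of what a package is. Reading $\pi$ as a cyclic sequence, inserting the missing element $x$ into some gap of $\psi$ and then taking all cyclic shifts produces exactly the set of $n$-permutations whose cyclic class, after deleting one occurrence of $x$, coincides with the cyclic class of $\psi$. Hence $\pi\in perms(\psi)$ if and only if deleting the value $x=mis(\psi)$ from the cyclic arrangement of $\pi$ and then reading the remaining $n-1$ elements starting from $n$ returns exactly $\psi$. This is the step I expect to be the main obstacle: the informal phrase ``insert $x$ into any position and make cyclic shifts'' must be shown to be invariant enough that the only remaining freedom is the choice of the deleted \emph{value} $x$, not its position.

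Granting this, let $b$ be the element following $n$ in the cyclic arrangement of $\pi$, and let $c$ be the one following $b$. To obtain a seed I must delete some value $x$, and the resulting seed $\psi=(n,a_2,\dots,a_{n-1})$ has to satisfy $x=a_2\oplus 1$. I split on whether $x=b$. If $x\neq b$, then deletion does not change the successor of $n$, so $a_2=b$ and the seed is admissible precisely when $x=b\oplus 1$; since $b\in\{1,\dots,n-1\}$ we get $b\oplus 1\in\{1,\dots,n-1\}$ and $b\oplus 1\neq b$, so this value is always a legal deletion and yields one well-defined seed. If instead $x=b$, then the successor of $n$ becomes $c$, so $a_2=c$, and admissibility requires $c\oplus 1=b$; this contributes a second seed exactly when $c=b\ominus 1$.

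Since these exhaust all cases, $\pi$ always lies in the first package and lies in the second one iff $c=b\ominus 1$, giving one or two packages; moreover the two seeds have distinct missing elements $b\oplus 1\neq b$, hence are distinct. Finally I would observe that all the data needed — the position of $n$ and its two cyclic successors $b$ and $c$ — is obtained in a single scan, the test $c=b\ominus 1$ is $\Oh(1)$, and writing out each identifier (delete the chosen value and rotate to begin at $n$) costs $\Oh(n)$, so both identifiers are produced in linear time.
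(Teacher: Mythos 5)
Your proof is correct and takes essentially the same route as the paper: the Observation itself is left unproved there, but the identical case analysis appears in the paper's proof of Lemma~\ref{seed intersections} (deleting $p_2\oplus 1$ always yields a seed containing $\pi$, and deleting $p_2$ yields one exactly when $p_2=p_3\oplus 1$), which matches your two cases $x=b\oplus 1$ and $x=b$ with $c=b\ominus 1$. Nothing to fix.
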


\medskip The algorithm  of Sawada and Williams starts 
with a construction of a large and a small cycle
(covering together the whole graph). The graph consisting
of these two cycle is denote here by $\R_n$.
The small cycle is very simple. 
Once $\R_n$ is constructed the Hamiltonian path is very easy: 
In each cycle one $\tau$-edge is removed (the cycles become
simple paths),  then the cycles are connected by
adding one edge to $\R_n$. 

\subsection{Structure of seed graphs}\label{subs 2.1}
First we introduce seed-graphs.
Define  the {\it seed-graph} of the seed $\psi$, denoted here by 
$\SeedGraph(\psi)$ (denoted by $Ham(\psi)$ in \cite{DBLP:conf/soda/SawadaW18}), 
as the graph consisting of edges {\it implied}
 by the seed $\psi$. The set of nodes consists of $perms(\psi)$, the
set of edges consists of almost all $\sigma$-edges
between these nodes (except the edges of the form $(*,x,*,...,*)\rightarrow (x,*,...,*,*)$), but the set of $\tau$-edges consists only
of the edges of the form $(*,x,*,...,*) \rightarrow (x,*,*,...,*)$,
where $x$ is the {\it missing} element.
see Figure~\ref{SeedGraphpsi)}.

\bigskip
\begin{figure}[h]
\centering
\vspace*{-.7cm}
\centerline{\includegraphics[width=5.cm]{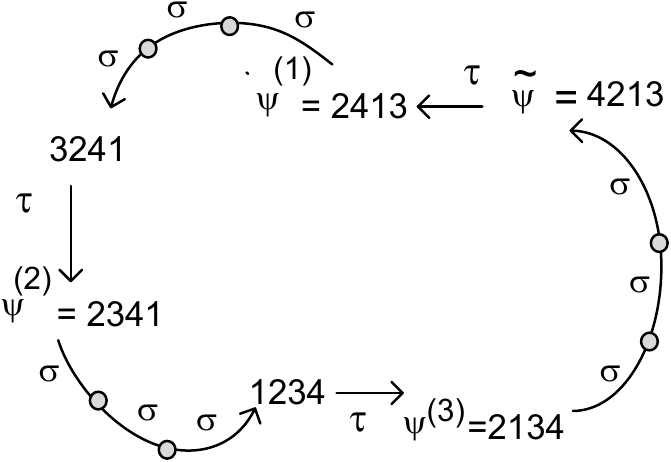}}
\caption{Structure of $SeedGraph(\psi)$, where $\psi=(4,1,3)$, 
$mis(\psi)=2$.}\label{SeedGraphpsi)}
\end{figure}
\vspace*{-0.5cm}

\noindent For a seed $\psi=(a_1,a_2,...,a_{n-1})$ with $mis(\psi)=x$,
let

\smallskip
\centerline{ $\psi^{(n-1)}=(x,a_2,...,a_{n-1},a_1),\; \widetilde{\psi}\,=\, (a_1,x,a_2,a_3,...,a_{n-1})$.}

\smallskip\noindent
For $1\leq i\leq n-1$ denote $\psi^{(i)\;}=\;
\gamma_{n-1}^i(\psi^{(n-1)}).$ 
In other words $\psi^{(i)}$, for $n>i>0$,  is the word $\psi$ right-shifted
by $i-1$ and with $x$ added at the beginning.
Observe that:
$\gamma_{n-1}(\psi^{(i)})=\psi^{(i+1)}\ \mbox{for}\ 0<i<n-1.$
\begin{example} For  $\psi=(5,3,2,1)$ we have 
$\widetilde{\psi}=(5,4,3,2,1),\; \psi^{(1)}=(4,5,3,2,1),$\\
$\psi^{(2)}=(4,1,5,3,2),
\; \psi^{(3)}=(4,2,1,5,3),\; \psi^{(4)}=(4,3,2,1,5)
.$
\end{example}

\noindent Each $perms(\psi)$ can be sequenced easily as a simple cycle in 
$\G_n$. 
Two seeds $\phi,\psi$ are called {\it neighbors} iff $perms(\phi)\cap perms(\psi)\ne \emptyset$.
The permutations of type $\psi^{(i)}$ play crucial role
as {\it connecting} points between packages of
neighboring seeds.

\begin{observation} Two distinct seeds $\phi,\psi$ are neighbors iff
$mis(\phi)=mis(\psi)\oplus 1\ \mbox{or}\ mis(\psi)=mis(\phi)\oplus 1$,
and after removing both $mis(\psi),\,mis(\phi)$ from $\phi$ and $\psi$
the sequences $\phi,\psi$ become identical.\\
\end{observation}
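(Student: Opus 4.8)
The plan is to prove the characterization of neighboring seeds by unraveling what it means for the packages $perms(\phi)$ and $perms(\psi)$ to share a permutation. Recall that $perms(\psi)$ consists of all $n$-permutations obtained by inserting the missing element $mis(\psi)$ into an arbitrary position of $\psi$ and then taking all cyclic shifts. So I would start by fixing a common permutation $\pi \in perms(\phi)\cap perms(\psi)$ and examine how $\pi$ decomposes with respect to each seed. Since $\psi=(a_1,\dots,a_{n-1})$ has $n-1$ distinct entries drawn from $\{1,\dots,n\}$, exactly one element of $\{1,\dots,n\}$ is absent from $\psi$, and by the definition of a seed that absent element is forced to be $mis(\psi)=a_2\oplus 1$. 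The same holds for $\phi$ with its own missing element. Thus $\pi$ is recovered from $\psi$ by inserting $mis(\psi)$, and from $\phi$ by inserting $mis(\phi)$, and in both cases the remaining $n-1$ letters of $\pi$ are exactly the letters of the respective seed in cyclic order.

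Let me set up both directions. For the forward direction, suppose $\phi\ne\psi$ are neighbors, witnessed by $\pi$. Deleting $mis(\psi)$ from $\pi$ (up to cyclic rotation) returns $\psi$, and deleting $mis(\phi)$ returns $\phi$. If $mis(\phi)=mis(\psi)$, then deleting the same element from $\pi$ would give both $\phi$ and $\psi$ as cyclic sequences, forcing $\phi=\psi$ (using that $a_1=n$ pins the rotation), a contradiction; so $mis(\phi)\ne mis(\psi)$. Now compare the multisets of letters: $\psi$ omits only $mis(\psi)$ and $\phi$ omits only $mis(\phi)$, while both are length-$(n-1)$ subsequences of the same cyclic word $\pi$. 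Removing both $mis(\psi)$ and $mis(\phi)$ from $\pi$ yields a single cyclic word of length $n-2$ that is simultaneously $\psi$ minus $mis(\phi)$ and $\phi$ minus $mis(\psi)$; hence after deleting both missing elements the two seeds coincide. The constraint $mis(\psi)=a_2\oplus 1$ together with $a_1=n$ is what forces the relation $mis(\phi)=mis(\psi)\oplus 1$ or $mis(\psi)=mis(\phi)\oplus 1$: the two seeds agree off their missing elements, so their second entries must be cyclically adjacent in the $\oplus$ order, and translating this through $mis=a_2\oplus 1$ gives exactly the stated $\oplus$-successor relation.

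For the converse, assume the two stated conditions. By symmetry take $mis(\phi)=mis(\psi)\oplus 1$, and suppose that after deleting both missing elements the seeds become an identical word $w$ of length $n-2$. I would then explicitly reconstruct a common permutation: insert $mis(\phi)$ and $mis(\psi)$ back into $w$ at the positions dictated by $\psi$ and $\phi$ respectively and verify that the two insertions are compatible up to a cyclic shift, so that one concrete $n$-permutation lies in both $perms(\phi)$ and $perms(\psi)$. The connecting permutations are precisely the $\psi^{(i)}$ type permutations introduced just above, so I would aim to exhibit the witness as such a connecting point, confirming $perms(\phi)\cap perms(\psi)\ne\emptyset$.

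The main obstacle I anticipate is the bookkeeping around cyclic shifts and the normalization $a_1=n$. Because every element of $perms(\psi)$ is defined only up to rotation while a seed is written in a canonical form with $n$ first, I must be careful that ``deleting the missing element returns the seed'' is read modulo rotation, and that when I delete two elements the residual cyclic word is well defined and genuinely common to both seeds. Pinning down how the $\oplus$-adjacency of the missing elements interacts with the cyclic alignment of $\phi$ and $\psi$ — in particular checking that no spurious rotation breaks the identification after removing both missing elements — is the delicate step; once that alignment is handled, both directions follow from counting and the defining property $mis(\psi)=a_2\oplus 1$.
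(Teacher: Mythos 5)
Your setup (a shared permutation $\pi$, deletion of the missing elements, rotation pinned by $a_1=n$) is sound, and your forward-direction arguments that $mis(\phi)\ne mis(\psi)$ and that deleting both missing elements from $\pi$ yields a word common to both seeds are fine. The genuine gap is at the one step that actually uses the seed condition: the $\oplus$-adjacency of the missing elements. Your justification --- ``the two seeds agree off their missing elements, so their second entries must be cyclically adjacent in the $\oplus$ order'' --- does not follow from anything you established; since $mis=a_2\oplus 1$, adjacency of the second entries is \emph{equivalent} to adjacency of the missing elements, so this sentence restates the goal rather than proving it. What is missing is a positional analysis, which is exactly how the paper argues (inside the proof of Lemma~\ref{seed intersections}): write the common permutation as $\pi=(n,p_2,p_3,\dots,p_n)$; deleting an entry at a position $j\ge 3$ leaves $p_2$ as the element following $n$, so validity of the resulting seed forces the deleted entry to be $p_2\oplus 1$; deleting the entry at position $2$ leaves $p_3$ following $n$, forcing $p_2=p_3\oplus 1$. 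Hence $\pi$ lies in at most two packages, whose missing elements are $p_2\oplus 1$ and $p_2$ --- adjacent --- and deleting both from $\pi$ exhibits the common word. Your alternative route through the common word $w=(n,w_2,\dots,w_{n-2})$ can be repaired, but only by a case analysis you never perform: if neither missing element sits at position $2$ of the other seed, both seeds have second entry $w_2$, giving $mis(\phi)=w_2\oplus 1=mis(\psi)$, a contradiction; if both sit at position $2$, then $mis(\phi)=mis(\psi)\oplus 1$ and $mis(\psi)=mis(\phi)\oplus 1$ hold simultaneously, which is impossible for $n\ge 4$; the two mixed cases give exactly one of the stated relations.

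The converse direction is likewise not a proof but a declaration of intent (``I would \dots\ verify that the two insertions are compatible''), and that compatibility is precisely what must be shown. The missing observation that makes it work: from $mis(\phi)=mis(\psi)\oplus 1$ and the seed condition for $\phi$, the second entry of $\phi$ must equal $mis(\psi)$; since $mis(\psi)$ does not occur in $w$, this forces $\phi=(n,mis(\psi),w_2,\dots,w_{n-2})$, i.e., $mis(\psi)$ is inserted at position $2$ and nowhere else. Given that, inserting $mis(\phi)$ into $\phi$ at the position it occupies in $\psi$ (relative to $w$) produces one explicit $n$-permutation from which deleting $mis(\phi)$ recovers $\phi$ and deleting $mis(\psi)$ recovers $\psi$, so it lies in $perms(\phi)\cap perms(\psi)$ and the seeds are neighbors. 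Without the positional analysis in the forward direction and the forced position-$2$ placement in the converse, your proposal correctly names the delicate step (your closing paragraph concedes as much) but does not carry it out.
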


\subsection{The pseudo-tree \texorpdfstring{$\ST_n$}{STn} of seeds}\label{subs 2.2}
For a seed $\psi=a_1a_2...a_{n-1}$ denote by 
$height(\psi)$ the maximal length $k$ of a prefix of $a_2,a_3,...,a_{n-1}$
such that $a_i=a_{i+1}\oplus 1$ for $i=2,3,...,k$.
For example $height(94326781)=3$ (here the {\it missing} number is 5).
For each two neighbors we distinguish one of them as a parent of the second one
and obtain a tree-like structure called a {\it pseudo-tree} denoted by $\ST_n$.
If $height(\psi)>1$ and $mis(\psi)=mis(\beta)\oplus1$
we write $parent(\beta)=\psi$. 
Additionally if $\sigma^i(\psi^{(i)})=\widetilde{\beta}$ 
we write $son(\psi,i)=\beta$ and we say that $\beta$ is the $i$-th son of $\psi$. 

\smallskip
\noindent The function $parent$ gives the tree-like graph of the set
of seeds, it is a cycle with hanging subtrees rooted at nodes
of this cycle. The set of seeds on this cycle is denoted by $Hub_n$. 
For example 

\centerline{$Hub_6=\{(6,5,4,3,2),(6,4,3,2,1),(6,3,2,1,5),(6,2,1,5,4),(6,1,5,4,3)\}.$}

\noindent Due to Lemma~\ref{same Hamiltonian} we have:
\begin{observation}
If $\psi \notin Hub_n$ then \underline{all} $\tau$-edges of $\SeedGraph(\psi)$ are in 
$PATH(n)$.
\end{observation}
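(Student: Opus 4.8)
The plan is to follow every $\tau$-edge of $\SeedGraph(\psi)$ through the two construction stages that produce $\PATH(n)$ --- the assembly of $\R_n$ from the individual seed-graphs and the final surgery turning $\R_n$ into $\PATH(n)$ --- and to show that such an edge can disappear only when $\psi$ lies on the hub. I would first describe the $\tau$-edges explicitly. Inside $perms(\psi)$ the $\sigma$-edges alone decompose the package into $n-1$ rotation cycles, each of length $n$, and every rotation cycle contains exactly one permutation of the form $(\ast,x,\ast,\dots,\ast)$ with $x=mis(\psi)$ in the second position; at that permutation the outgoing $\sigma$-edge is precisely the forbidden one and is replaced by the unique $\tau$-edge $(\ast,x,\ast,\dots,\ast)\to(x,\ast,\dots,\ast)$. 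Thus $\SeedGraph(\psi)$ carries exactly $n-1$ such $\tau$-edges, and, since each $perms(\psi)$ is sequenced as a simple cycle in the Cayley graph, together with the retained $\sigma$-edges they splice the $n-1$ rotation cycles into a single Hamiltonian cycle $C_\psi$ of the package. Proving the observation is therefore the same as proving that none of these $n-1$ junction $\tau$-edges is deleted while $\PATH(n)$ is built.

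For the assembly stage I would invoke Lemma~\ref{same Hamiltonian}, which guarantees that $\R_n$ traverses each package along its own cycle $C_\psi$, the cycles being spliced to one another only at the connecting permutations $\psi^{(i)}$ and $\widetilde{\beta}=\sigma^{i}(\psi^{(i)})$ dictated by the parent--son edges of the pseudo-tree $\ST_n$. The crucial point is that a splice across a parent--son edge hanging off the hub is realised by $\sigma$-transitions: the global path leaves the parent package and re-enters at the son package along the $\sigma$-path from $\psi^{(i)}$ to $\widetilde{\beta}$, so the only edges rearranged are $\sigma$-edges incident to these two connectors, while every $\tau$-edge of both incident seed-graphs is left untouched. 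Consequently a seed-graph can surrender a $\tau$-edge during assembly only where the pseudo-tree forces a cyclic closure rather than a tree splice, namely around $Hub_n$.

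It then remains to account for the final surgery, which deletes one $\tau$-edge from each of the two cycles of $\R_n$ and reconnects them by a single added edge. Here I would check that both deleted $\tau$-edges are incident to seed-graphs belonging to $Hub_n$, i.e. that Sawada and Williams place both cuts on the hub. Combining the two stages, a seed $\psi\notin Hub_n$ takes part neither in a hub closure nor in a final cut; its only interactions with the rest of $\R_n$ are the hanging-subtree splices with its parent and with its sons, and each of these, by the previous paragraph, rearranges only $\sigma$-edges. Therefore none of the $n-1$ $\tau$-edges of $\SeedGraph(\psi)$ is removed, and all of them lie on $\PATH(n)$.

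The main obstacle is the local claim underlying the second paragraph: that entering and leaving a non-hub package happens through $\sigma$-edges at $\psi^{(i)}$ and $\widetilde{\beta}$, so that a hanging-subtree splice never consumes a $\tau$-edge. This is exactly the structural content provided by Lemma~\ref{same Hamiltonian}; once it is granted, the remainder is the bookkeeping of identifying the two $\tau$-edges removed by the final surgery and verifying that both are hub-incident, which follows from the description of $\R_n$ as one large cycle through the hub together with the single simple small cycle.
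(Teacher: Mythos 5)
Your plan is correct, but it takes a more self-contained route than the paper. The paper gives no standalone proof of this observation: it states it as an immediate consequence of Lemma~\ref{same Hamiltonian}, whose proof compares $\PATH(n)$ with the Sawada--Williams successor function and identifies, for a non-hub seed $\psi$, the permutations $\psi^{(1)},\dots,\psi^{(n-1)}$ as exactly the vertices of $perms(\psi)$ whose ingoing edge of $\PATH(n)$ is a $\tau$-edge. Your core argument instead reads the claim directly off the algorithm defining $\PATH(n)$: the union of all seed-graphs contains every $\tau$-edge of every $\SeedGraph(\psi)$; the conflict-resolution step removes only $\sigma$-edges (indeed two $\tau$-edges can never conflict, since a vertex has a unique outgoing $\tau$-edge); and the final surgery removes exactly two $\tau$-edges, both incident to $\pi=(n,n-1,\dots,1)$ or $\sigma(\pi)$. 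This is more elementary, and for the literal statement it does not even require knowing that $\PATH(n)$ is a Hamiltonian path or that $\PATH(n)=\SW(n)$; what the paper's route buys is that the same analysis simultaneously establishes the equivalence with the Sawada--Williams generation, which is needed elsewhere anyway.

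Three touch-ups are needed. First, you credit Lemma~\ref{same Hamiltonian} with the splicing structure of $\R_n$ (the connectors $\phi^{(i)}$ and $\widetilde{\beta}=\sigma^i(\phi^{(i)})$); that content really comes from Lemma~\ref{seed intersections} and the definition of $\R_n$, and in any case your assembly stage needs none of it --- only the fact, explicit in the algorithm, that the conflict-resolution step deletes $\sigma$-edges exclusively. Second, and accordingly, your sentence that a seed-graph ``can surrender a $\tau$-edge during assembly only \dots around $Hub_n$'' is imprecise: no $\tau$-edge at all is lost during assembly; the only $\tau$-edge deletions in the entire construction are the two performed by the final surgery. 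Third, the hub-incidence ``bookkeeping'' you defer is the one substantive verification and should be carried out: the two removed edges have sources $\pi$ and $\tau(\sigma(\pi))$, both with second entry $n-1$; deleting that entry and rotating $n$ to the front yields $(n,n-2,n-3,\dots,1)\in Hub_n$ in both cases, and since the seed whose seed-graph owns a given $\tau$-edge is uniquely determined by the source (the missing element must be its second entry), neither removed edge is a $\tau$-edge of any non-hub $\SeedGraph(\psi)$. With these corrections your argument is complete.
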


\medskip\noindent
For $\psi\notin Hub_n$ 
let $Tree(\psi)$ be the subtree of $\ST_n$ rooted at $\psi$
including $\psi$ and nodes from which $\psi$ 
is reachable by $parent$-links.

\subsection{A version of Sawada-Williams algorithm}
\noindent We say that an edge $u\rightarrow v$ conflicts with $u'\rightarrow v'$
iff $u=u', v\ne v'$.
{Non-disjoint packages $\phi,\psi$
can be joined into a simple cycle
 by removing two $\sigma$-edges conflicting with $\tau$-edges.
}

\smallskip
\centerline{\includegraphics[width=4.5cm]{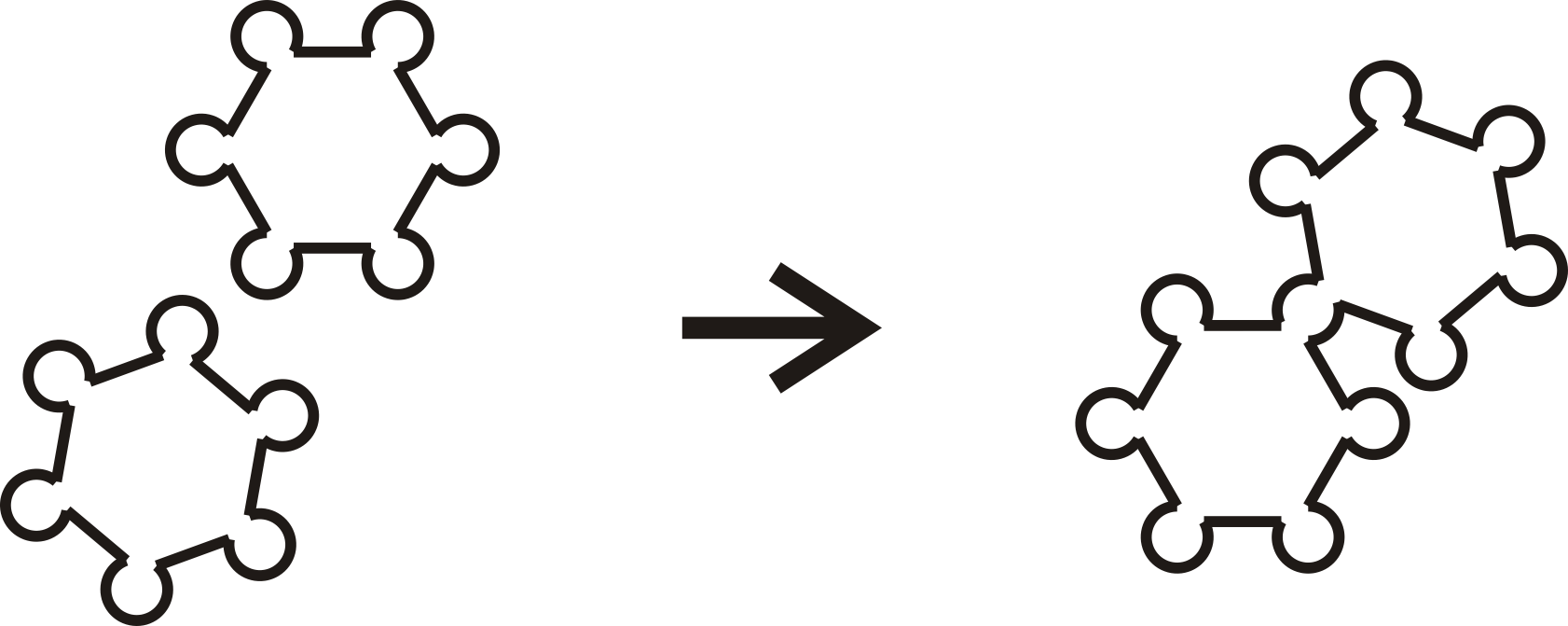}}

\smallskip\noindent
 By a union of graphs we mean set-theoretic union of nodes and
set-theoretic union of all edges in these graphs.
\\
Denote by $\R_n$ the graph $ \bigcup_{\psi}\,\SeedGraph(\psi)$ in which
we removed all $\sigma$-edges conflicting with $\tau$-edges.
The $\tau$-edges have priority here.
\noindent A version of the construction of a Hamiltonian path by 
Sawada-Williams, denoted by $\SW(n)$, can be written informally as:

\medskip\noindent
\begin{center}
\begin{minipage}{10.3cm}
{\bf Algorithm } Compute $\PATH(n)$;

\medskip
$P := \bigcup_{\psi\in SEEDS(n)}\,\SeedGraph(\psi)$

\smallskip
remove from $P$ all $\sigma$-edges conflicting with $\tau$-edges
in $P$

\smallskip\hspace{0.3cm}
$\pi:=(n,n-1,...,1)$;
add to $P$ the edge  $\pi\rightarrow \sigma(\pi)$

\smallskip\hspace{0.3cm}
remove edges $\pi \rightarrow \tau(\pi),\ 
\tau(\sigma(\pi))\rightarrow \sigma(\pi)$

\smallskip
{\bf return} $P$\ 
\{$P$ is now  a Hamiltonian path $\tau(\pi)\rightarrow^* \tau(\sigma(\pi))$\,\}
\end{minipage}
\end{center}
\begin{lemma}\label{same Hamiltonian}
$\PATH(n)=\SW(n)$.
\end{lemma}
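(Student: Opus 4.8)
The plan is to prove the equality as an identity of labelled subgraphs of the Cayley graph $\G_n$: I would show that $\PATH(n)$ and $\SW(n)$ have the same node set and the same set of (oriented, $\sigma$/$\tau$-labelled) edges. Both are obtained from the \emph{same} starting object, namely $P_0 := \bigcup_{\psi}\SeedGraph(\psi)$, and both apply the identical local rule of deleting every $\sigma$-edge that conflicts with a $\tau$-edge, with $\tau$-edges given priority. Consequently the two constructions agree on the common core $\R_n$, and the whole lemma reduces to checking that the final surgery turning $\R_n$ into a Hamiltonian path is performed on the same edges in both formulations.

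First I would check that $\R_n$ already has the correct node set: by the Observation that every $n$-permutation lies in one or two packages $perms(\psi)$, every node of $\G_n$ occurs in some $\SeedGraph(\psi)$, hence in $P_0$, and the conflict-resolution rule removes edges only, never nodes. The nodes shared by two seed-graphs are exactly the connecting points of type $\psi^{(i)}$ (by the neighbour Observation, two seeds overlap precisely when their missing elements are cyclically adjacent and they agree off those elements). I would then establish the structural fact, already announced in the preliminaries, that after conflict-resolution $\R_n$ decomposes into exactly two vertex-disjoint directed cycles, the large one and the small one. The argument is local: at each node I track which unique outgoing and which unique incoming edge survive the $\tau$-priority rule, using that a $\tau$-edge of $\SeedGraph(\psi)$ has the special form $(*,x,*,\dots)\to(x,*,\dots)$ with $x=mis(\psi)$, and that the packages are glued only through the points $\psi^{(i)}$.

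The surgery is then verified edge by edge. For $\pi=(n,n-1,\dots,1)$ the algorithm deletes the two $\tau$-edges $\pi\to\tau(\pi)$ and $\tau(\sigma(\pi))\to\sigma(\pi)$ and inserts the single $\sigma$-edge $\pi\to\sigma(\pi)$. I would confirm that these two $\tau$-edges lie in the two different cycles of $\R_n$, so that deleting them opens each cycle into a simple directed path, and that the inserted $\sigma$-edge joins the free end of one path to the start of the other, producing a single Hamiltonian path running from $\tau(\pi)$ to $\tau(\sigma(\pi))$. Because every deletion, retention and insertion is dictated by the same prescription in the definition of $\PATH(n)$ and in the $\SW(n)$ algorithm, the resulting edge sets are literally identical, which is the claim.

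The step I expect to be the main obstacle is the structural decomposition of $\R_n$ into exactly two cycles, together with the claim that the two designated $\tau$-edges fall in distinct cycles. This is precisely where the hub $Hub_n$ and the defining constraint $mis(\psi)=a_2\oplus1$ must be exploited: the seeds of $Hub_n$ carry the backbone cycle, while each hanging subtree $Tree(\psi)$ is threaded into it through its connecting points without ever closing a spurious cycle of its own. I would make this precise by an orientation/parity argument at the points $\psi^{(i)}$ — showing that entering a subtree through $\psi^{(i)}$ forces leaving it through the matching point $\widetilde{\beta}$ given by $son(\psi,i)=\beta$ — rather than by verifying small cases; getting this gluing invariant exactly right is the crux of the proof.
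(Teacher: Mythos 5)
Your proposal has a genuine gap, and it sits at the very core of the lemma: you treat $\SW(n)$ and $\PATH(n)$ as two runs of the \emph{same} construction (union of seed-graphs, $\tau$-priority conflict resolution, then the same surgery at $(n,n-1,\dots,1)$), and then conclude that identical prescriptions yield identical edge sets. But $\SW(n)$ is not defined that way. It denotes the Hamiltonian path of Sawada and Williams as constructed in their SODA'18 paper, where the path is specified by a successor function $next$: at a permutation $\pi=(p_1,\dots,p_n)$ the path takes a $\tau$-edge exactly when $p_2=r\oplus 1$ (with the exceptions $\pi=(n,n-1,\dots,1)$ and $p_2=n$), where $r$ is the cyclically first element after $n$ jumping over $p_2$. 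The content of the lemma is precisely that this externally defined path coincides with the output of the seed-graph algorithm of this paper; assuming that both are ``dictated by the same prescription'' makes the argument circular, and under that reading the lemma would be vacuously true by definition.

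What your argument would establish, if the two-cycle decomposition of $\R_n$ and the gluing invariant at the points $\psi^{(i)}$ were carried out, is that $\PATH(n)$ is a Hamiltonian path from $\tau(\pi)$ to $\tau(\sigma(\pi))$ --- a statement about $\PATH(n)$ alone, not the asserted equality. To prove the equality you must, as the paper does, (i) reduce the identity of the two paths to ``same starting vertex and same set of vertices from which a $\tau$-edge is taken'', and (ii) show that the $next$-function condition $p_2=r\oplus 1$ is equivalent to $p_2$ being the missing element $mis(\psi)$ of a seed $\psi$ with $\pi\in perms(\psi)$, so that the $\tau$-edges chosen by $next$ are exactly the $\tau$-edges of the seed-graphs surviving in $\R_n$; one must then separately check the hub packages and the first $2n-2$ permutations (the starting path), where the alternating pattern of $\sigma$- and $\tau$-edges has to be matched by hand. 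None of this comparison with the original Sawada--Williams description appears in your proposal, so the equality itself is not proved.
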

\begin{proof}
 To prove that the paths are the same it is enough to prove that  both begin in the same place and that $\tau$ edges are used from exactly the same vertices.
 The particular Hamiltonian path in \cite{DBLP:conf/soda/SawadaW18} is described in terms of a function $next$, which for any vertex assigns a next one on the path (by returning $\sigma$ or $\tau$).
 The function $next$ for a given permutation $\pi=(p_1,p_2,...,p_n)$ produces a $\tau$-edge when $p_2=r\oplus 1$ (unless $\pi=(n,n-1,...,1)$ or $p_2=n$), where
 $r$ is cyclically first element after $n$ jumping over $p_2$ (equal to $p_3$ if $n=p_1$).
 
 
\smallskip
The condition of being equal to $r\oplus 1$ in permutation $\pi$ is exactly the same condition as being the missing element of a seed $\psi$ such that $\pi\in perms(\psi)$
 (for one of two seeds if $r=p_3=p_2\ominus 1$).
 For a permutation $\pi\in perms(\psi)$ such that $height(\psi)<n-3$ the missing element of $\psi$ happens to be the first element of the permutation if and only if
 $\pi\in\bigcup\limits_{i=1}^{n-1}\{\psi^{(i)}\}$.
 $\bigcup\limits_{i=1}^{n-1}\{\psi^{(i)}\}$ is a set of those permutations from $perms(\psi)\cap bunch(\psi)$, whose ingoing edge in $PATH(n)$ is a $\tau$-edge.
 As $\tau$-edges exchange the first two elements hence both approaches
 describe the same sets of edges (unless $\psi$ belongs to the $Hub_n$).
 If the permutation belongs to the package $perms(\psi)$ for a hub seed $\psi$, the only difference from the previous case is that it can belong to the first part of the path of length $2n-2$
 (including the special permutation $(n, n-1, ..., 1)$).
 
 The first $2n-2$ permutations (the ones that are cyclically equivalent to $(n-1,n-2,...,1)$ after removing element $n$ which appears on first or second position)
 generate the same pattern in both constructions (alternation of $\sigma$ and $\tau$ edges).
 All other permutations from those packages follow the same rules as the ones from packages corresponding to seeds with height $\le n-4$, with the difference that permutations $\psi^{(i)}$ are not explicitly named
 (and that $\psi^{(1)}$ play a different role being the ones from the beginning of the path).
 In this way we have shown that $SW(n)=PATH(n)$.
\end{proof}
\section{Compact representation of bunches of permutations}
\noindent Our aim is to give a syntactic version of $\PATH(n)$: the sequence
$\Seq_n$ of $\sigma\tau$-labels of $\PATH(n)$ represented compactly.
We have to investigate more carefully the structure of seed-graphs and their interconnections.
We introduce the basic components of $PATH(n)$: groups of permutations 
corresponding to a subtree of seeds which are not on the cycle in the 
pseudo-tree. For $\psi \notin Hub_n$ define $$bunch(\psi)\;=\ 
\bigcup_{\beta\in Tree(\psi)}\,
perms(\beta)\ -\ cycle(\widetilde{\psi})\ \cup\ \{\widetilde{\psi},\psi^{(n-1)}\}.$$

In other words $cycle(\widetilde{\psi})$ connects $bunch(\psi)$ with the 
{\it "outside world"}, only through $\widetilde{\psi},\psi^{(n-1)}$.

\noindent We start with properties of local interconnection between
two packages.

\begin{lemma}\label{seed intersections}\mbox{ \ } Two seeds $\phi\ne \psi$ are neighbors iff 
one of them  is the parent of another one.
If $\phi=parent(\psi)$
then $perms(\phi)\cap perms(\psi)$ is 
the $\sigma$-cycle containing both $\widetilde{\psi}$ and 
$\phi^{(i)}$, for some $i$, 
and has a structure as shown in Figure~\ref{Sept26}(A), where $\psi$ is the $i$-th son of $\phi$.
If $height(\phi)=k<n-3$ then $height(\psi)=\Delta(k,i)$.
Furthermore $son(\phi,i)$ exists for all $i\in\{1,...,n-3\}$.
\end{lemma}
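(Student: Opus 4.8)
The plan is to reduce all four assertions to one explicit picture: how the missing elements $x=mis(\psi)$ and $y=mis(\phi)$ sit inside the common word of the two seeds. I would first settle the equivalence between being neighbors and being in a parent/child relation. By the neighbor characterization stated above, $\phi\ne\psi$ are neighbors exactly when (say) $mis(\phi)=mis(\psi)\oplus1$ and deleting both missing elements turns $\phi$ and $\psi$ into one common word $w$ of length $n-2$. Write $x=mis(\psi)$, $y=mis(\phi)=x\oplus1$, and $\phi=(n,b_2,\dots,b_{n-1})$. The seed constraint $mis=a_2\oplus1$ forces the second entry of $\phi$ to be $y\ominus1=x$ and the second entry of $\psi$ to be $x\ominus1$; since the parts after $n$ agree once the missing elements are removed, the third entry of $\phi$ must again be $x\ominus1$. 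Hence $b_2=b_3\oplus1$, so $height(\phi)\ge2>1$ and the definition yields $\phi=parent(\psi)$; conversely parents are neighbors by definition, and both missing elements cannot simultaneously be the $\oplus1$ of the other, so the orientation is unique. This proves the first sentence.

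Next I would pin down $perms(\phi)\cap perms(\psi)$. Viewed cyclically, every member of $perms(\phi)$ is obtained from $w$ by placing $x$ in the single gap it occupies in $\phi$ (immediately after $n$) and placing $y$ in an arbitrary gap, and symmetrically every member of $perms(\psi)$ fixes $y$ in its $\psi$-gap and lets $x$ roam. A permutation therefore lies in the intersection iff, cyclically, both $x$ and $y$ occupy their prescribed gaps of $w$; this determines the cyclic word uniquely, so the intersection is a single $\sigma$-cycle. Computing $\sigma^i(\phi^{(i)})$ from the definitions gives $(n,b_2,\dots,b_{n-i},y,b_{n-i+1},\dots,b_{n-1})$, that is $\phi$ with $y$ inserted right after position $n-i$; as this is a cyclic shift of $\phi^{(i)}\in perms(\phi)$ it lies in $perms(\phi)$, and for the index $i$ with $\sigma^i(\phi^{(i)})=\widetilde{\psi}$ it equals $\widetilde{\psi}\in perms(\psi)$. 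Hence $\widetilde{\psi}$ and $\phi^{(i)}=\sigma^{-i}(\widetilde{\psi})$ both lie on this $\sigma$-cycle, giving the structure claimed in the lemma's figure.

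Finally I would read off the son and its height. For $i\in\{1,\dots,n-3\}$ the inserted $y$ lands in position $\ge4$, so the first two entries of $\sigma^i(\phi^{(i)})$ are $n$ and $b_2=x$; stripping the (missing) second entry produces a legitimate seed $\beta$ with $mis(\beta)=x$ and $\widetilde{\beta}=\sigma^i(\phi^{(i)})$, so $son(\phi,i)=\beta$ exists for every such $i$, and taking $\beta=\psi$ fixes the index $i$ above. For the height, $height(\phi)=k$ means $b_2,\dots,b_{k+1}$ is the maximal run $x,x\ominus1,\dots,x\ominus(k-1)$; in $\beta$ this run starts at $b_3$ and is cut either by the original break after $b_{k+1}$ (giving length $k-1$, when $n-i\ge k+1$) or by the freshly inserted $y$ (giving length $n-2-i$, otherwise), the two regimes meeting at $n-i=k+1$. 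Either way the length is $\min(k-1,n-2-i)=\Delta(k,i)$.

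The step I expect to be the main obstacle is this height bookkeeping together with the wrap-around of $\oplus$: one must check that $y=x\oplus1$ always breaks, never prolongs, the descending $\ominus1$-run, and that for $i\le n-3$ the gaps of $x$ and $y$ in $w$ stay distinct (they coincide only at $i=n-2$), so that the intersection neither collapses nor splits. The hypothesis $k<n-3$ is exactly what keeps the run inside the seed, and the hub case $k\ge n-3$ must be handled separately.
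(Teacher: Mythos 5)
Your proof is correct and follows essentially the same route as the paper's: both arguments come down to tracking the maximal descending run when $b_2$ is deleted and $y=b_2\oplus 1$ is inserted, with the same two regimes (insertion beyond the run giving length $k-1$, insertion inside it giving length $n-2-i$) and the same wrap-around check that $y$ can never prolong the run. The only difference is presentational: you start from the stated neighbor characterization and argue insertion-first on the common word $w$, while the paper re-derives that characterization removal-first from a permutation $\pi$ in the intersection (your son index $i$ is the paper's $n-j+1$).
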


\begin{proof}
 Assume that $\pi=(p_1,p_2,...,p_n)\in perms(\phi)$. Without loss of generality we can assume that $p_1=n$ (as cyclically equivalent permutations belong to the same packages).
 After removing any element from $\pi$ the first element after $n$ is either $p_2$ or $p_3$ in both cases we obtain a valid seed if and only if the removed element is
 greater by one than that element following $n$. When removing element $p_2\oplus 1$ we always receive a seed which package contains $\pi$ and if we remove $p_2$ this is only the case if $p_2=p_3\oplus 1$.
 It means that if $\pi\in perms(\phi)\cap perms(\psi)$ then one of those seeds (denote it by $\phi$) is equal to $(p_1,p_2,...,p_n)$ (without an element $p_j=p_2\oplus 1$) and the other
 (denote it by $\psi$) to $(p_1,p_3,...,p_n)$.
 Removing both elements from $\pi$ gives us the sequence obtained from $\phi$ and $\psi$ after the appropriate removals.
 
 Permutations 
$\widetilde{\psi}=\pi$ and $\phi^{(n-j+1)}=(p_j,p_{j+1},...,p_n,p_1,...,p_{j-1})$ 
are cyclically equivalent.
 If $height(\phi)=k<n-3$ and $j>k+1$ then elements $p_2,p_3,...,p_{k+1}$ 
form a sequence decreasing by one: a sequence with a property $p_i=p_{i+1}\oplus 1$ and
 $p_{k+1}\neq p_{k+2}\oplus 1$ ($p_{k+1}\neq p_{k+3}\oplus 1$ if $j=k+2$). 

We obtain seed $\psi$ by removing the element $p_2$ from $\phi$ and inserting the element $p_j=p_2\oplus 1$
 (between elements $p_{j-1}$ and $p_{j+1}$). The removal shortens the decreasing sequence by 1 
and the insertion can neither shorten it (as the element lands outside of the 
sequence) nor extend it
 (even if $j=k+2$ as for $p_j=p_{k+1}\ominus 1$ the pair $(k,j)$ would have to be 
equal to $(n-2,n)$ which is impossible outside of $Hub_n$). Hence 
$$height(\psi)=k-1=\min(k-1,j-3)=\Delta(k,n-j+1)=\Delta(k,i)$$
 If $j\le k+1$ then the decreasing sequence is formed by elements $p_2,p_3,...,p_{k+2}$ (omitting element $p_j$). The removal of $p_2$ decreases the length of the sequence by 1 and insertion of $p_j$
 cut it just before that element. We have 
$$height(\psi)=j-3=\min(k-1,j-3)=\Delta(k,n-j+1)=\Delta(k,i)$$
 We can remove $p_2$ from the seed $\phi$ and insert $p_2\oplus 1$ in any of the $n-3$ places after $p_3$ obtaining a valid seed which is a son of $\phi$.
 
 Hence we know that packages of two seeds intersects if and only if they are in a parent-son relation, and that every seed of height greater than 1 has exactly $n-3$ seed-sons.
 We can also easily check which son of $parent(\psi)$ the seed $\psi$ is or compute $son(\psi,i)$ for any $i\in\{1,...,n-3\}$.
\end{proof}

\medskip
\begin{figure}[h]
\vspace*{-0.5cm}
\centerline{\includegraphics[width=11cm]{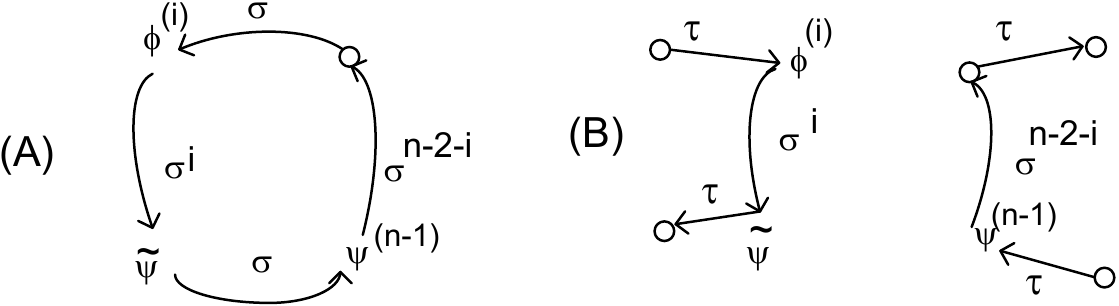}}
\caption{(A) The anatomy of $perms(\phi)\cap perms(\psi)$:
the graph $\SeedGraph(\psi)\cap \SeedGraph(\phi)$.\ (B) A part of the Hamiltonian path $\PATH(n)$
after removing two conflicting $\sigma$-edges, we have that 
$\psi$ is the $i$-th son of $\phi$.}\label{Sept26}
\end{figure}

For $k<n-3$  and a seed $\psi$ of height $k$ we define $\W_k$ as the sequence of labels 
of a sub-path in $\PATH(n)$  starting in $\widetilde{\psi}$ and
ending in $\psi^{(n-1)}$.
In other words it is a 
$\sigma\/\tau$-sequence
generating all $n$-permutations (each exactly once) 
of $bunch(\psi)$.

\begin{observation}
By Lemma \ref{seed intersections} every seed $\psi$ such that $1<height(\psi)<n-3$ 
has exactly $n-3$ sons whose heights depend only on height of  $\psi$. 
Hence (by induction on heights) all trees $Tree(\psi)$ are
isomorphic for seeds $\psi$ of the same height. Consequently the 
definition of $\W_k$ is justified as it depends only on the height of $\psi$.
\end{observation}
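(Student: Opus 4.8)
The plan is to separate the statement into its two assertions—first, that the number and heights of the sons of a seed of height $k$ are functions of $k$ alone, and second, that this forces the subtrees $Tree(\psi)$, and ultimately the words $\W_k$, to depend only on $k$—treating the first directly from Lemma~\ref{seed intersections} and the second by induction on height.

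For the first assertion I would simply read off Lemma~\ref{seed intersections}: for a seed $\psi$ with $1<height(\psi)=k<n-3$ the lemma guarantees that $son(\psi,i)$ exists for every $i\in\{1,\dots,n-3\}$ and that $height(son(\psi,i))=\Delta(k,i)=\min(k-1,\,n-2-i)$. Since the right-hand side is a function of $k$ and $i$ only (with $n$ fixed), the family of sons together with their son-indices and heights is completely determined by $k$. I would also record the boundary fact needed to seed the induction: a seed of height $1$ has no sons, because the $parent$ relation is defined only for seeds of height greater than $1$, so these are exactly the leaves of $\ST_n$. Note that all son-heights lie in $\{1,\dots,k-1\}\subseteq\{1,\dots,n-5\}$, so every node met in the recursion is either a leaf or again satisfies $1<height<n-3$, keeping the induction inside the admissible range.

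For the second assertion I would prove by strong induction on $k$ that the isomorphism type of the rooted, son-index-labelled tree $Tree(\psi)$ depends only on $height(\psi)=k$. The base case $k=1$ is the single-node tree. In the inductive step, $\psi$ has sons $son(\psi,1),\dots,son(\psi,n-3)$ whose heights $\Delta(k,i)$ are strictly smaller than $k$ and depend only on $k$ and $i$; by the inductive hypothesis each $Tree(son(\psi,i))$ has an isomorphism type determined by $\Delta(k,i)$, hence by $k$ and $i$, and assembling these $n-3$ subtrees under the common root in the fixed order $i=1,\dots,n-3$ yields a tree whose type depends only on $k$. This produces a canonical rooted isomorphism between any two trees $Tree(\psi),Tree(\psi')$ with $height(\psi)=height(\psi')$.

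Finally I would transfer this to the words. The word $\W_k$ is by definition the $\sigma\tau$-label sequence of the sub-path of $\PATH(n)$ traversing $bunch(\psi)$ (which is built from $\bigcup_{\beta\in Tree(\psi)}perms(\beta)$ up to the boundary adjustment at $cycle(\widetilde{\psi})$) from $\widetilde{\psi}$ to $\psi^{(n-1)}$. The crucial point is that the edges of $\SeedGraph(\beta)$ and the conflicting $\sigma$-edges removed at each connection point are defined purely by \emph{position-based} rules—$\sigma$ is a cyclic shift, $\tau$ swaps the first two entries, and the deleted edges are characterised by the position of the missing element (Lemma~\ref{seed intersections} and Figure~\ref{Sept26})—so they are invariant under the value-relabelling induced by the tree isomorphism above. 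Consequently the rooted isomorphism between $Tree(\psi)$ and $Tree(\psi')$ lifts to a label-preserving isomorphism of the two sub-paths, and the emitted $\sigma\tau$-words coincide, justifying that $\W_k$ depends only on $k$. I expect this last transfer to be the main obstacle: the tree isomorphism is routine, but one must argue carefully that the order in which packages are entered and exited along $\PATH(n)$, and the local gluing at each $\psi^{(i)}$, are dictated solely by the structural data (son-index and height) and not by the concrete entries of the seed—which is precisely where the position-based nature of $\sigma$ and $\tau$ is used.
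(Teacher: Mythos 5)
Your proposal is correct and follows essentially the same route as the paper: the observation is justified there exactly by reading the son-count and son-heights off Lemma~\ref{seed intersections} and then inducting on heights to obtain isomorphism of the trees $Tree(\psi)$, which is precisely your structure. Your careful transfer of the tree isomorphism to equality of the emitted $\sigma\tau$-words (via the position-based, value-independent action of $\sigma$, $\tau$ and the edge rules) is the step the paper leaves implicit in its final ``Consequently,'' so it is an elaboration of the same argument rather than a different approach.
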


For a permutation $\pi$ and a sequence $\alpha$ of operations $\sigma,\tau$
denote by $GEN(\pi,\alpha)$ the set of all permutations generated from $\pi$
by following $\alpha$, including $\pi$.

\noindent The word $\W_k$ satisfies: 

$GEN(\widetilde{\psi},\, \W_k)\,=\, bunch(\psi)\ \ \mbox{and}\ \ 
\W_k(\widetilde{\psi})=\psi^{(n-1)}.$

\noindent In this section we give compact representation of $\W_k$

\noindent
For example if $height(\psi)=1$ then $W_1$ is a traversal of $perms(\psi)$
except $n-2$ cyclically equivalent permutations, common to
$perms(\psi)$ and $perms(\phi)$, where $\phi=parent(\psi)$.

\medskip
\begin{figure}[ht]
\centerline{\includegraphics[width=12.5cm]{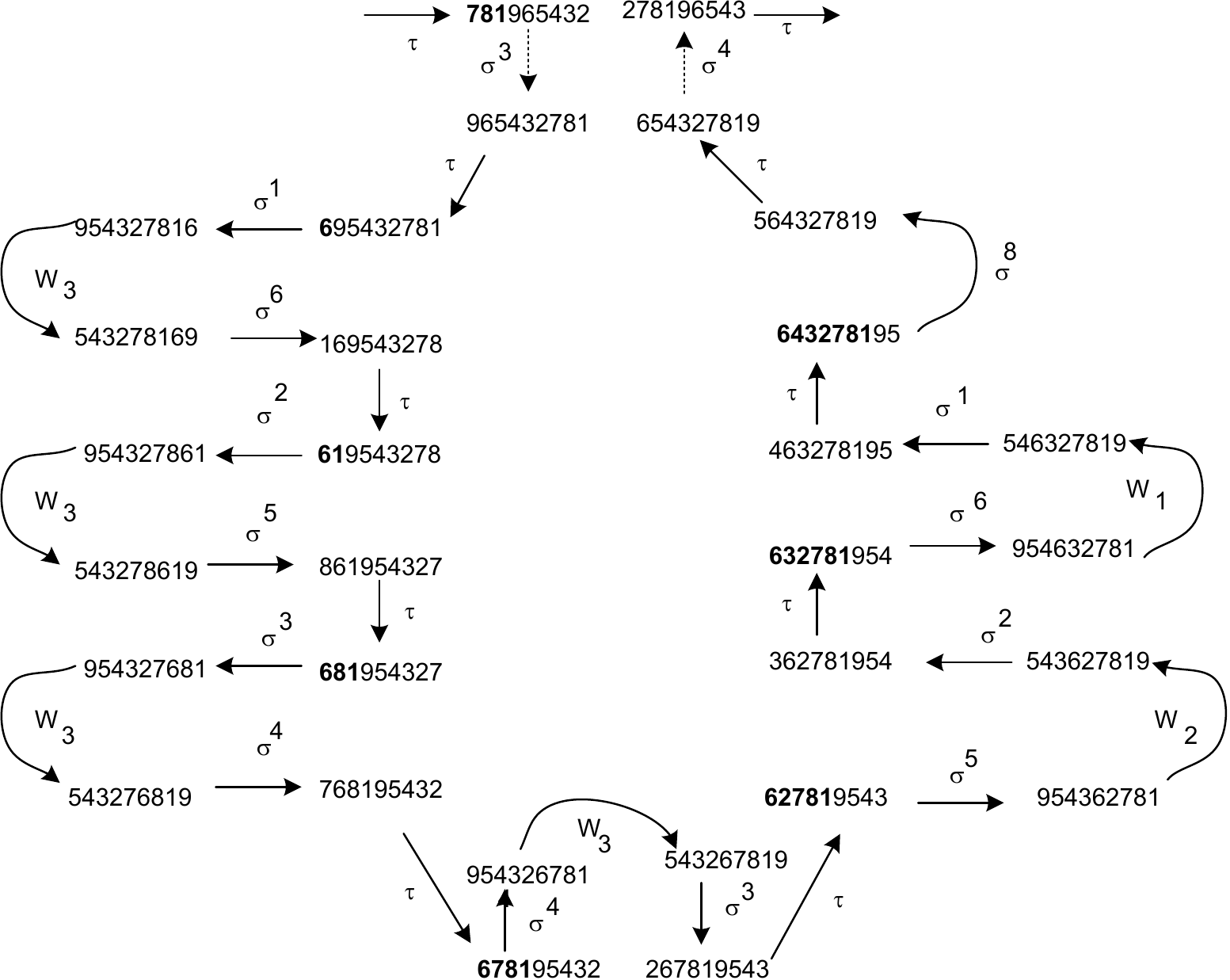}}
\vspace*{0.2cm}
\caption{The structure of $bunch(\psi)$ for the seed $\psi=95432781$.
We have $parent(\psi)=\phi$, where $\phi=96543281$.
The connecting points of $\psi$ with its parent 
are $\widetilde{\psi}$ and $\psi^{(n-1)}$, in other words $bunch(\psi)
\cap perms(\phi)\;=\; \{\widetilde{\psi},\,\psi^{(n-1)}\}$. 
The sequence $W_4$ starts in $\widetilde{\psi}$,
visits all permutations in $bunch(\psi)$ and ends 
in $\psi^{(n-1)}$. We have:\vspace*{0.1cm}
$\W_4\,=\,\tau\, \cdot\, 
 \sigma^1 \W_3  \gamma_{6}   \cdot  \sigma^2 \W_3  \gamma_5   
\cdot  \sigma^3 \W_3 \gamma_4   \cdot  \sigma^4 \W_3 \gamma_3  
\cdot   \sigma^5\W_2 \gamma_2    \cdot  
\sigma^6 \W_1 \gamma_1 \, \cdot\,  \gamma_{8}$
}\label{fig: example1}
\end{figure}

\noindent Recall that we denote $\gamma_k\,=\, \sigma^k\tau$
\begin{theorem}\label{recurrences for W_k}
For $1\le k < n-3$ we have the following recurrences: 

\medskip
{$\W_0=\sigma,\ \ \ \W_k \;=\; \tau\ \cdot\ \prod_{i=1}^{n-2}\,\sigma^i\, \W_{\Delta(k,i)}\,\gamma_{n-2-i}$}
\end{theorem}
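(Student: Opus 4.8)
The plan is to prove the recurrence by induction on $k$, interpreting the right-hand side as a concrete traversal of $bunch(\psi)$ guided by the structure of the pseudo-tree $\ST_n$ and the decomposition established in \cref{seed intersections}. First I would fix a seed $\psi$ with $height(\psi)=k$ and recall that, by \cref{seed intersections}, $\psi$ has exactly $n-3$ sons $\beta_i = son(\psi,i)$ for $i\in\{1,\dots,n-3\}$, with $height(\beta_i)=\Delta(k,i)$. The base case $\W_0=\sigma$ corresponds to the degenerate situation where the bunch is essentially a single connecting $\sigma$-step; this should follow directly from the definition of $bunch$ and the seed-graph edges, so it is not the main difficulty.

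For the inductive step, the key is to read off the Hamiltonian sub-path from $\widetilde\psi$ to $\psi^{(n-1)}$ as a sequence of phases. The traversal begins with a single $\tau$-edge (the $\tau$-edge of $\SeedGraph(\psi)$ incident to $\widetilde\psi$, which by \cref{same Hamiltonian} and the accompanying observations lies in $\PATH(n)$), accounting for the leading $\tau$ in the formula. Then, for each $i$ running from $1$ to $n-2$, the path takes $\sigma^i$ to reach the connecting permutation $\psi^{(i)}$, at which point — because $\sigma^i(\psi^{(i)})=\widetilde{\beta_i}$ by the definition of $son(\psi,i)$ — it enters the package of the $i$-th son and traverses all of $bunch(\beta_i)$; by the induction hypothesis this traversal is exactly $\W_{\Delta(k,i)}$, since $height(\beta_i)=\Delta(k,i)$. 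After exhausting that sub-bunch the path returns via $\beta_i^{(n-1)}$ and continues with the $\gamma_{n-2-i}=\sigma^{n-2-i}\tau$ block that shifts the configuration to set up the next value of $i$. Concatenating these phases gives precisely $\tau\cdot\prod_{i=1}^{n-2}\sigma^i\,\W_{\Delta(k,i)}\,\gamma_{n-2-i}$.

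The main obstacle, and the step deserving the most care, is verifying that this decomposition genuinely partitions $bunch(\psi)$: that the sub-paths $\W_{\Delta(k,i)}$ for distinct $i$ visit disjoint sets of permutations, that together with the connecting $\sigma^i$ and $\gamma_{n-2-i}$ segments they cover \emph{every} permutation of $bunch(\psi)$ exactly once, and that the endpoints match up so that the whole concatenation is a single valid path in $\PATH(n)$ ending at $\psi^{(n-1)}$. Here I would lean on \cref{seed intersections} to control exactly which permutations are shared between $perms(\psi)$ and each $perms(\beta_i)$ — namely the $\sigma$-cycle through $\widetilde{\beta_i}$ and $\psi^{(i)}$ — and on the definition of $bunch$, which subtracts $cycle(\widetilde\psi)$ and reattaches only $\{\widetilde\psi,\psi^{(n-1)}\}$, to ensure the connecting permutations are not double-counted. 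One must also check the edge-type bookkeeping: that each $\gamma$-block's $\tau$ is a legitimate $\tau$-edge of the relevant seed-graph (again via the observations following \cref{same Hamiltonian}, since $\psi\notin Hub_n$), so that no $\sigma$-edge conflicting with a $\tau$-edge is mistakenly used.

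I would then close the argument by the isomorphism observation already recorded after \cref{seed intersections}: because all trees $Tree(\psi)$ for seeds of equal height are isomorphic, the sub-path for $\beta_i$ depends only on $height(\beta_i)=\Delta(k,i)$, which both justifies writing $\W_{\Delta(k,i)}$ and guarantees the induction is well-founded (the heights $\Delta(k,i)=\min(k-1,n-2-i)$ are strictly smaller than $k$). A sanity check against the worked case $\W_4$ in \cref{fig: example1}, where the products $\sigma^i\W_{\Delta(4,i)}\gamma_{n-2-i}$ match term by term, confirms that the indexing and the $\gamma$ subscripts are correct.
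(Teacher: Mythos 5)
Your overall strategy is the same as the paper's: decompose the sub-path from $\widetilde{\psi}$ to $\psi^{(n-1)}$ via \cref{seed intersections} (son heights equal $\Delta(k,i)$, entry points $\sigma^i(\psi^{(i)})=\widetilde{\beta_i}$), invoke the isomorphism observation to justify writing $\W_{\Delta(k,i)}$, and check against Figures~\ref{fig: example1} and~\ref{fig: example2}. The paper's proof is exactly this, only terser, deferring the path anatomy to the figures; your added verification of disjointness and coverage is a reasonable way to fill in what the paper leaves implicit.

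There is, however, a concrete flaw in how you read the product. By \cref{seed intersections} a seed of height $k$ has exactly $n-3$ sons, while the product runs over $i=1,\dots,n-2$; so your claim that for \emph{each} such $i$ the path ``enters the package of the $i$-th son and traverses $bunch(\beta_i)$'' fails at $i=n-2$. Indeed, $\sigma^{n-2}(\psi^{(n-2)})=(n,a_2,x,a_3,\dots,a_{n-1})$ is not of the form $\widetilde{\beta}$ for any seed $\beta$ (it would force $a_2=x\oplus 1=a_2\oplus 2$), so there is no $(n-2)$-th son. The last factor is instead $\sigma^{n-2}\,\W_0\,\gamma_0=\sigma^{n-1}\tau=\gamma_{n-1}$, i.e.\ the closing segment that walks the $\sigma$-cycle of $\psi^{(n-2)}$ and takes the final $\tau$-edge to the required endpoint $\psi^{(n-1)}$; without treating this factor separately, your endpoint-matching argument does not close. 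A related misplacement: there are no seeds of height $0$, so $\W_0=\sigma$ is a pure notational convention rather than the traversal of a ``degenerate bunch,'' and the true base case of your induction is $k=1$. A height-$1$ seed has \emph{no} sons at all (the parent relation is only defined when the parent's height exceeds $1$), every factor degenerates, and $\W_1=\tau\,\gamma_{n-1}^{\,n-2}$ must be verified directly as the walk around $perms(\psi)$ minus the cycle shared with the parent. Finally, a small wording slip: the path sits \emph{at} $\psi^{(i)}$ and takes $\sigma^i$ to reach $\widetilde{\beta_i}$, not $\sigma^i$ to reach $\psi^{(i)}$. With these boundary cases repaired, your inductive step for $1<k<n-3$ and $i\le n-3$ is precisely the paper's argument.
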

\begin{proof}
Assume $\psi\notin Hub_n$ is of height $k$, then by Lemma~\ref{seed intersections} the first,
from left to right,
$n-k-1$ children of $\psi$ in the subtree $Tree(\psi)$ 
are of height $k-1$ and the next $k-2$
children are of heights $k-2,k-3,...,1$.
The representative $\widetilde{\beta_i}$ of the $i$-th son $\beta_i$ of $\psi$ 
equals $\sigma^i(\psi^{(i)})$ (see Figures~\ref{fig: example1} and~\ref{fig: example2} ).
 \end{proof}

\medskip
\begin{figure}[h!]
\centerline{\includegraphics[width=9cm]{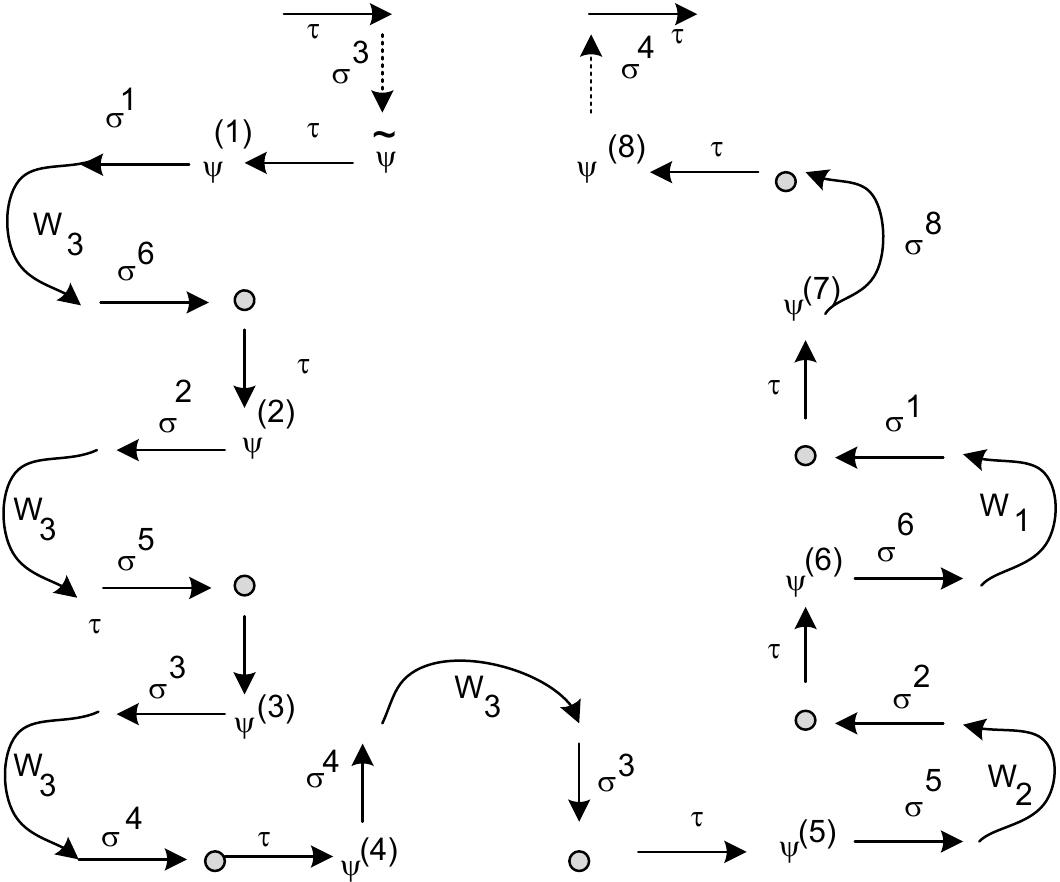}}
\caption{Schematic view of structure from Figure \ref{fig: example1}.}\label{fig: example2}
\end{figure}

\section{Compact representation of the whole generation}
We have the following fact: 
\begin{observation}
Assume two seeds $\psi,\beta$ satisfy:
$height(\psi)=k>1$ and\\
$\sigma^i(\psi^{(i)})=\widetilde{\beta}$. Then
if $i=1$ and $ \psi\in Hub_n$ then $height(\beta)=height(\psi)$.
\end{observation}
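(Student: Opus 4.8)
The plan is to analyze the case $i=1$ when $\psi$ is a hub seed separately from the generic parent-son relation established in Lemma~\ref{seed intersections}. Recall that the height formula $height(\beta)=\Delta(k,i)=\min(k-1,\,n-2-i)$ in that lemma was derived under the explicit assumption $\psi\notin Hub_n$, and the proof even flags the exception: the one configuration the argument had to rule out, $(k,j)=(n-2,n)$, is precisely what is forbidden \emph{outside} the hub. So the natural approach is to return to that combinatorial computation and see how the hub case, together with the restriction $i=1$, changes the bookkeeping on the length of the decreasing prefix.

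First I would set $\psi=(a_1,a_2,\dots,a_{n-1})$ with $a_1=n$, $mis(\psi)=x=a_2\oplus1$, and $height(\psi)=k$. The identity $\sigma^i(\psi^{(i)})=\widetilde{\beta}$ with $i=1$ says that $\widetilde{\beta}=\sigma(\psi^{(1)})$. Unwinding the definitions, $\psi^{(1)}=\gamma_{n-1}^{\,0}(\psi^{(n-1)})=\psi^{(n-1)}=(x,a_2,\dots,a_{n-1},a_1)$, so applying $\sigma$ rotates this to put $a_2$ first. I would then read off $\beta$ by stripping the missing element of $\beta$ from $\widetilde{\beta}$ and compare its decreasing prefix directly against that of $\psi$. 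The key structural point is that for a hub seed the decreasing run $a_2=a_3\oplus1,\;a_3=a_4\oplus1,\dots$ wraps around through the modified modular addition $\oplus$ in a way that the $i=1$ son-forming operation does \emph{not} truncate: inserting $x=a_2\oplus1$ at the front and shifting leaves the maximal decreasing prefix intact rather than cutting it by one, which is exactly the mechanism that in the non-hub Lemma~\ref{seed intersections} produced $height=k-1$. Thus in the hub/$i=1$ case the subtraction of one is cancelled, giving $height(\beta)=k=height(\psi)$.

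Concretely, I expect the computation to mirror the second case of the proof of Lemma~\ref{seed intersections} (the $j\le k+1$ branch), but with the boundary configuration $(k,j)=(n-2,n)$ — the one that was \emph{impossible outside} $Hub_n$ — now becoming available. Inside the hub this borderline insertion, rather than shortening the run, reconnects the wrap-around of $\oplus$ so that $a_2$ continues the decreasing pattern with the newly placed $x$. I would verify this against the explicit list $Hub_6=\{(6,5,4,3,2),(6,4,3,2,1),\dots\}$ from Subsection~\ref{subs 2.2}: each hub seed is a full decreasing cyclic run of length $n-2$, and one checks by hand that the $i=1$ son of such a seed is again a maximal decreasing seed of the same height, confirming $height(\beta)=height(\psi)$.

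The main obstacle I anticipate is handling the modular wrap-around cleanly. Because the decreasing property is stated through $\oplus$ rather than ordinary subtraction, the argument that the run is preserved rather than broken hinges on the special rule $n-1\oplus1=1$ and on the fact that hub seeds are exactly those whose decreasing run closes up cyclically to the maximal length. Making rigorous the claim that the $i=1$ shift neither extends nor shortens this maximal run — in contrast to every other son, which shortens it — is the delicate step, since it is a single boundary case where the generic $\Delta(k,i)$ formula fails by exactly one. I would therefore phrase the final argument as a careful recomputation of $height$ in this one wrap-around configuration, leaning on the already-established son correspondence of Lemma~\ref{seed intersections} for everything except the length of the decreasing prefix.
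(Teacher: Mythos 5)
Your overall strategy is the right one; in fact the paper states this observation with no proof at all, so the direct computation you outline---unwind $\sigma^1(\psi^{(1)})$, read off $\beta$, and argue that for a hub seed the decreasing run wraps around under $\oplus$---is exactly the argument that is implicitly intended. Your identification of the exceptional configuration $(k,j)=(n-2,n)$ in the proof of Lemma~\ref{seed intersections} as precisely the hub/$i=1$ case is also correct (though note that with $j=k+2$ this sits in the $j>k+1$ branch of that proof, not the $j\le k+1$ branch you cite).

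However, your unwinding of the definitions contains a concrete error that would derail the computation if carried out as written. You claim $\psi^{(1)}=\gamma_{n-1}^{\,0}(\psi^{(n-1)})=\psi^{(n-1)}=(x,a_2,\dots,a_{n-1},a_1)$, but the definition is $\psi^{(i)}=\gamma_{n-1}^{\,i}(\psi^{(n-1)})$, so $\psi^{(1)}=\gamma_{n-1}(\psi^{(n-1)})=(x,n,a_2,\dots,a_{n-1})$; equivalently, $\psi$ right-shifted by $i-1=0$ with $x$ prepended (compare the paper's example: $\psi=(5,3,2,1)$ has $\psi^{(1)}=(4,5,3,2,1)$, not $(4,3,2,1,5)$). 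Consequently $\sigma(\psi^{(1)})$ does not ``put $a_2$ first'': it equals $(n,a_2,a_3,\dots,a_{n-1},x)$. This matters, because a permutation beginning with $a_2\neq n$ is not of the form $\widetilde{\beta}$ for \emph{any} seed $\beta$, so the next step of your plan---stripping the missing element from $\widetilde{\beta}$ to recover $\beta$---could not even get started from your expression. With the correct unwinding one reads off $\beta=(n,a_3,\dots,a_{n-1},x)$ and $mis(\beta)=a_2$, and then your wrap-around idea does go through, but by a slightly different mechanism than you describe: $x=a_2\oplus 1$ lands at the \emph{end} of $\beta$, not at the front, so the decreasing run is shortened by one at the front (it loses $a_2$) and---precisely because $\psi\in Hub_n$ makes the run cyclically complete, so that $x=a_{n-1}\ominus 1$---extended by one at the end. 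The two effects cancel, giving $height(\beta)=n-2=height(\psi)$; indeed $\beta$ is again a hub seed, the next one in the cyclic order (e.g.\ $(6,5,4,3,2)\mapsto(6,4,3,2,1)$ in $Hub_6$). Fix the indexing and replace ``leaves the prefix intact'' by this cancellation argument, and your proof is complete.
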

\begin{theorem}
The whole $\sigma\tau$-sequence $\Seq_n$ starting at $\tau(n,n-1,...,1)$,
ending at $\sigma\tau(n,n-1,...,1)$, and  generating all
$n$-permutations, has the following compact representation
of $\Oh(n^2)$ size (together with recurrences for $\W_k$):

\medskip\centerline{$ \Seq_n\;=\; 
\gamma_1^{n-2}\sigma^2\;(\V_n\,\tau)^{n-2}\; \V_n,\ \ \mbox{\rm where}
$}

\medskip\centerline{$ \V_n \;=\; \gamma_{n-3}\ \cdot\ \prod_{i=2}^{n-3}\,\sigma^i\,
\W_{\Delta(n-3,i)}\,\gamma_{n-2-i}\ \cdot \ \sigma^{n-1}.$}
\end{theorem}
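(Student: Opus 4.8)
The plan is to read $\Seq_n$ directly off the global anatomy of $\PATH(n)$, combining the hub decomposition of the pseudo-tree $\ST_n$ with the recurrence for $\W_k$ from Theorem~\ref{recurrences for W_k}. By Lemma~\ref{same Hamiltonian} we may argue with $\SW(n)$, which the Algorithm builds from $\R_n$ (a large cycle and a small cycle) by deleting one $\tau$-edge from each and inserting the single $\sigma$-edge $\pi\to\sigma(\pi)$, where $\pi=(n,n-1,\dots,1)$. Hence $\Seq_n$ factors as a small-cycle walk followed by a large-cycle walk, spliced by the inserted $\sigma$-edge, and I would establish the two factors separately and then reconcile the splicing point and the two endpoints.

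First I would handle the small cycle, namely the $2n-2$ permutations cyclically equivalent to $(n-1,\dots,1)$ that carry $n$ in position~$1$ or~$2$, as singled out in the proof of Lemma~\ref{same Hamiltonian}. Starting from $\tau(\pi)=(n-1,n,n-2,\dots,1)$, a one-line computation shows that $\sigma$ carries $n$ from position~$2$ to position~$1$ (rotating the remaining entries) while $\tau$ returns it to position~$2$, so the alternating block $\gamma_1=\sigma\tau$ advances two permutations at a time around this cycle. Iterating $n-2$ times and closing with $\sigma^2$, whose last letter is the inserted connecting edge, exhausts the small cycle and delivers us to the entry node of the large cycle; this produces the prefix $\gamma_1^{\,n-2}\sigma^2$.

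Next I would analyze the large cycle. The hub $Hub_n$ is a cycle of $n-1$ seeds, all of height $n-2$, so their subtrees are isomorphic and their package-plus-bunch traversals coincide; moreover consecutive hub seeds are neighbors, so $\PATH(n)$ visits their packages in cyclic order and steps from one to the next across a single $\tau$-edge. This yields the shape $(\V_n\tau)^{n-2}\V_n$, one factor $\V_n$ per hub package with $n-2$ internal $\tau$'s. To pin down one factor I would reuse the descent of Theorem~\ref{recurrences for W_k}: a hub seed $\psi$ has sons $son(\psi,i)$ for $i=1,\dots,n-3$, where by the Observation preceding the theorem the son at $i=1$ is again a hub seed (height $n-2$), while for $i=2,\dots,n-3$ the sons are ordinary seeds of height $\Delta(n-3,i)=n-2-i$ (note $\Delta(n-2,i)=\Delta(n-3,i)$ for $i\ge 2$, so the bunch-son heights coincide with the generic height-$(n-2)$ pattern). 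Descending into the $i=1$ son is exactly the transition to the next $\V_n$ across the connecting $\tau$, so inside one factor we only recurse into the bunch sons $i=2,\dots,n-3$, contributing $\prod_{i=2}^{n-3}\sigma^i\W_{\Delta(n-3,i)}\gamma_{n-2-i}$, while entering and leaving the package at its hub connection points contributes the boundary words $\gamma_{n-3}$ at the front and $\sigma^{n-1}$ at the back. Concatenating gives exactly $\V_n$.

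The step I expect to be the main obstacle is the bookkeeping at the seams. One must pin down the precise connecting permutations $\widetilde{\psi},\psi^{(i)},\psi^{(n-1)}$ where the walk enters and leaves each hub package, verify that the hub's $i=1$ son genuinely realizes the inter-package $\tau$-edge rather than an internal recursion, and justify the two boundary factors $\gamma_{n-3}$ (entry) and $\sigma^{n-1}=\sigma^{-1}$ (exit, a backward shift), which differ from the generic $\W_k$ boundary because the hub seed carries the maximal height $n-2$ and lies on the cyclic spine. Once the seams are aligned, a length count — the prefix and body together visit each of the $n!$ permutations exactly once, giving $n!-1$ operations in all — confirms that $\Seq_n$ is a Hamiltonian sequence, and a direct evaluation of its first and last letters verifies that the walk runs from $\tau(n,n-1,\dots,1)$ to $\sigma\tau(n,n-1,\dots,1)$, in agreement with the endpoints of $\SW(n)$.
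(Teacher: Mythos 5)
Your proposal follows essentially the same route as the paper's own proof: decompose $\Seq_n$ into the small-cycle prefix $\gamma_1^{n-2}\sigma^2$ plus $n-1$ hub segments joined by $\tau$-edges, and obtain $\V_n$ as the $\W_{n-3}$-style recurrence with the first-son part excised (since $son(\phi,1)\in Hub_n$) and the trailing $\tau$ dropped. The paper is equally brief about the seam bookkeeping you flag as the main obstacle, so your sketch matches it in both structure and level of detail.
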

\begin{proof}
For every non-hub seed $\psi$ we had that
 $GEN(\widetilde{\psi},\, \W_k)\,=\, bunch(\psi)$, where $k=height(\psi)$.
\noindent The only difference for a hub seed $\phi$ is that $son(\phi,1)$ cannot be considered
as part of a tree rooted at $\phi$ (with already defined
$parent$-links), since $son(\phi,1)\in Hub_n$ and this would lead to a cycle
($son(\phi,1)$ is reachable via parent-links from $\phi$).
\noindent Thus to prevent this problem we define $V_n$ as $W_{n-3}$ with
the part corresponding to the first son removed (leaving only the $\gamma_{n-2-1}$ part),
 and also delete the last symbol $\tau$, as it does not appear at the end of the path
(it corresponds to one of the $\tau$-edges removed when joining two cycles into one path).
Now $Seq_n$ consists of $n-1$ such segments $V_n$
(corresponding to $n-1$ hub seeds) joined by $\tau$-edges
(they are linked in the same way as if the previous $V_n$ part was a son of the next one).
Additionally it starts with $\gamma_1^{n-1}$-path representing the small path
with the last $\tau$-edge replaced by a $\sigma$-edge.
\end{proof}
\vspace*{-.3cm}

\medskip
\begin{figure}[h]
\centerline{\includegraphics[width=12.5cm]{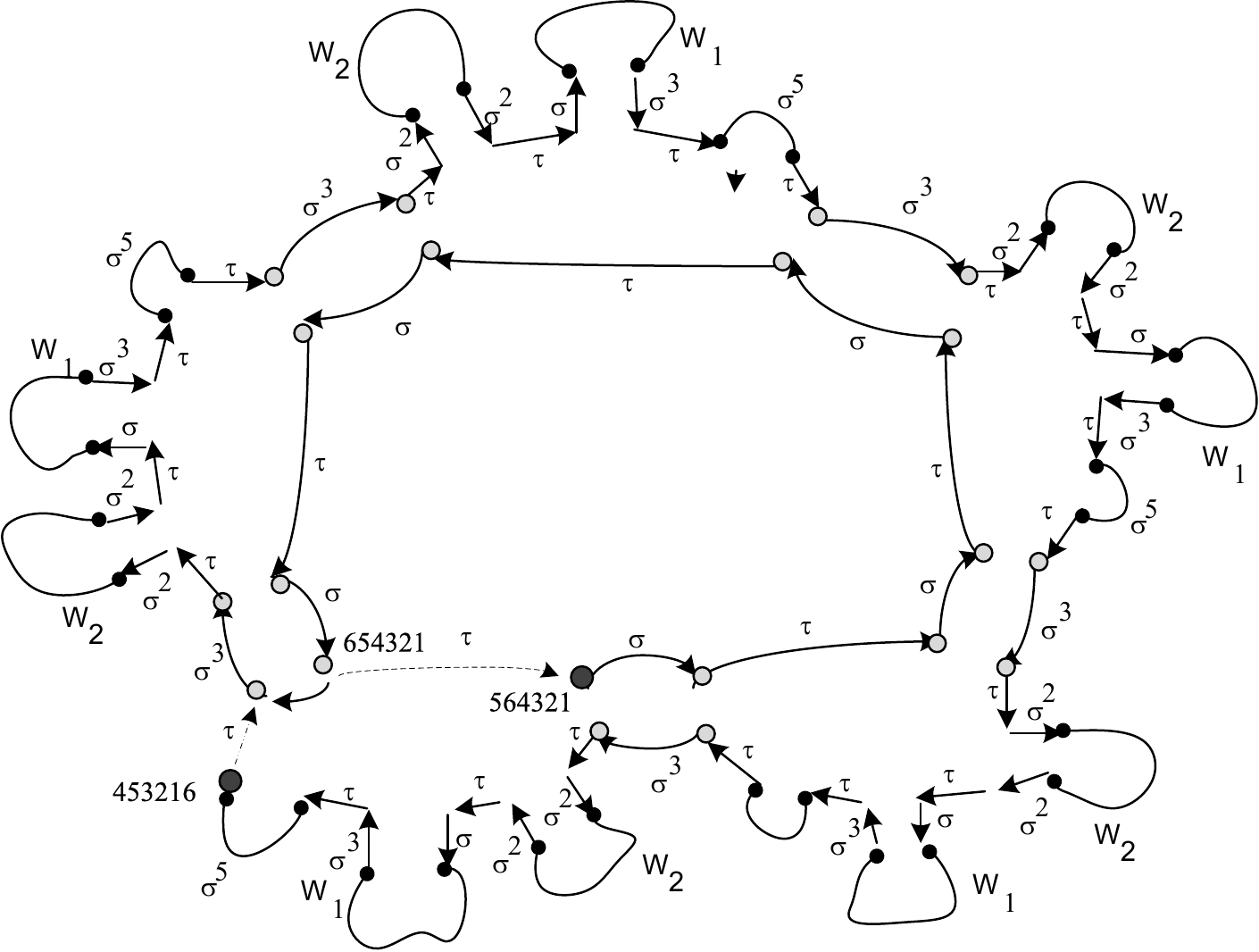}}
\caption{The  compacted structure of 
$\Seq_6$ of length 720. It differs from  the structure of $\R_6$ by 
adding one $\sigma$-edge from $654321$ and removing two (dotted) $\tau$-edges
to have Hamiltonian path.  
We have: 
$\Seq_6\;=\; (\sigma\tau)^4\sigma^2\;(\V_6\tau)^4\;\V_6$, where $\V_6\;=\;
\sigma^3\tau\; \sigma^2\W_2\sigma^2\tau\; \sigma \W_1\sigma^3\tau\; \sigma^5$. 
The structure is the union of graphs of 5 seeds in $Hub_6$ with {\it hanging} bunches.
The {\it starting path} consists of permutations from $564321$ to $654321$.
}\label{fig: example4}
\end{figure}

%
\section{Ranking}
We need some preprocessing to access later some values in constant time.
\begin{observation}\label{sizes of trees}
 All the values $|W_k|$ and $\sum\limits_{i=0}^k(|W_i|+n-1)$ 
for $k\in\{0..n-4\}$ can be computed 
in $\Oh(n)$ total time and accessed in $\Oh(1)$ time afterwards.
\end{observation}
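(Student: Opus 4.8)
The plan is to establish the two claimed facts—$\Oh(n)$ total computation time and $\Oh(1)$ access afterwards—by deriving closed recurrences for the two families of quantities directly from the SLP recurrence $\W_k = \tau\cdot\prod_{i=1}^{n-2}\sigma^i\W_{\Delta(k,i)}\gamma_{n-2-i}$ of \cref{recurrences for W_k}, and then observing that these recurrences can be unrolled in sequence. First I would compute $|\W_k|$. Since $|\gamma_j|=j+1$ and each factor $\sigma^i\W_{\Delta(k,i)}\gamma_{n-2-i}$ contributes $i + |\W_{\Delta(k,i)}| + (n-1-i) = (n-1) + |\W_{\Delta(k,i)}|$ to the length, summing over $i$ from $1$ to $n-2$ and adding the leading $\tau$ gives
\[
|\W_k| \;=\; 1 + \sum_{i=1}^{n-2}\bigl((n-1) + |\W_{\Delta(k,i)}|\bigr).
\]
The key point is that $\Delta(k,i)=\min(k-1,n-2-i)$, so as $i$ ranges over $\{1,\dots,n-2\}$ the index $\Delta(k,i)$ takes each value in $\{0,1,\dots,k-1\}$ a controlled number of times: it equals $k-1$ for the first $n-k-1$ values of $i$ and then steps down through $k-2,k-3,\dots,0$, exactly as described in the proof of \cref{recurrences for W_k}. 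Thus $|\W_k|$ depends only on the previously computed values $|\W_0|,\dots,|\W_{k-1}|$, and in fact
\[
|\W_k| \;=\; 1 + (n-2)(n-1) + (n-k-1)\,|\W_{k-1}| + \sum_{j=0}^{k-2}|\W_j|.
\]

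Next I would handle the running time. Maintaining a running prefix sum $S_k := \sum_{j=0}^{k}|\W_j|$ lets each new $|\W_k|$ be evaluated from $|\W_{k-1}|$, $S_{k-2}$, and the constant $(n-2)(n-1)$ in $\Oh(1)$ arithmetic operations (here I use the paper's stated convention that simple arithmetic operations cost constant time). Iterating $k$ from $0$ to $n-4$ therefore costs $\Oh(n)$ total time and fills a table from which any $|\W_k|$ is read in $\Oh(1)$. The second family $\sum_{i=0}^{k}(|\W_i| + n - 1) = S_k + (k+1)(n-1)$ is then simply a second prefix-sum table computed alongside the first, again incrementally in $\Oh(1)$ per step and stored for $\Oh(1)$ lookup; this is exactly the quantity that will be needed to skip whole sub-bunches during ranking.

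The main obstacle, and the only part requiring genuine care, is the bookkeeping for the multiset of values taken by $\Delta(k,i)$: I must verify that the number of indices $i$ with $\Delta(k,i)=k-1$ is precisely $n-k-1$ and that each smaller value $j<k-1$ is attained exactly once, so that the sum telescopes into the clean recurrence above. This follows because $n-2-i \ge k-1 \iff i \le n-k-1$ and $n-2-i$ is strictly decreasing in $i$, but it is the one step where an off-by-one error could creep in, so I would check it against the small case in \cref{fig: example1} where $\W_4$ expands with three $\W_3$ factors followed by $\W_2,\W_1,\W_0$ in turn. Everything else is a routine unrolling, and the two tables are built in a single left-to-right pass, giving the claimed $\Oh(n)$ preprocessing and $\Oh(1)$ access.
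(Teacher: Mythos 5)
Your proof is correct and is precisely the argument the paper leaves implicit (the Observation is stated with no proof at all): derive the length recurrence $|\W_k| = 1+(n-2)(n-1)+(n-k-1)\,|\W_{k-1}|+\sum_{j=0}^{k-2}|\W_j|$ from the SLP of Theorem~\ref{recurrences for W_k} using the multiset structure of $\Delta(k,i)$, and evaluate it together with the prefix sums $S_k$ and $S_k+(k+1)(n-1)$ in a single left-to-right pass under the paper's constant-time arithmetic convention. One small slip in your proposed sanity check: in Figure~\ref{fig: example1} we have $n=9$ and $k=4$, so the expansion of $\W_4$ contains \emph{four} $\W_3$ factors (matching $n-k-1=4$), not three, which in fact confirms your formula rather than contradicting it.
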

\noindent
The ranks of representatives of hub seeds are easy to compute. For example
for $n=6$ we have (see Figure~\ref{fig: example4}):
 $rank(643215)=1,\;rank(632154)=3,$ $rank(621543)=5,\;
rank(615432)=7,\;rank(654321)=9.$
\begin{lemma}\label{2cases - ranking}
 For a given permutation $\pi$ we can compute
in time $\Oh(n)$ 
\\
{\bf (a)} $rank(\pi)-rank(\widetilde{\psi})$ if $\pi\in perms(\psi)$, 
\\
{\bf (b)} $rank(\pi)$ if $\pi\in perms(\psi)$ for some $\psi \in Hub_n$. 
\end{lemma}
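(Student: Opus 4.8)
The plan is to decompose the rank computation according to the recursive structure of $\Seq_n$ that was established in the previous theorems. The key observation is that $\Seq_n$ is built from hub segments $\V_n$ joined by $\tau$-edges, and each $\V_n$ in turn is an almost-complete instance of $\W_{n-3}$, which recursively unfolds via the formula $\W_k = \tau\cdot\prod_{i=1}^{n-2}\sigma^i\W_{\Delta(k,i)}\gamma_{n-2-i}$. Part \textbf{(a)} should therefore be approached by locating $\pi$ within the bunch $bunch(\psi)$ whose traversal is encoded by $\W_k$, where $k=height(\psi)$. Since $\W_k$ starts at $\widetilde{\psi}$, the quantity $rank(\pi)-rank(\widetilde{\psi})$ is exactly the number of steps along this sub-path needed to reach $\pi$ from $\widetilde{\psi}$.

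First I would determine which seed(s) $\psi$ satisfy $\pi\in perms(\psi)$; by the first Observation this can be done in linear time, and I would pick the seed relevant to the sub-path containing $\pi$. Next I would compute $height(\psi)=k$ and identify, using the structure of $\SeedGraph(\psi)$ and Lemma~\ref{seed intersections}, which of the $n-3$ sons' contributions (i.e.\ which factor $\sigma^i\W_{\Delta(k,i)}\gamma_{n-2-i}$ in the product) the permutation $\pi$ belongs to. This amounts to finding the index $i$ such that $\pi$ lies in the $i$-th child's bunch, together with the initial cyclic-shift offset encoded by the $\sigma^i$ prefix. The accumulated length of all factors preceding that point is a sum of terms of the form $i+|\W_{\Delta(k,i)}|+(n-1-i)$, and by Observation~\ref{sizes of trees} these prefix sums $\sum_{i=0}^{k}(|W_i|+n-1)$ are available in $\Oh(1)$. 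I would then recurse into the child's sub-path, but because the heights strictly decrease along the decomposition, the recursion carries a simple closed description: the position within $\W_k$ is obtained by adding the precomputed prefix length, the local $\sigma^i$ offset, and the recursively computed offset inside $\W_{\Delta(k,i)}$. The main subtlety is to read off $i$ and the offset directly from the digits of $\pi$ — essentially from its inversion-vector relative to $\widetilde{\psi}$ — so that the whole descent is carried out in a single linear-time pass rather than an explicit recursion of depth $k$.

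For part \textbf{(b)}, I would combine part \textbf{(a)} with the outer structure $\Seq_n=\gamma_1^{n-2}\sigma^2\,(\V_n\tau)^{n-2}\,\V_n$. Given $\pi\in perms(\psi)$ with $\psi\in Hub_n$, there are two cases: either $\pi$ lies in the initial starting-path segment $\gamma_1^{n-2}\sigma^2$ (the $2n-2$ permutations cyclically equivalent to $(n-1,\dots,1)$ after deleting $n$, as in Lemma~\ref{same Hamiltonian}), which is handled by direct inspection, or $\pi$ lies in the $j$-th copy of $\V_n$ for some $0\le j\le n-2$. The hub seeds have explicitly computable representative ranks (as illustrated for $n=6$), so I would first identify $j$ — equivalently, which hub seed's segment contains $\pi$ — then set $rank(\pi)$ equal to the known starting rank of that segment plus the within-segment offset. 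Since $\V_n$ is $\W_{n-3}$ with its first-son factor and trailing $\tau$ removed, the offset is computed by the same descent as in part \textbf{(a)}, taking care to skip the missing first-son block and to account for whether $\pi$ falls before or after the excised part.

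I expect the main obstacle to be the bookkeeping of which package and which segment $\pi$ belongs to when $\pi$ is ambiguous — that is, when $\pi$ belongs to two packages (per the first Observation) or when $\pi$ is one of the connecting permutations $\widetilde{\psi},\psi^{(n-1)}$ shared between a parent and a son. In those boundary cases one must decide, consistently with the way $\PATH(n)$ routes through $\widetilde{\psi}$, which occurrence contributes the genuine rank, so that the prefix sums are added exactly once and no step is double-counted. Establishing that these edge cases are resolved by a fixed local rule (readable from the two candidate seeds and the $\sigma$/$\tau$ label of $\pi$'s incoming edge) is the crux; once that is settled, the arithmetic of summing precomputed block lengths is routine and runs in $\Oh(n)$ time.
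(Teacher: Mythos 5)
There is a genuine gap, and it stems from conflating the package $perms(\psi)$ with the bunch $bunch(\psi)$. The hypothesis of part (a) is $\pi\in perms(\psi)$: $\pi$ is obtained by inserting the missing element somewhere into $\psi$ and rotating, so within the traversal encoded by $\W_k$ the permutation $\pi$ can only occur in the ``local'' runs — the $\sigma^i$ prefixes and $\gamma_{n-2-i}$ suffixes adjacent to $\psi$'s own vertices (or in the corresponding run of the parent's word, just before $\widetilde{\psi}$, in which case $rank(\pi)-rank(\widetilde{\psi})$ is negative) — never strictly inside a child factor $\W_{\Delta(k,i)}$, since $perms(\psi)\cap perms(son(\psi,i))$ is exactly the connecting $\sigma$-cycle of Lemma~\ref{seed intersections}. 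Consequently no descent into children is needed at all: it suffices to form the rotation $\rho=(n,r_2,\dots,r_n)$ of $\pi$, read off the position $j$ of the missing element $r_2\oplus 1$ and the position $l$ of $n$ in $\pi$, and plug these two indices into closed-form expressions built from the precomputed values $SUM(k,\cdot)$ and $|\W_\cdot|$ of Observation~\ref{sizes of trees}; part (b) is the same computation shifted by the explicitly computable rank $(2n-2)+|\V_n\tau|\cdot(r_2 \bmod (n-1))$ of the first package permutation after the starting path. That is the paper's argument, and it is $\Oh(n)$ because it is a constant number of table lookups after an $\Oh(n)$ scan.

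Your proposed recursion, by contrast, is solving a different (harder) problem — locating a permutation that may sit arbitrarily deep in $bunch(\psi)$ — and it cannot meet the stated $\Oh(n)$ bound by the route you sketch. Identifying, level by level, which factor $\sigma^i\W_{\Delta(k,i)}\gamma_{n-2-i}$ contains the permutation is precisely the computation of $route(\psi)$, which the paper singles out as the bottleneck of the whole ranking algorithm: it is done via the Inversion-Vector problem in $\Oh(n\sqrt{\log n})$ time (Lemma~\ref{ord}, citing Chan–P\u{a}tra\c{s}cu), and no linear-time method is available. So the phrase ``the whole descent is carried out in a single linear-time pass \dots essentially from its inversion-vector'' asserts exactly the step that costs more than linear time; if it were achievable in $\Oh(n)$, the final ranking theorem would already improve to $\Oh(n)$. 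The boundary-case discussion you flag as the crux (double-counting at $\widetilde{\psi}$, $\psi^{(n-1)}$) is real but minor; the actual content of the lemma is the closed-form, recursion-free formula made possible by the hypothesis $\pi\in perms(\psi)$.
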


\begin{proof}\mbox{ \ }\\
By the {\it starting path} we mean the sequence on the first $2n-2$ permutations of $PATH(n)$, see Figure \ref{fig: example4}.

\smallskip\noindent
{\bf (a)}
 Given a permutation $\pi\in perms(\psi)$ we define $\rho=(n,r_2,...,r_n)$ as 
permutation cyclically 
equivalent to $\pi$ and starting with $n$, and let $j$ be the position 
such that $r_j=r_2\oplus 1$.
If $\psi=(n,r_3,...,r_n)$, then $\rho=\widetilde{\psi}$

\smallskip\noindent
 Now we distinguish two cases depending on the position 
$l$ of $n$ in $\pi$. 

\begin{description}
\item{Case 1:}
if $l\le n-j+2$, then $rank(\pi)=rank(\rho)-l+1$

 (it appears in the $\sigma_{n-j+1}$ part of $bunch(parent(\psi))$), 

\item{Case 2:}
otherwise 
$rank(\pi)=|W_{height(\psi)}|+(n-1)-l+1$ 

(it appears in the $\gamma_{j-3}$ part).
\end{description}

\smallskip\noindent
 If $\psi=(n,r_2,...,r_{j-1},r_{j+1},...,r_n)$, then 
$$rank(\rho)=rank(\widetilde{\psi})+SUM(height(\psi),n-j+1)$$
 Using similar arguments as before we have that:

 $rank(\pi)=rank(\rho)-l+1$ if $l\le n-j+2$ and 

$rank(\pi)=rank(\rho)+|W_{\Delta(height(\psi),n-j+1)}|+(n-1)-l+1$ otherwise.

\medskip\noindent
{\bf (b)}
 If the permutation $\pi$ after removing $n$ is cyclically equivalent to $(n-1,n-2,...,1)$ and $n$ appears on the first position ($\pi$ belongs to the starting path)
 then $rank(\pi)=2\cdot(n-p_2-1)-1$, and if it appears on the second position then $rank(\pi)=2\cdot(n-p_1-1)$, where $p_1,p_2$ are the first two positions of $\pi$.\\
 Otherwise we define $\rho$, $j$ and $l$ like in case (a) and $\psi=(n,r_2,...,r_{j-1},r_{j+1},...,r_n)$. We know that $\pi\in perms(\psi)$, and want to compute $rank(\pi)$ minus the rank of the first permutation of
 $\psi$ which appears in $PATH(n)$ with rank greater than $2n-3$.
 That permutation is equal to $$\mu=(r_2,...,r_{j-1},r_{j+1},...,r_n,r_2\oplus 1,n)$$
 (equal to $\sigma^2(\psi^{1})$ if $\psi^{(1)}$ was defined for hub seeds in the same way as for the other ones),
 and has rank equal to $$(2n-2)+|V_n\tau|\cdot(r_2 mod (n-1))\;=\;
2n-2+(r_2 mod (n-1))\cdot(n(n-2)!-2)$$

\smallskip\noindent
 If $j=n$ then $rank(\psi)=rank(\mu)+n-l$ else if $l\le n-j+2$ then 
 $$rank(\rho)=rank(\mu)+SUM(n-3,n-j+1)-|W_{n-4}|-2, \  
rank(\pi)=rank(\rho)-l+1$$ Otherwise we have
 $$rank(\pi)=rank(\rho)+|W_{j-3}|+(n-1)-l+1.$$
 
 Those two algorithms lets us rank permutations in basic 
cases, and allows us to reduce the main problem to a simpler one (ranking the representatives of seeds).
\end{proof}
Hence we concentrate on ranking permutations of type $\widetilde{\psi}$ (representatives of seeds).
We slightly abuse notation and for a seed $\psi$ define $rank(\psi)=rank(\widetilde{\psi})$.

For a non-hub seed $\psi$ denote by $anchor(\psi)$ the highest non-hub ancestor $\phi$ of $\psi$ and let $hub(\psi)=parent(anchor(\psi))$.
Observe that the  anchor $\phi$ is the first contacting seed with the hub, it is the first
ancestor of $\psi$ such that
 $perms(\phi)\cap perms(\beta)\ne \emptyset$ for some $\beta\in Hub_n$, in fact for $\beta=hub(\psi)$.

\smallskip\noindent
$rank(\psi)-rank(anchor(\psi))$ for a non-hub seed $\psi$, can be treated as
its {\it distance from $Hub_n$}.
\noindent It happens that computing the rank of the anchor is much easier, since we have to deal only with 
the hub seeds.
\noindent The bottleneck in ranking is computation of the 
distance of a seed representative from $Hub_n$.
Define:  

\medskip
\centerline{$SUM(k,j) \;=\; |\tau\ \cdot\ \prod_{i=1}^{j-1}\,\sigma^i\,
\W_{\Delta(k,i)}\,\gamma_{n-2-i}|\;+\; j.$}

\smallskip
\noindent Denote also by $ord(\psi)$ the position of $mis(\psi)+1$ in $\psi$ counting from the end of sequence $\psi$.
For example for $\psi=(10\,6\,5\,9\,8\,4\,3\,1\,2)$ we have $ord(\psi)\,=\, 5$, since $mis(\psi)=7$
and 8 is on the 5-th position from the right.

\begin{observation}\label{Oct23}\mbox{ \ }
If $\phi=parent(\psi)\notin Hub_n$ and $\psi$ is the $i$-th son of $\phi$
then 
$rank(\psi)-rank(\phi)\;=\; SUM(height(\phi),\, i).$
\end{observation}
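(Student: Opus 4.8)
The plan is to establish Observation~\ref{Oct23} by directly decomposing the sub-path of $\PATH(n)$ that leads from $\widetilde\phi$ to $\widetilde\psi$ and reading off its length from the recurrence for $\W_k$ in \cref{recurrences for W_k}. Recall that $\W_k$, with $k=height(\phi)$, is the $\sigma\tau$-sequence traversing all of $bunch(\phi)$, starting at $\widetilde\phi$ and ending at $\phi^{(n-1)}$, and that by \cref{recurrences for W_k} it has the form $\tau\cdot\prod_{i=1}^{n-2}\sigma^i\,\W_{\Delta(k,i)}\,\gamma_{n-2-i}$. The key observation is that the prefix $\tau\cdot\prod_{i'=1}^{j-1}\sigma^{i'}\W_{\Delta(k,i')}\gamma_{n-2-i'}$ walks through the bunches of the first $j-1$ sons of $\phi$ and, crucially, the factor $\sigma^j$ that immediately follows carries the walk from $\phi^{(j)}$ to $\sigma^j(\phi^{(j)})=\widetilde{\beta_j}$, which by the definition of $son(\phi,j)$ is exactly the representative of the $j$-th son. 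So the number of labels consumed before the walk arrives at $\widetilde\psi$ (the $i$-th son's representative) is precisely the length of this prefix plus the $i$ letters of the $\sigma^i$ block, i.e.\ $SUM(height(\phi),i)$.

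First I would make the indexing precise: identify $\psi$ as the $i$-th son of $\phi$, so that $\widetilde\psi=\sigma^i(\phi^{(i)})$, and recall from \cref{seed intersections} that such a son exists for each $i\in\{1,\dots,n-3\}$ and that the connecting structure of Figure~\ref{Sept26}(A) guarantees $\widetilde\psi$ sits on the path exactly at this location. Second, I would verify that the length of the relevant prefix of $\W_k$ equals $|\tau\cdot\prod_{i'=1}^{i-1}\sigma^{i'}\W_{\Delta(k,i')}\gamma_{n-2-i'}|$, and then add the $i$ symbols $\sigma^i$ that bring the walk from $\phi^{(i)}$ onto $\widetilde\psi$. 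Matching this against the definition
$$SUM(k,j)\;=\;|\tau\cdot\prod_{i'=1}^{j-1}\sigma^{i'}\W_{\Delta(k,i')}\gamma_{n-2-i'}|\;+\;j$$
with $k=height(\phi)$ and $j=i$ gives the claimed equality, since $rank$ counts the number of $\sigma\tau$-steps from the start of the sub-path (at $\widetilde\phi$) and the difference $rank(\psi)-rank(\phi)$ is exactly the number of steps from $\widetilde\phi$ to $\widetilde\psi$ along $\W_k$.

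The one point needing care — and the main obstacle — is confirming that the walk truly reaches $\widetilde\psi$ after consuming exactly this prefix, rather than at some shifted position; this rests on the identity $\widetilde{\beta_i}=\sigma^i(\psi^{(i)})$ established in the proof of \cref{recurrences for W_k} together with the relation $\gamma_{n-1}(\psi^{(i)})=\psi^{(i+1)}$ from \cref{subs 2.1}, which together show that after traversing the $i$-th son's bunch the walk is positioned to continue correctly to $\phi^{(i+1)}$, so the block boundaries in the product coincide with the son-representatives. Since $\phi\notin Hub_n$ by hypothesis, we are in the uniform (non-hub) regime where \cref{recurrences for W_k} applies verbatim and no special first-son correction (as in $\V_n$) is needed, so the count is clean and the equality follows.
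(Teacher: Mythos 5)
Your proof is correct and takes essentially the approach the paper leaves implicit: the paper states this as an Observation without proof, illustrating it only with the example $\psi=94326781$, $\phi=95432781$, where the path from $\widetilde{\phi}$ to $\widetilde{\psi}$ is exactly the prefix $\tau\,\sigma^1\W_3\gamma_5\,\sigma^2\W_3\gamma_4\,\sigma^3\W_3\gamma_3\,\sigma^4$ of $\W_4$. Your decomposition of $\W_{height(\phi)}$ into the blocks before the $i$-th son, followed by the $\sigma^i$ step landing on $\widetilde{\psi}=\sigma^i(\phi^{(i)})$, with total length equal by definition to $SUM(height(\phi),i)$, is precisely that intended argument (modulo the harmless notational slip of writing $\psi^{(i)}$ for the parent's shifted permutation in your last paragraph, where the Observation's notation calls the parent $\phi$).
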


\begin{example} Let $\psi=94326781$, then $parent(\psi)= \phi= 95432781$.
The path from $\widetilde{\phi}=965432781$ to $\widetilde{\psi} =  954326781$ is
$$
\tau\,\sigma^1W_3\sigma^6\tau\, \sigma^2W_3\sigma^5\tau\, \sigma^3W_3\sigma^4\tau\,  \sigma^4,$$
see Figure~\ref{fig: example1}.
Its length equals $SUM(4,4)$, we have: 
$height(\phi)=4, \; ord(\psi)=4.$ 
\end{example}

\begin{observation}
$ord(\psi)=i$ iff $\psi$ is the $i$-th son of $parent(\psi)$.
\end{observation}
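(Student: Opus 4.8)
The plan is to pin down both sides of the equivalence through the normal form for $\widetilde{\psi}$ used in the proof of Lemma~\ref{seed intersections}. Set $\phi=parent(\psi)$ and let $\pi=(p_1,\dots,p_n)$ be the cyclic rotation of $\widetilde{\psi}$ with $p_1=n$. By that lemma $\widetilde{\psi}=\pi$, the seed $\psi$ arises from $\pi$ by deleting $p_2=mis(\psi)$, and $\phi$ arises from $\pi$ by deleting the unique entry $p_j=p_2\oplus1=mis(\phi)=mis(\psi)\oplus1$; here $j$ is the position in $\pi$ of the value $mis(\psi)\oplus1$. Validity of $\psi$ forces $p_2=p_3\oplus1$, which together with $p_j=p_2\oplus1$ rules out $j\in\{2,3\}$, so $j\ge4$ and the index $n-j+1$ lands in the admissible range $\{1,\dots,n-3\}$.

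First I would read off the son index. The proof of Lemma~\ref{seed intersections} already records $\phi^{(n-j+1)}=(p_j,p_{j+1},\dots,p_n,p_1,\dots,p_{j-1})$. Applying $\sigma$ exactly $n-j+1$ times is a left cyclic shift by $n-j+1$, which moves the block $p_j,\dots,p_n$ to the back and yields $(p_1,\dots,p_n)=\widetilde{\psi}$. Hence $\sigma^{\,n-j+1}(\phi^{(n-j+1)})=\widetilde{\psi}$, so by the definition of $son$ we have $son(\phi,n-j+1)=\psi$; moreover this is the only index that works, since the sons $son(\phi,1),\dots,son(\phi,n-3)$ are pairwise distinct by Lemma~\ref{seed intersections}. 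Thus $\psi$ is the $i$-th son of $parent(\psi)$ exactly when $i=n-j+1$.

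Next I would compute $ord(\psi)$ directly from the same picture. The tracked value is $mis(\psi)\oplus1=p_2\oplus1=p_j$. Deleting $p_2$ from $\pi$ shifts every later entry one step to the left, so in $\psi=(p_1,p_3,p_4,\dots,p_n)$ the value $p_j$ occupies position $j-1$ from the left of a length-$(n-1)$ sequence, i.e.\ position $(n-1)-(j-1)+1=n-j+1$ from the right. Therefore $ord(\psi)=n-j+1$, which is precisely the son index computed above; combining the two gives $ord(\psi)=i$ iff $\psi$ is the $i$-th son of $parent(\psi)$.

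The argument is essentially bookkeeping, and the only delicate point is reconciling the two opposite shift conventions --- $\psi^{(i)}$ is defined by a right rotation followed by prepending the missing element, whereas $\sigma$ rotates left --- and checking that the tracked entry is located via the modular $\oplus1$ rather than an ordinary increment. I expect no real obstacle beyond keeping these indices consistent once $\widetilde{\psi}$ is fixed in the $p_1=n$ normal form.
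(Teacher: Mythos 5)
Your proof is correct and takes exactly the route the paper intends: the observation is stated there without proof as an immediate consequence of the bookkeeping in the proof of Lemma~\ref{seed intersections} (where the son index $i=n-j+1$ is read off from the position $j$ of $mis(\phi)=mis(\psi)\oplus 1$ in the normal form of $\widetilde{\psi}$ starting with $n$), and your argument is precisely that bookkeeping made explicit, including the needed uniqueness of the son index. Your side remark is also a genuine sharpening of the paper: the definition of $ord(\psi)$ as the position of $mis(\psi)+1$ fails as literally written when $mis(\psi)=n-1$ (the tracked value must be $mis(\psi)\oplus 1$, not the ordinary increment), and your modular reading is the one under which the observation is true.
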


\noindent 
For the parent-sequence $\psi_0=\psi, \psi_1, ..., \psi_m=anchor(\psi_i)$
denote

\smallskip\centerline{$route(\psi)=(ord(\psi_0), ord(\psi_1), ... ,
ord(\psi_m)).$}

\smallskip\noindent
For a seed $\psi=a_1a_2...a_{n-1}$ 
 define the decreasing sequence of $\psi$, denoted by $dec\_seq(\psi)$, 
as the maximal sequence $a_{i_0}a_{i_1}...a_{i_m}$, where
$2=i_0<i_1<i_2<...<i_m$ such that $i_{j-1}=i_j\oplus 1$ for $0<j\leq m$.
Denote $level(\psi)=n-m-3$.
The length of the {\it parent-sequence} $\psi=\psi_0,\psi_1,\psi_2,...,\psi_r=anchor(\psi)$
from $\psi$ to its anchor
is $r=level(\psi)-1$.
\begin{example}\label{Oct28} We have: $dec\_seq(96154238)=(6,5,4,3)$.
Hence the path from $\psi=(96154238)$ to $anchor(\psi)=(98765423)$ is
of length $(9-3-3)-1 = 2$. This path equals:

\smallskip\centerline{$\psi_0\rightarrow \psi_1\rightarrow \psi_2\ 
=\ 96154238\rightarrow 97615423\rightarrow 98765423.$}

\smallskip\noindent  
We have: $ord(\psi_0)=1,\ ord(\psi_1)=5, \ ord(\psi_2)=2$,
 $route(96154238)\,=\, (1,\,5,\,2)$.
\end{example}

\noindent
The key point is that we do not need to deal with
the whole parent-sequence, including explicitly seeds on the path, which is
of quadratic size (in worst-case) but
it is sufficient to deal with the sequence of orders of sons, 
which is an implicit representation of this path of only linear size
\begin{lemma}\label{ord}
For a non-hub seed $\psi$ we can compute $route(\psi)$ and  $anchor(\psi)$
in $\Oh(n\sqrt{\log n})$ time.
\end{lemma}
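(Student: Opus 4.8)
The plan is to avoid materialising the parent-chain $\psi=\psi_0,\psi_1,\dots,\psi_{level(\psi)-1}=anchor(\psi)$ step by step, which would cost $\Oh(n)$ per node and hence $\Oh(n^2)$ in total, and instead to read off all of $route(\psi)$ from a single analysis of how positions move under the $parent$ map. First I would make one $parent$ step concrete: by the neighbour characterisation and Lemma~\ref{seed intersections}, passing from $\psi_s$ to $\psi_{s+1}=parent(\psi_s)$ amounts to deleting the element $mis(\psi_s)\oplus1$ and re-inserting $mis(\psi_s)$ immediately after the leading $n$ (position~$2$), so that $mis(\psi_{s+1})=mis(\psi_s)\oplus1$. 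Since $ord(\psi_s)$ is by definition the position, counted from the right, of $mis(\psi_s)\oplus1$, every entry of $route(\psi)$ is exactly the current right-position of the element that the step is about to delete.

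Next I would track right-positions through one step. Deleting one element and inserting one at position~$2$ fixes the leading $n$ and sends the right-position $r$ of every other element to $r-1+[\,r<o_s\,]$, where $o_s=ord(\psi_s)$ is the right-position of the deleted element: elements lying to the left of the deleted one (larger right-position) drop by one, those to its right are unchanged. Two consequences follow. The left-to-right order of the free elements (those outside $dec\_seq(\psi)$ and different from $n$), of which there are exactly $level(\psi)$, is preserved as long as they are present. And the free element queried at step $s$, which survives throughout $\psi_0,\dots,\psi_s$, has right-position $o_s=R_s-c_s$, where $R_s$ is its original right-position in $\psi$ and $c_s=|\{s'<s:R_{s'}<R_s\}|$ counts the earlier-queried free elements lying to its right. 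The queried elements, in order, are $mis(\psi)\oplus1,mis(\psi)\oplus2,\dots$ restricted to the free elements, i.e. the free elements listed in cyclic value order.

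This reduces the whole task to two linear passes plus one combinatorial subroutine. A single $\Oh(n)$ scan of $\psi$ yields $mis(\psi)$ and $dec\_seq(\psi)$ (hence $level(\psi)$ and, with $\Oh(n)$ more work, $anchor(\psi)$), and lists the free elements in query order together with their original right-positions $R_0,\dots,R_{k-1}$ with $k=level(\psi)$. It then remains to compute the numbers $c_s$, which is precisely the inversion table of $(R_s)$ --- an offline two-dimensional dominance-counting problem --- computable in $\Oh(n\sqrt{\log n})$ time on the word RAM (Chan and Patrascu). Setting $ord(\psi_s)=R_s-c_s$ produces $route(\psi)$, and the total cost is $\Oh(n\sqrt{\log n})$. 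I expect the main obstacle to be the positional analysis of the second paragraph: proving the update rule $r\mapsto r-1+[\,r<o_s\,]$, and deducing from it both the order-invariance of the free elements and the closed form $o_s=R_s-c_s$; once this is in place, recognising the residual computation as offline inversion counting is exactly what delivers the stated bound and matches the inversion-vector viewpoint announced for ranking.
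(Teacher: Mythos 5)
Your proposal is correct and takes essentially the same approach as the paper: both analyse one parent step as ``delete $mis(\psi_s)\oplus 1$, insert $mis(\psi_s)$ at position~2'', observe that insertions at the front and order-preservation make each $ord(\psi_s)$ equal to the original right-position of the queried element minus the number of earlier-deleted elements originally to its right, and reduce the whole of $route(\psi)$ to a single offline inversion-vector (dominance-counting) computation solved in $\Oh(n\sqrt{\log n})$ via Chan--Patrascu, with $anchor(\psi)$ recovered in $\Oh(n)$ extra time from $mis(\psi)$ and $level(\psi)$. The paper phrases the same reduction through the relabelled order $a_2\prec a_2\ominus 1\prec\cdots$ and the formula $ord(\psi_i)=RightSm[x_i+1]+1$, which is exactly your $R_s-c_s$ in disguise (counting surviving right-neighbours directly rather than subtracting deleted ones), so the two arguments coincide.
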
 
\begin{proof} We know the length of the parent sequence from $\psi$ to its anchor,
since we know  $level(\psi)$.
 Now we use the following auxiliary problem 

\smallskip
{\bf Inversion-Vector problem}:

 \hspace*{0.5cm} for a seed $\psi$ compute for each element $x$ 
the number  $RightSm[x]$

\hspace*{.5cm}
of  elements smaller than $x$ which are to the right of $x$ in $\psi$.

\medskip\noindent Assume  $\psi=(a_1,a_2,...,a_{n-1})$.
 We introduce a new linear order  

\smallskip\centerline{$a_2\prec a_2\ominus 1\prec a_2\ominus 2\prec ... \prec a_2\ominus (n-2)$.}

\smallskip\noindent
Then we compute together the numbers $RightSm[z]$ 
w.r.t. linear order $\prec$  for each element $z$ in $\psi$.

\smallskip\noindent Now $ord(\psi_i)$ is computed separately
for each $i$ in the following way:

\smallskip\centerline{$ord(\psi_i)\,:=\,RightSm[x_i+1]+1$,
  where $x_i=mis(\psi_i)$ }

\medskip\noindent
The {\it Inversion-Vector} problem can be computed in 
 $\Oh(n\sqrt{\log n})$ time, see \cite{DBLP:conf/soda/ChanP10}.
Consequently the whole computation
of numbers $ord(\psi_i)$ is of the same asymptotic complexity. 
We know that $hub(\psi)\;=\;(n,b,b\ominus 1,...,b\ominus (n-3))$, where ${b=a_2\oplus level(\psi)}$ and we know also which son of $hub(\psi)$ is $anchor(\psi)$.
This knowledge allows to compute $anchor(\psi)$ within required complexity.
This completes the proof.
\end{proof}

\smallskip\begin{corollary}\label{anchor-rank}
For a non-hub seed $\psi$ the value $rank(\psi)-rank(anchor(\psi))$ can be 
computed in $\Oh(n\sqrt{\log n})$ time.
\end{corollary}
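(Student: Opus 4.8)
Looking at this, I need to prove Corollary~\ref{anchor-rank}: that $rank(\psi)-rank(anchor(\psi))$ can be computed in $\Oh(n\sqrt{\log n})$ time for a non-hub seed $\psi$.

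Let me understand what I have available. The quantity $rank(\psi)-rank(anchor(\psi))$ is the "distance from $Hub_n$" — it's the rank difference accumulated along the parent-sequence from $\psi$ up to its anchor. By Observation~\ref{Oct23}, if $\phi=parent(\psi)\notin Hub_n$ and $\psi$ is the $i$-th son of $\phi$, then $rank(\psi)-rank(\phi)=SUM(height(\phi),i)$. So along the parent-sequence $\psi=\psi_0,\psi_1,\ldots,\psi_r=anchor(\psi)$, telescoping gives

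$$rank(\psi)-rank(anchor(\psi))=\sum_{j=0}^{r-1}\big(rank(\psi_j)-rank(\psi_{j+1})\big)=\sum_{j=0}^{r-1}SUM\big(height(\psi_{j+1}),\,ord(\psi_j)\big).$$

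Here $ord(\psi_j)=i$ exactly says $\psi_j$ is the $i$-th son of $\psi_{j+1}$, by the Observation relating $ord$ and son-index. So the entire difference is a sum of $r=level(\psi)-1$ terms, each of the form $SUM(\cdot,\cdot)$.

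The plan is to show each piece of this sum is cheaply accessible. First I would invoke Lemma~\ref{ord}, which already computes $route(\psi)=(ord(\psi_0),\ldots,ord(\psi_m))$ and $anchor(\psi)$ in $\Oh(n\sqrt{\log n})$ time; this $\Oh(n\sqrt{\log n})$ term will dominate the final bound, so everything else must fit within $\Oh(n)$. Next I need the heights $height(\psi_{j+1})$ along the sequence. These are determined by the structure: by Lemma~\ref{seed intersections}, $height(\psi_j)=\Delta(height(\psi_{j+1}),\,ord(\psi_j))$, and $height(anchor(\psi))=n-4$ (the anchor is the highest non-hub seed, a son of a hub seed). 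So I can recover all the heights deterministically by running this recurrence downward from $anchor(\psi)$ along the known $route$, in $\Oh(n)$ total time. With the heights and orders in hand, each term $SUM(height(\psi_{j+1}),ord(\psi_j))$ must be evaluated; so the remaining obligation is to show $SUM(k,i)$ is computable in $\Oh(1)$ after $\Oh(n)$ preprocessing, and then summing the $r\le n$ terms costs $\Oh(n)$.

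The main obstacle — and the step deserving the most care — is evaluating $SUM(k,i)$ in constant time. Unwinding the definition, $SUM(k,i)=\sum_{t=1}^{i-1}\big(t+|\W_{\Delta(k,t)}|+(n-2-t)\big)+i=\sum_{t=1}^{i-1}\big((n-2)+|\W_{\Delta(k,t)}|\big)+i$. The linear terms contribute $(i-1)(n-2)+i$, trivially $\Oh(1)$; the difficulty is the sum $\sum_{t=1}^{i-1}|\W_{\Delta(k,t)}|$. Since $\Delta(k,t)=\min(k-1,n-2-t)$, as $t$ ranges this is a prefix sum over the $\W$-sizes with the min plateau, which I would precompute. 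Observation~\ref{sizes of trees} already guarantees all $|\W_k|$ and the prefix sums $\sum_{i=0}^k(|\W_i|+n-1)$ are available in $\Oh(1)$ after $\Oh(n)$ preprocessing; I would split the range of $t$ at the breakpoint $t=n-1-k$ where the $\min$ switches from $k-1$ to $n-2-t$, handling the flat part (constant $|\W_{k-1}|$ repeated) by multiplication and the decreasing tail by a single difference of the precomputed prefix sums. This gives $SUM(k,i)$ in $\Oh(1)$. Assembling: the $\Oh(n\sqrt{\log n})$ cost of Lemma~\ref{ord} plus $\Oh(n)$ for heights and the telescoped sum yields the claimed $\Oh(n\sqrt{\log n})$ bound.
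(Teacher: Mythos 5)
Your plan is, in substance, the paper's own proof: telescope $rank(\psi)-rank(anchor(\psi))$ along the parent-sequence via Observation~\ref{Oct23}, obtain $route(\psi)$ and $anchor(\psi)$ from Lemma~\ref{ord} (the dominant $\Oh(n\sqrt{\log n})$ term), recover the heights top-down by $height(\psi_j)=\Delta(height(\psi_{j+1}),ord(\psi_j))$, and evaluate each $SUM(\cdot,\cdot)$ in $\Oh(1)$ from the prefix sums of Observation~\ref{sizes of trees} by splitting the $\min$ in $\Delta$ at its breakpoint (the paper leaves this last step implicit; making it explicit is a good addition).

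There is, however, one genuinely wrong step: the initialization $height(anchor(\psi))=n-4$ is false in general. The anchor is indeed a son of a hub seed, but hub seeds behave like seeds of height $n-3$ (this is exactly why $\V_n$ uses $\W_{\Delta(n-3,i)}$), so $height(anchor(\psi))=\Delta(n-3,\,ord(anchor(\psi)))=\min(n-4,\;n-2-ord(anchor(\psi)))$, which is strictly smaller than $n-4$ whenever the anchor is the $i$-th son of its hub parent with $i\ge 3$ (note $i=1$ is impossible, since the first son of a hub seed is again a hub seed). The paper's own ranking example exhibits this: for $n=10$ the anchor $(10,2,1,9,8,7,4,6,5)$ is the \emph{third} son of its hub and has height $\Delta(7,3)=5$, not $n-4=6$. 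Seeding your downward recurrence with $n-4$ would corrupt every height along the sequence, hence every term $SUM(height(\psi_{j+1}),ord(\psi_j))$, hence the final answer. The fix costs nothing with what you already have: $route(\psi)$ includes $ord(\psi_m)$ for the anchor itself, so initialize with $\Delta(n-3,ord(\psi_m))$; alternatively, since Lemma~\ref{ord} returns $anchor(\psi)$ explicitly, compute its height by a direct $\Oh(n)$ scan of its decreasing prefix. (A separate, minor slip: in unwinding $SUM(k,i)$, each factor $\sigma^t\W_{\Delta(k,t)}\gamma_{n-2-t}$ has length $n-1+|\W_{\Delta(k,t)}|$, since $\gamma_{n-2-t}=\sigma^{n-2-t}\tau$ has length $n-1-t$, and the leading $\tau$ adds $1$; your count of $n-2+|\W_{\Delta(k,t)}|$ per factor is off, though this does not affect the complexity claim.)
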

\begin{proof}
Let the {\it parent-sequence} from $\psi$ to its anchor
be 

\centerline{$\psi=\psi_0,\psi_1,\psi_2,...,\psi_r=anchor(\psi),$ where $r=level(\psi)-1$.}

\smallskip\noindent Then 
$rank(\psi_{i})-rank(\psi_{i+1})=SUM(height(\psi_{i+1}),ord(\psi_i)),$
  and

\smallskip\noindent $height(\psi_i)=\Delta(height(\psi_{i+1}),ord(\psi_i))$,
which allows us to compute in $\Oh(n)$ time:

\centerline{$rank(\psi)-rank(anchor(\psi))=\sum_{i=m-1}^{0}\,(rank(\psi_{i})-rank(\psi_{i+1}))$} 

\smallskip
Now the thesis is a 
consequence of Observation~\ref{sizes of trees}, Observation~\ref{Oct23} and Lemma~\ref{ord}.
This completes the proof.
\end{proof}

\begin{example} (Continuation of Example~\ref{Oct28}) For $\psi$ from Example~\ref{Oct28} we have: 

\smallskip
\centerline{$rank(\psi)-rank(anchor(\psi))\;=\; SUM(5,5)+SUM(2,1)$}
\end{example} 

\noindent The following result follows directly from Corollary~\ref{anchor-rank}, Lemma~\ref{2cases - ranking} and Observation~\ref{sizes of trees}.
\begin{theorem}{\rm\bf [Ranking]}
 For a given permutation $\pi$ we can compute the rank of $\pi$ in $\Seq_n$ 
in time $\Oh(n\sqrt{\log n})$
\end{theorem}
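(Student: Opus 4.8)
The plan is to assemble the final ranking theorem purely as a routing argument that combines the three pieces already established, with no new combinatorial work required. Given an arbitrary permutation $\pi$, I would first locate it inside the global structure: using the linear-time procedure from the Observation on packages, I identify a seed $\psi$ with $\pi\in perms(\psi)$, and I compute the cyclic rotation $\rho=(n,r_2,\dots,r_n)$ together with the index $j$ and the position $l$ of $n$, exactly as set up in Lemma~\ref{2cases - ranking}. The whole computation then splits according to whether $\psi$ is a hub seed. If $\psi\in Hub_n$, then part (b) of Lemma~\ref{2cases - ranking} already delivers $rank(\pi)$ outright in $\Oh(n)$ time, and we are done.

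If $\psi\notin Hub_n$, the idea is to telescope the rank along the parent-chain up to the hub. I would write
\[
rank(\pi)\;=\;\bigl(rank(\pi)-rank(\widetilde\psi)\bigr)\;+\;\bigl(rank(\psi)-rank(anchor(\psi))\bigr)\;+\;rank(anchor(\psi)),
\]
using the convention $rank(\psi)=rank(\widetilde\psi)$. The first bracket is supplied by Lemma~\ref{2cases - ranking}(a) in $\Oh(n)$ time. The second bracket is exactly the \emph{distance from $Hub_n$}, which by Corollary~\ref{anchor-rank} is computable in $\Oh(n\sqrt{\log n})$ time; this term absorbs the entire parent-sequence via the $SUM(height(\psi_{i+1}),ord(\psi_i))$ telescoping, and its cost is the dominant one. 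For the last term, $anchor(\psi)$ is a non-hub seed whose package meets the hub seed $hub(\psi)=parent(anchor(\psi))$; since Lemma~\ref{ord} produces $anchor(\psi)$ and identifies which son of $hub(\psi)$ it is, I would apply Lemma~\ref{2cases - ranking}(b) to the hub representative $\widetilde{hub(\psi)}$ to get its rank, then add the single $SUM$-contribution from Observation~\ref{Oct23} that locates $anchor(\psi)$ among the sons of its hub parent. All of these hub ranks are explicit arithmetic in $r_2,level(\psi)$, and the precomputed table from Observation~\ref{sizes of trees}.

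Summing the costs, every step except the route computation runs in $\Oh(n)$, while the route computation through the Inversion-Vector reduction runs in $\Oh(n\sqrt{\log n})$, so the total is $\Oh(n\sqrt{\log n})$ as claimed. I expect the only genuine subtlety—and hence the step most deserving of care—to be the bookkeeping at the hub interface: reconciling the fact that $son(\phi,1)$ for a hub seed $\phi$ is itself a hub seed (the reason $\V_n$ is defined as $\W_{n-3}$ with its first-son block excised and its trailing $\tau$ deleted) means the formula for $rank(anchor(\psi))$ must use the hub-specific offset $2n-2+(r_2\bmod(n-1))\cdot(n(n-2)!-2)$ from part (b) rather than the uniform recurrence. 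Once that offset is threaded in correctly, the theorem follows immediately by adding the three $\Oh(n\sqrt{\log n})$-computable quantities, as the concluding remark before the theorem already asserts.
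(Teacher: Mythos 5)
Your proposal is correct and takes essentially the same route as the paper, whose proof is literally the observation that the theorem follows from Corollary~\ref{anchor-rank}, Lemma~\ref{2cases - ranking} and Observation~\ref{sizes of trees} via exactly the three-term telescoping $rank(\pi)=\bigl(rank(\pi)-rank(\widetilde\psi)\bigr)+\bigl(rank(\psi)-rank(anchor(\psi))\bigr)+rank(anchor(\psi))$ that you write out. One point to tighten: for the third term, Observation~\ref{Oct23} is not literally applicable (it requires $parent(\psi)\notin Hub_n$, and $\widetilde{hub(\psi)}$ sits on the starting path, far from the $\V_n$ block containing its sons' representatives), so the clean formulation --- which your closing remark about using the hub-specific offset $2n-2+(r_2\bmod(n-1))\cdot(n(n-2)!-2)$ together with the excised first-son block effectively amounts to --- is to apply Lemma~\ref{2cases - ranking}(b) directly to $\widetilde{anchor(\psi)}$, which is itself a hub-permutation, so that the $-|W_{n-4}|-2$ correction is handled inside that lemma.
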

%
\section{Unranking}
Denote by $Perm(t)$ the $t$-th permutation in $SEQ_n$, and for $t< |bunch(\psi)|$  let 
$Perm(\psi,t)=Perm(t+rank(\widetilde{\psi}))$ (it is the $t$-th permutation in $bunch(\psi)$,
counting from the beginning of this bunch).
The following case is an easy one.
\begin{lemma}\label{1case - unranking} If we know a seed $\psi$ together with its rank, such that\\ 
$Perm(t) \in perms(\psi)$, then we can recover $Perm(t)$ in linear time.
\end{lemma}

\begin{proof}\mbox{ \ }\\
 Let $\psi=(n,a_2,...,a_{n-1})$, and $k=height(\psi)$.
 In linear time we find $j$ such that $SUM(k,j)-j\le t<SUM(k,j+1)-j-1$.
\\
 If $SUM(k,j)<t<SUM(k,j)+|W_{\Delta(k,j)}|$, then $Perm(\psi,t)$ does not belong to $perms(\psi)$ (it belongs to $bunch(son(\psi,j))$).
 \\ If $l=SUM(k,j)-t\ge 0$, then $Perm(\psi,t)$ is equal to $(n,a_2,...,a_{n-j},a_2\oplus 1,a_{n-j+1})$ rotated by $l$ to the right,
 and if $l=t-SUM(k,j)+|W_{\Delta(k,j)}|\ge 0$, then it is the same permutation rotated by $l+1$ to the left.
 
 In this way we reduced the problem of unranking permutation outside of $Hub_n$ to finding a package containing the permutation.
\end{proof}

We say that a permutation $\pi$ is a {\it hub-permutation} if
$\pi\in perms(\psi)$ for some $\psi\in Hub_n$.

\begin{lemma}\label{2case - unranking}
 We can test in $\Oh(n)$ time if $Perm(t)$
is a hub-permutation.\\
{\bf (a)}If ''yes'' then we can recover $Perm(t)$ in $\Oh(n)$ time.\\
{\bf (b)} Otherwise we can find in $\Oh(n)$ time
an anchor-seed $\psi$ together with $rank(\psi)$ such that
$Perm(t)\in bunch(\psi)$.
\end{lemma}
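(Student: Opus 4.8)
The plan is to invert the two-level block structure of $\Seq_n=\gamma_1^{n-2}\sigma^2(\V_n\tau)^{n-2}\V_n$, exactly mirroring the ranking Lemma~\ref{2cases - ranking} but run backwards: from $t$ I first read off which hub block contains $Perm(t)$, and then where inside that block the offset falls. First I would dispose of the \emph{starting path}. If $t\le 2n-2$ then $Perm(t)$ lies in the $\gamma_1^{n-2}\sigma^2$ prefix, which is part of the package of the hub seed $(n,n-1,\dots,2)$, so $Perm(t)$ is a hub-permutation. I invert the explicit starting-path formulas of Lemma~\ref{2cases - ranking}(b): when $t$ is odd it equals $2(n-p_2-1)-1$, which forces $p_2$ (with $n$ first), and when $t$ is even it equals $2(n-p_1-1)$, which forces $p_1$ (with $n$ second); the remaining entries are the forced decreasing run, so the recovery is $\Oh(n)$.

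Next, for $t>2n-2$ I would locate the block by one integer division. With $L=|\V_n\tau|=n(n-2)!-2$ available by Observation~\ref{sizes of trees}, set $q=\floor{(t-(2n-2))/L}$ and $s=(t-(2n-2))\bmod L$; by the constant-time arithmetic assumption this is $\Oh(1)$. The index $q\in\{0,\dots,n-2\}$ determines the hub seed: as computed for $\mu$ in Lemma~\ref{2cases - ranking}(b), block $q$ begins at $\mu$ and corresponds to the hub seed $\psi_q=(n,b,b\ominus1,\dots,b\ominus(n-3))$ whose second entry matches $q$ modulo $n-1$. I then recover $\psi_q$ (and hence $\widetilde{\psi_q}$) in $\Oh(n)$.

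It remains to decide where inside the block the offset $s$ lies, the analogue of the search in Lemma~\ref{1case - unranking} but applied to $\V_n$ rather than to a full $\W_k$. Recall $\V_n=\gamma_{n-3}\cdot\prod_{i=2}^{n-3}\sigma^i\W_{\Delta(n-3,i)}\gamma_{n-2-i}\cdot\sigma^{n-1}$: the $\sigma$- and $\gamma$-pieces form the ``skeleton'' traversing $perms(\psi_q)$, while each $\W_{\Delta(n-3,i)}$ piece is the bunch hanging from the $i$-th son. Using the precomputed prefix sums of Observation~\ref{sizes of trees} I scan $i=2,3,\dots,n-3$, maintaining the running length of $\V_n$, and stop at the piece containing $s$; this is $\Oh(n)$. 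If $s$ falls in a skeleton piece then $Perm(t)\in perms(\psi_q)$ is a hub-permutation, and I recover it by rotating the appropriate representative of $\psi_q$ by the residual offset, exactly as in Lemma~\ref{1case - unranking}, which proves (a). If $s$ falls inside the $i$-th $\W$-piece then $Perm(t)$ is not a hub-permutation; its anchor is $\psi=son(\psi_q,i)$ with $\widetilde\psi=\sigma^i(\psi_q^{(i)})$, recovered in $\Oh(n)$, and $rank(\widetilde\psi)$ equals the block-start rank $(2n-2)+qL$ plus the running $\V_n$-prefix length accumulated up to the start of that piece (the son-offset relation of Observation~\ref{Oct23}), which proves (b). Note that $\psi=son(\psi_q,i)$ is genuinely an anchor seed, since its parent $\psi_q$ is a hub seed and $i\ge2$ keeps it outside $Hub_n$.

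The main obstacle is the offset bookkeeping at the block boundaries. $\V_n$ is $\W_{n-3}$ with the first son and the trailing $\tau$ deleted, consecutive blocks are glued by extra $\tau$-edges, and the first block's skeleton overlaps with the already-handled starting path; keeping all these shifts consistent so that the division and the within-block scan return the correct hub seed, residual offset and son index is the delicate part. Since each correction is an $\Oh(1)$ additive adjustment and Observation~\ref{Oct23} must be used in its hub-parent form, once these are tabulated the overall running time stays $\Oh(n)$.
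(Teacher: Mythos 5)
Your proposal is correct and follows essentially the same route as the paper's proof: dispose of the starting path by parity, one integer division by $|\V_n\tau|$ to identify the hub seed, then a linear scan of the pieces of $\V_n$ to decide whether the residual offset lands in the skeleton (hub-permutation, recovered by rotation) or inside a $\W_{\Delta(n-3,i)}$ piece (anchor $son(\psi_q,i)$ with rank given by the accumulated prefix length). The only difference is presentational: the paper writes out the boundary corrections you defer as explicit formulas (e.g.\ the $t_2<n-2$ case and the shift $l=t_2+|W_{n-4}|+2-SUM(n-3,j)$ compensating for the removed first-son part of $\W_{n-3}$), which are exactly the $\Oh(1)$ additive adjustments you identify.
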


\begin{proof}\mbox{ \ }\\

 If $t<2n-2$ then $Perm(t)$ is equal to $(n-1,...,1)$ rotated to the left by $\lceil\frac{t}{2}\rceil$, with $n$ inserted on first position if $x$ is odd, and on the second if it is even.
\\
 Otherwise let $t-(2n-2)=t_1\cdot|V_n\tau|+t_2$ (we use integer division), and let 
$$\psi=(n,t_1,t_1\ominus 1,...,t_1\ominus (n-3))$$ (with $t_1$ substituted by $n-1$ if equal to $0$).
 $Perm(t)$ belongs to $perms(\psi)$, or to $bunch(\phi)$, where $parent(\phi)=\psi$. 

If $t_2<n-2$ then $Perm(t)$ equals to $(n,t_1,t_1\ominus1,...,t_1\ominus(n-2))$ rotated to the left by $t_2+1$.
 
\smallskip
In the other case in linear time we find $j$ such that 
$$SUM(n-3,j)-j\le t_2+(1+|W_{n-4}|+n-1)-(n-2)<SUM(n-3,j+1)-j-1,$$
$$\mbox{and}\ l=t_2+|W_{n-4}|+2-SUM(n-3,j),\
\phi=son(\psi,j)$$
 If $l\le0$ then $Perm(t)$ is equal to $\widetilde{\phi}$ rotated to right by $-l$, else if $l\ge|W_{\Delta(n-3,j)}|$, then $Perm(t)$ is equal to $\widetilde{\phi}$ rotated to the left by
 $l-|W_{\Delta(n-3,j)}|+1$. 
Otherwise $Perm(t)=Perm(\phi,l)$, and it is not a hub-permutation.

By using this algorithm we either already succeed in finding the right permutation, or restrict ourselves to a limited regular part of $PATH(n)$.
\end{proof}

\noindent For a sequence $\mathbf{b}=(b_1,b_2,...,b_m)$ of positive integers
denote 

\smallskip\centerline{$MaxFrac(\mathbf{b})\,=\, \max_i\;\frac{b_{i+1}}{b_i},\ \
MinFrac(\mathbf{b})\,=\, \min_i\;\frac{b_{i+1}}{b_i}.$}

\medskip\noindent
The sequence $\mathbf{b}$ is called here   $D(m)$--{\it stably increasing}
iff 

\smallskip\centerline{
$MinFrac(\mathbf{b})  \ge 2$, and  $MaxFrac(\mathbf{b}) \le D(m)$.}

\begin{lemma}\label{fast search & sums of W_i are stably increasing}\mbox{ \ }\\
{\bf (a)}
 Assume we have a $D(m)$--stably increasing sequence $\mathbf{b}$ of length $\Oh(m)$.
 Then after linear preprocessing we can locate any integer $t$ in the sequence $\mathbf{b}$ 
in \\ $\Oh(\log\log D(m))$ time.\\
{\bf (b)}
 The  sequence $\mathbf{b}\;=\; (b_0,b_1,...,b_{n-5})$,
 where $b_k=\sum_{i=0}^{k}(|W_i|+n-1)$ is $n$--stably increasing.
\end{lemma}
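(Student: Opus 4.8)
The plan is to treat the two parts separately: part (b) is a growth estimate for the partial sums that I would get from the length recurrence implied by Theorem~\ref{recurrences for W_k}, while part (a) is a data-structure statement that I would reduce to a single predecessor query by exploiting the two defining inequalities of a stably increasing sequence. For part (b) I would first turn the SLP of Theorem~\ref{recurrences for W_k} into a recurrence for the \emph{lengths} $|\W_k|$. Reading off $|\gamma_j|=j+1$ and $\Delta(k,i)=\min(k-1,n-2-i)$, the factor $\sigma^i\W_{\Delta(k,i)}\gamma_{n-2-i}$ contributes $|\W_{\Delta(k,i)}|+(n-1)$, and as $i$ runs over $1,\dots,n-2$ the value $\Delta(k,i)$ equals $k-1$ exactly $n-1-k$ times and then runs once through $k-2,k-3,\dots,0$, giving
\[
|\W_k| \;=\; 1+(n-2)(n-1)+(n-1-k)\,|\W_{k-1}|+\textstyle\sum_{j=0}^{k-2}|\W_j|,\qquad |\W_0|=1 .
\]

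The second step in (b) is to take first differences $\Delta_k:=|\W_k|-|\W_{k-1}|$; the constant term and the prefix sum telescope, leaving the clean recurrence $\Delta_k=(n-1-k)\,\Delta_{k-1}$ with $\Delta_1=n(n-2)$, hence $\Delta_k=n\,(n-2)!/(n-2-k)!$. The crucial consequence is $\Delta_{k+1}/\Delta_k=n-2-k$, which over the relevant range $k\le n-6$ lies between $4$ and $n-3$, so the increments, and therefore the terms $s_k:=|\W_k|+(n-1)$, are super-increasing. Writing $b_k=\sum_{i=0}^{k}s_i$, the two required conditions $MinFrac(\mathbf b)\ge2$ and $MaxFrac(\mathbf b)\le n$ are exactly $s_{k+1}\ge b_k$ and $s_{k+1}\le (n-1)\,b_k$. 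Both I would prove by a short induction on $k$, feeding in $s_{k+1}-s_k=\Delta_{k+1}$ together with the factor bounds $4\le\Delta_{k+1}/\Delta_k\le n-3$ and the trivial $b_k\ge s_k$: super-increase guarantees the leading term $\Delta_{k+1}$ dominates the accumulated $(n-1)$-offsets, keeping the lower ratio above $2$, while the factor never reaches $n-1$, keeping the upper ratio below $n$. This is the only genuinely computational step, and it is routine once $\Delta_k=(n-1-k)\Delta_{k-1}$ is in hand.

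For part (a) the key idea is that the two inequalities of stable increase let me replace a search on the huge numbers $b_i$ by a predecessor query on their \emph{bit-lengths}. Because $MinFrac\ge2$, the bit-lengths $\beta_i=\lfloor\log_2 b_i\rfloor$ are strictly increasing, and because $MaxFrac\le D(m)$ consecutive ones differ by at most $\lceil\log_2 D(m)\rceil$. During the linear-time preprocessing I would store all $\beta_i$ (each obtained from $b_i$ by one constant-time bit-length operation) and build a van~Emde~Boas / y-fast predecessor structure over them. Given $t$ I compute $\beta=\lfloor\log_2 t\rfloor$ in $\Oh(1)$ and let $p$ be the largest index with $\beta_p\le\beta$. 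A one-line case analysis then forces the answer $i^{*}$, defined by $b_{i^{*}}\le t<b_{i^{*}+1}$, to be $p-1$ or $p$: any index with $\beta_i<\beta$ satisfies $b_i<2^{\beta}\le t$, any index with $\beta_i>\beta$ satisfies $b_i\ge2^{\beta+1}>t$, and strict monotonicity leaves at most one index with $\beta_i=\beta$ exactly; hence a single comparison of $t$ against $b_p$ decides between $p-1$ and $p$, and locating $t$ costs one predecessor query plus $\Oh(1)$.

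The last step is to bound the predecessor cost, and this is also where I expect the only real friction. The bit-lengths lie in a universe of size $\Oh(m\log D(m))$ (length $\Oh(m)$, gaps $\Oh(\log D(m))$), so the van~Emde~Boas query runs in $\Oh(\log\log(m\log D(m)))$ time, which equals $\Oh(\log\log D(m))$ in the range $\log m=\Oh(\log D(m))$ used here — in particular for part (b), where $m=D(m)=n$ and the bound is $\Oh(\log\log n)$. The obstacle is precisely this matching of the universe to $D(m)$: since the $b_i$ are factorial-sized, a structure built directly on the values would give only $\Oh(\log\log b_{\max})=\Oh(\log n)$, and the entire gain comes from first collapsing to bit-lengths, where the geometric lower bound $MinFrac\ge2$ shrinks the universe from $b_{\max}$ to $\Oh(\log b_{\max})$ while the upper bound $MaxFrac\le D(m)$ keeps the collapse faithful.
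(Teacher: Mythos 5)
Your part (b) is correct and takes a genuinely different route from the paper. You telescope the length recurrence into first differences, obtaining $\Delta_k=(n-1-k)\Delta_{k-1}$ and the closed form $\Delta_k=n(n-2)!/(n-2-k)!$, and then run an induction showing the summands $s_k=|\W_k|+n-1$ are super-increasing. The paper instead substitutes the recurrence for $|\W_k|$ directly into $b_k$ and gets the identity $b_k=2b_{k-1}+(n-k-2)(|\W_{k-1}|+n-1)+n$, from which $b_k>2b_{k-1}$ is immediate (as $k\le n-5$) and $b_k/b_{k-1}\le n-k+1\le n$ follows in one line, with no induction and no closed form. Your version yields more explicit quantitative information; the paper's is shorter. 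One detail to patch in yours: the factor bounds $4\le\Delta_{k+1}/\Delta_k\le n-3$ only apply from $k\ge1$ on, so the step from $b_0$ to $b_1$ must be checked by hand — there $b_1/b_0$ equals $n$ exactly, so the upper bound is tight at the first step and your induction must treat it as a separate base case.

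Part (a) is where you have a genuine gap: the lemma is an unconditional statement about an arbitrary $D(m)$-stably increasing sequence, and your argument does not prove it. Your universe of bit-lengths has size $\Theta(m\log D(m))$, so the predecessor query costs $\Oh(\log\log(m\log D(m)))=\Oh(\log\log m+\log\log\log D(m))$, which is $\Oh(\log\log D(m))$ only under your added hypothesis $\log m=\Oh(\log D(m))$ — a hypothesis that is not in the statement. When $D(m)$ is small relative to $m$ the claim fails outright: for $b_i=2^i$, a $2$-stably increasing sequence of length $m$, the lemma promises $\Oh(1)$ query time while your structure spends $\Theta(\log\log m)$. The missing idea is the paper's bucketing device, which decouples the query time from $m$: set $\delta=(b_m/b_1)^{1/m}$ and bucket indices by powers of $\delta$ via $d[i]=\max\{j:b_j\le b_1\delta^i\}$, computed in one linear scan. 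Since $b_m/b_1\le MaxFrac^{\,m-1}$ we have $\delta\le MaxFrac\le D(m)$, and since $MinFrac\ge 2$ each bucket contains at most $\log_2 D(m)+1$ indices; a query computes its bucket index in $\Oh(1)$ (one logarithm, in the paper's constant-time-arithmetic model) and binary-searches inside the bucket in $\Oh(\log\log D(m))$ time. This is also lighter machinery than yours: it needs no predecessor structure at all, whereas plain van Emde Boas requires initialization proportional to the universe (superlinear here) and a y-fast trie needs randomized hashing to keep preprocessing linear. Your reduction from values to bit-lengths is sound and would suffice for the downstream application in Lemma~\ref{fast search in tree}, where $m=D(m)=n$, but it does not establish the lemma as stated.
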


\begin{proof}\mbox{ \ }\\
{\bf (a)}
 Denote $B=\frac{b_m}{b_1},\delta=\sqrt[m]{B,}$ $Min=MinFrac(\mathbf{b}),Max=MaxFrac(\mathbf{b})$,
 and let $d$ be a sequence such that $d[i]=\max\{j:b_j\ge b_1\cdot \delta^i\}$.

 It can be computed in linear time by scanning the sequence $\mathbf{b}$ from left to right and reporting whenever element exceeds next power of $\delta$.
 
 Thanks to the fact that sequence $\mathbf{b}$ is $D(m)$--stably increasing the maximal difference between consecutive values of a sequence $d$ is bounded by
 $\log_{Min}(\delta)\le\log_{Min}(Max)\le\log_2(D(m))$.

\smallskip\noindent 
 When locating $x$ in the sequence $\mathbf{b}$ we can compute the value 
$$y=\lfloor\log_{\delta}{\left(\frac{x}{b_1}\right)}\rfloor
 =\lfloor\frac{m\cdot(\log_2{x}-\log_2{b_1})}{\log_2{B}}\rfloor$$
 It now suffices to binary scan the part of sequence $\mathbf{b}$ between positions $d[y]$ and $d[y+1]$, which has a length bounded by $\log_2(D(m))$.\\

\medskip\noindent
{\bf (b)}
For any $k\in \{1,...,n-5\}$
the value of $b_k=\sum_{i=0}^{k-1}(n-1+|W_i|)+n-1+|W_k|$ equals 
$$2\cdot\sum_{i=0}^{k-1}(n-1+|W_i|)+(n-k-2)\cdot(|W_{k-1}|+n-1)+n$$
$$>2\cdot\sum_{i=0}^{k-1}(n-1+|W_i|)=2\cdot b_{k-1}$$
\noindent We have:
$$\frac{b_k}{b_{k-1}}= 2+\frac{n+(n-k-2)\cdot(|W_{k-1}|+n-1)}{\sum_{i=0}^{k-1}(n-1+|W_i|)}$$
$$<2+\frac{n+(n-k-2)\cdot(|W_{k-1}|+n-1)}{(n-1+|W_{k-1}|)}$$
$$=\;n-k+\frac{n}{(|W_{k-1}|+n-1)}\le n-k+1\le n$$ $ $

Combining those two parts of the lemma allows us to locate values in the sequence $b_k=\sum_{i=0}^{k}(n-1+|W_i|)$ in $\Oh(\log\log n)$
time after linear preprocessing dependent only on the value of $n$.
\end{proof}

\begin{lemma}\label{fast search in tree}
After linear preprocessing if we are given a height of a non-hub seed $\psi$, and a number $t\le|bunch(\psi)|$ we can
find the number $j$ and $height(\beta)$ of the seed-son $\beta$ of $\psi$ such that $Perm(\psi,t)\in bunch(\beta)$
in $\Oh(\log\log n)$ time
if $Perm(\psi,t)\notin perms(\psi)$.
\end{lemma}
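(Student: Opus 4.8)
The plan is to reduce the search for the correct seed-son $\beta$ to a single fast lookup in the $n$--stably increasing sequence $\mathbf{b}$ of Lemma~\ref{fast search & sums of W_i are stably increasing}(b), exploiting the recurrence $\W_k = \tau\cdot\prod_{i=1}^{n-2}\sigma^i\,\W_{\Delta(k,i)}\,\gamma_{n-2-i}$ from Theorem~\ref{recurrences for W_k}. First I would recall that $Perm(\psi,t)$ is the $t$-th permutation of $bunch(\psi)$, and that reading $\W_k$ left to right, the $i$-th block $\sigma^i\,\W_{\Delta(k,i)}\,\gamma_{n-2-i}$ contributes exactly $SUM(k,i+1)-SUM(k,i)$ positions, where $SUM$ is as defined before Observation~\ref{Oct23}. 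Hence locating $t$ amounts to finding the unique $j$ with $SUM(k,j)\le t < SUM(k,j+1)$; the child visited in that block is $\beta=son(\psi,j)$, whose height is $\Delta(k,j)$ by Lemma~\ref{seed intersections}. If additionally $t$ falls strictly inside the $\W_{\Delta(k,j)}$ sub-part of the $j$-th block (and not in the $\sigma^j$ prefix or the $\gamma_{n-2-j}$ suffix), then $Perm(\psi,t)\notin perms(\psi)$ and we output $j$ together with $height(\beta)=\Delta(k,j)$.

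The key observation making this fast is that $\Delta(k,i)=\min(k-1,n-2-i)$ behaves in two distinct regimes as $i$ ranges over $\{1,\dots,n-2\}$. For $i\le n-k-1$ we have $\Delta(k,i)=k-1$, so all these blocks have the \emph{same} length; the cumulative sum over this first, constant-height stretch is therefore linear in the number of blocks and can be inverted by a single division. For $i> n-k-1$ we have $\Delta(k,i)=n-2-i$, so the block weights are $|W_{k-2}|,|W_{k-3}|,\dots,|W_0|$ (plus the $n-1$ terms from $\sigma^i$ and $\gamma_{n-2-i}$), and their partial sums are precisely a suffix-shift of the sequence $b_m=\sum_{i=0}^m(|W_i|+n-1)$, which is $n$--stably increasing. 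I would phrase $SUM(k,j)$ explicitly as a constant-height contribution plus an offset into $\mathbf{b}$, so that finding $j$ in the second regime reduces to locating a single value in $\mathbf{b}$.

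Concretely, I plan to branch on whether the residual target lands in the constant-height prefix or the decreasing-height suffix. In the prefix case, $j$ is obtained in $\Oh(1)$ by an integer division of $t$ by the common block length $|W_{k-1}|+n-1$ (after subtracting the leading $\tau$). In the suffix case, I subtract the total weight of the prefix and then invoke Lemma~\ref{fast search & sums of W_i are stably increasing}(a)--(b) to locate the appropriate index in $\mathbf{b}$ in $\Oh(\log\log n)$ time; translating this index back through the reversal $i\mapsto n-2-i$ yields $j$ and the height $\Delta(k,j)$. After linear $n$-dependent preprocessing (Observation~\ref{sizes of trees} and the $d[\cdot]$ table of Lemma~\ref{fast search & sums of W_i are stably increasing}), each query costs $\Oh(\log\log n)$.

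The main obstacle I anticipate is purely bookkeeping rather than conceptual: getting the boundary arithmetic exactly right. One must carefully account for the leading $\tau$ of $\W_k$, the $\sigma^i$ prefix and $\gamma_{n-2-i}=\sigma^{n-2-i}\tau$ suffix of each block (these contribute the ``$n-1$'' padding per block), and the transition index $i=n-k-1$ where $\Delta(k,i)$ switches regimes, so that the inequalities distinguishing ``$Perm(\psi,t)\in perms(\psi)$'' from ``$\in bunch(\beta)$'' are placed correctly. I would verify these offsets against the explicit instance in Figure~\ref{fig: example1}, where $\W_4$ shows the block lengths $|W_3|$ repeated, then $|W_3|,|W_2|,|W_1|,|W_0|$, matching the constant-then-decreasing pattern that the two-regime analysis predicts.
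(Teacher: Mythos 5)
Your proposal is correct and follows essentially the same route as the paper's proof: split the blocks of $\W_k$ into the constant-height regime $\Delta(k,i)=k-1$ (handled by one integer division by $|W_{k-1}|+n-1$) and the decreasing-height regime $\Delta(k,i)=n-2-i$ (handled by locating a reversed/suffix value in the $n$--stably increasing sequence $\mathbf{b}$ via Lemma~\ref{fast search & sums of W_i are stably increasing}), then test whether $t$ lands strictly inside the $\W_{\Delta(k,j)}$ sub-block to decide membership in $bunch(\beta)$ versus $perms(\psi)$. The only differences are bookkeeping conventions (the paper measures the suffix case from the end via $s=|W_k|-t$, you subtract the prefix weight), which you correctly flag as the delicate part.
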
 
\begin{proof}
Let $k=height(\psi)$.
We need  $j$ such that $SUM(k,j)-j\le t<SUM(k,j+1)-(j+1)$.
For $j\le n-k$ we have $SUM(k,j)-j=(j-1)\cdot(|W_{k-1}|+n-1)$, hence if
$t<SUM(k,n-k)-n+k$ the simple division by $|W_{k-1}|+n-1$ suffices to find the appropriate $j$.
Otherwise we look for $j$ such that 
$$|W_k|-SUM(k,j+1)+j+1< s\le |W_k|-SUM(k,j)+j,\ \mbox{where}\ s=|W_k|-t.$$

\noindent
Let
{$b_i=|W_k|-SUM(k,n-2-i)+n-2-i=(\sum_{j=0}^i |W_j|+n-1).$}

\smallskip
By Lemma \ref{fast search & sums of W_i are stably increasing}(b) $(b_0,...,b_{k-2})$ is $n$--stably increasing 
(it is a prefix of $(b_0,...,b_{n-5})$ for which we made the linear preprocessing).
Hence by Lemma \ref{fast search & sums of W_i are stably increasing}(a) we can find the required 
$j$ in $\Oh(\log\log n)$ time.\\
Moreover if $SUM(k,j)<t<SUM(k,j)+|W_{\Delta(k,j)}|$, then $Perm(\psi,t)=Perm(\beta,t-SUM(k,j))$, where $\beta=son(\psi,j)$ has height $\Delta(k,j)$.
Otherwise $Perm(\psi,t)\in perms(\psi)$.
\end{proof}

\begin{theorem}{\rm\bf [Unranking]}
 For a given number $t$ we can compute the $t$-th permutation in
Sawada-Williams generation in $\Oh(n\frac{\log n}{\log\log n})$.
\end{theorem}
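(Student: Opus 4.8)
The plan is to reduce unranking to three tasks we already control: locating $Perm(t)$ relative to $Hub_n$, descending the pseudo-tree to the package that contains it, and reconstructing the corresponding seed. First I would invoke Lemma~\ref{2case - unranking} to decide in $\Oh(n)$ time whether $Perm(t)$ is a hub-permutation. If it is, the same lemma returns $Perm(t)$ in $\Oh(n)$ time and we are done. Otherwise it hands us an anchor seed $\psi$ together with $rank(\psi)$ and a residual offset $t'$ such that $Perm(t)=Perm(\psi,t')\in bunch(\psi)$, so it remains to localize $Perm(t)$ inside this bunch.

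Next I would descend the subtree $Tree(\psi)$ without ever materializing the intermediate seeds. Maintaining a current seed $\phi$ (initially $\psi$) of height $k$ and a residual offset $t'$ with $Perm(\phi,t')=Perm(t)$, Lemma~\ref{fast search in tree} returns in $\Oh(\log\log n)$ time the son-index $j$ with $Perm(\phi,t')\in bunch(son(\phi,j))$ and the son's height $\Delta(k,j)$; the $\Oh(\log\log n)$ cost comes from locating $t'$ in the $n$-stably increasing sequence of Lemma~\ref{fast search & sums of W_i are stably increasing}, while $SUM(k,j)$ is read off the prefix sums of Observation~\ref{sizes of trees}. I would record the emitted son-indices as a route and update $t'\leftarrow t'-SUM(k,j)$. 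Because each step strictly lowers the height, the descent has $\Oh(n)$ levels and costs $\Oh(n\log\log n)$ overall; the same lemma also reports when $Perm(t)$ already lies in $perms(\phi)$, at which point the descent halts with $\beta=\phi$ and the recorded route encodes the path from $\psi$ to $\beta$.

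The hard part is turning the route into the explicit seed $\beta$. By Lemma~\ref{seed intersections} each son-step is a fixed local edit of the seed determined by the son-index (delete one element and reinsert another at a prescribed rank), so applying the whole route is exactly the inverse of the inversion-vector computation used for ranking in Lemma~\ref{ord}: we are handed a list of insertion ranks and must assemble the seed they describe. Carrying out the $\Oh(n)$ rank-indexed insertions naively would cost $\Oh(n^2)$; instead I would maintain a dynamic order-statistics structure supporting select and rank-indexed insert in $\Oh(\frac{\log n}{\log\log n})$ time per operation on the word RAM, which builds $\beta$ in $\Oh(n\frac{\log n}{\log\log n})$ time. This inverse-inversion step is the bottleneck and the only place where the target complexity is actually attained.

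Finally, knowing $\beta$ and $rank(\beta)$, I would apply Lemma~\ref{1case - unranking} to extract $Perm(t)$ from its package in $\Oh(n)$ time. Summing the four stages gives $\Oh(n)+\Oh(n\log\log n)+\Oh(n\frac{\log n}{\log\log n})+\Oh(n)=\Oh(n\frac{\log n}{\log\log n})$, as required.
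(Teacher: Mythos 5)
Your proposal is correct and follows essentially the same route as the paper: Lemma~\ref{2case - unranking} for the hub test and anchor recovery, exhaustive application of Lemma~\ref{fast search in tree} to extract the route in $\Oh(n\log\log n)$ time, reconstruction of the unknown seed from the route via rank-indexed insertions into a dynamic list structure at $\Oh(\frac{\log n}{\log\log n})$ per operation (this is the paper's Claim~\ref{Nov15}, which cites Dietz's data structure for exactly this task), and Lemma~\ref{1case - unranking} to finish inside the package. The only cosmetic difference is that the paper initializes its list with $dec\_seq(\psi)$ (computable directly from the anchor and the route length) and then performs insertions only, whereas you phrase each step as a delete-and-reinsert edit; the bottleneck and its analysis are identical.
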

\begin{proof}
 From Lemma \ref{2case - unranking} we either obtain the required 
permutation (if it is a hub-permutation) or obtain its anchor-seed $\phi$
and $rank(\phi)$.
 In the second case we know that $Perm(t)\in bunch(\phi)$ and it equals 
$Perm(\phi,\,t-rank(\widetilde{\phi}))$.
Now after the linear preprocessing we apply Lemma \ref{fast search 
in tree} exhaustively to obtain $route(\psi)$ for a seed $\psi$ 
such that $Perm(t)\in perms(\psi)$. However we do not know $\psi$ and have to compute it.
\begin{claim}\label{Nov15}
If we know $anchor(\psi)$ and $route(\psi)$ then $\psi$ can be computed
in \\ $\Oh(n\frac{\log n}{\log\log n})$ time. 
\end{claim}
\begin{proof}
We can compute the second element $a_2$ of $\psi$ as $a'_2\ominus m$ and
$dec\_seq(\psi)$ as $(a_2,a_2\ominus1,...,a_2\ominus(n-m-3))$
where $a'_2$ is the second element of $anchor(\psi)$, and $m=|route(\psi)|-1$.
Then we use the order:

\centerline{$a_2\prec a_2\ominus 1\prec a_2\ominus 2\prec ...\prec a_2\ominus (n-2)$.}

\smallskip\noindent 
We produce a linked list initialized with $dec\_seq(\psi)$.
For $i\in\{0,...,m-1\}$ 
we want to insert $a_2\oplus (m+1-i)$ after $ord(\psi_{m-1})$ position from the end of the current list
(all the smaller elements are already in the list and we know, that after $a_2\oplus(m+1-i)$ there are $ord(\psi_{m-1})-1$ such elements).
$\psi$ is composed of $n$ and consecutive elements of the final list.
The data structure from \cite{DBLP:conf/wads/Dietz89} 
allows us to achieve that in $\Oh(n\frac{\log n}{\log\log n})$ time.
\end{proof}
 Finally we use this claim and Lemma~\ref{1case - unranking}
 to obtain the required permutation $Perm(t)$.
\end{proof}
\section{Examples of ranking and unranking }
\noindent We show on representative examples
how the ranking and unranking algorithms are working.
\paragraph{\bf Ranking.} 
When ranking  $\pi=(7,2,4,1,6,5,10,9,8,3)$ we first find a permutation $\rho=(10,9,8,3,7,2,4,1,6,5)$ cyclically equivalent to $\pi$ and then
a seed $\psi=(10,9,8,3,7,2,4,6,5)$ whose 
package $perms(\psi)$ contains both $\pi$ and $\rho$ (in case of two candidates for $\psi$ we choose the 
parent).
We have $rank(\pi)-rank(\widetilde{\psi})=(rank(\pi)-rank(\rho))+(rank(\rho)-rank(\widetilde{\psi}))=(|W_1|+3)+(SUM(2,3))=268$. 
Next  we compute $route(\psi)$ by computing inversion vector of $\psi$.
After that we compute $hub(\psi)=(10,3,2,1,9,8,7,6,5),$ and $$
 \psi_2=anchor(\psi)=son(hub(\psi),3)=(10,2,1,9,8,7,4,6,5),$$
 $$height(\psi_2)=\Delta(n-3,3)=5,\
rank(\widetilde{\psi_2})-rank(\widetilde{\psi_1})=SUM(5,5)=83246,$$
$$ height(\psi_1)=\Delta(5,5)=3,rank(\widetilde{\psi_1})-rank(\widetilde{\psi})=SUM(3,4)=1955.$$

\begin{figure}[h!]
\centerline{\includegraphics[width=10.5cm]{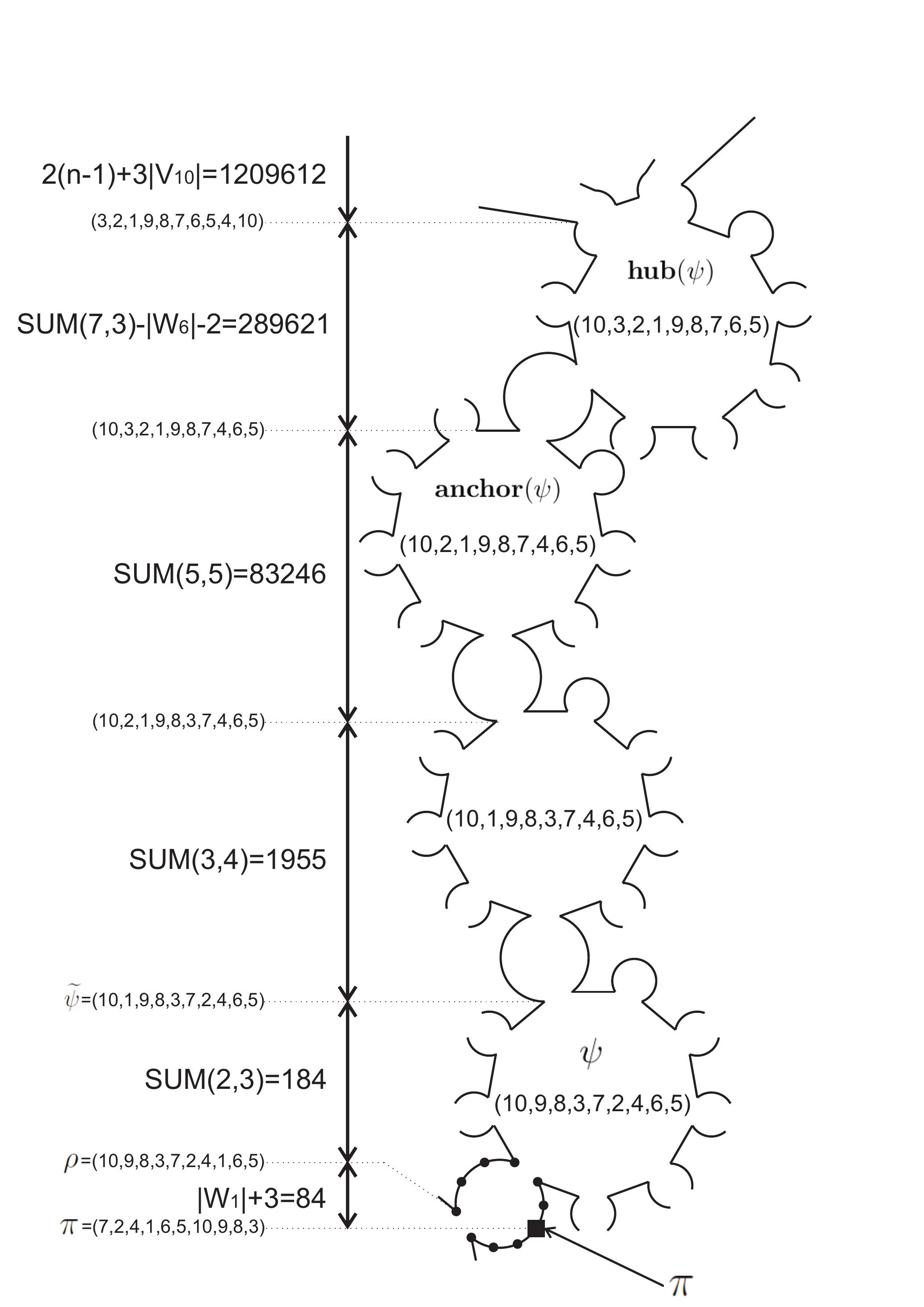}}
\caption{Illustration of ranking and unranking of $\pi=(7,2,4,1,6,5,10,9,8,3)$.
We have $\pi\in perms(\psi)$, and $route(\psi)=(3,5,4)$.
}\label{fig: rank_example}
\end{figure}

Now it is enough to compute $rank(\widetilde{\psi_2})$ (knowing that $\widetilde{\psi_2}$ belongs to the hub), which again is computed in two steps -- 
rank of the first permutation of $perms(hub(\psi))$ (outside of the starting path)
is equal to $2n-2+3|V_{10}|=1209612$ and $\widetilde{\psi_2}$ occurs $SUM(7,3)-|W_6|-2=289621$ permutations later. After summing all the values  our final output is:

\smallskip
\centerline{$rank(7,2,4,1,6,5,10,9,8,3)\;=\;1584702.$}
\paragraph{\bf Unranking.} Forget now that we already know the permutation with rank 1584702.
When looking for a permutation with rank $t$ we first check if $Perm(t)$ is not in the 
starting path
($t>2n-2$) and then after subtracting $2n-2=18$ from $t$ we divide it by $|V_{10}|$,
to get $t_1=3,t_2=375090$. We now know, that the permutation belongs to
$bunch((10,3,2,1,9,8,7,6,5))$. 

We have $$SUM(7,3)-|W_6|-5\le375090<SUM(7,4)-|W_6|-6,$$ hence we know,
that the permutation belongs to the $\sigma_3W_{\Delta(7,3)}\gamma_5$ part of $V_{10}$. We 
decrease the rank by $SUM(7,3)-|W_6|-2=289621$ to get $85469$.

\smallskip
Then we descend down the seed tree by choosing the fifth son, because 
 $SUM(5,5)-5\le85469<SUM(5,6)-6$, with the remaining rank $85469-SUM(5,5)=2223$.
 Next we go to 
the third son since $SUM(3,4)-4\le2223<SUM(3,5)-5$, with the remaining rank equal  $2223-SUM(3,4)=268$.

\smallskip
In the next step we know that $SUM(2,3)-3\le 268<SUM(2,4)-4$, and 
also that $268>SUM(2,3)+|W_1|$, hence further descent is not needed.

In this moment we came to an unknown seed $\psi$ for which we know $route(\psi)=(3,5,4)$, and $anchor(\psi)=son((10,3,2,1,9,8,7,6,5),3)=(10,2,1,9,8,7,4,6,5)$.
Using Claim \ref{Nov15} we recover $\psi=(10,9,8,3,7,2,4,6,5)$.
Now we know that the required permutation is in $perms(\psi)$, and
it equals $Perm(\psi,268)$, then we use Lemma \ref{1case - unranking} to obtain
$Perm(\psi,268)=(7,2,4,1,6,5,10,9,8,3)$, and this permutation is our final output.
\newpage
\section{Cyclic \texorpdfstring{$\sigma\tau$}{sigma-tau}-sequence}
A $\sigma\tau$-sequence of permutations is cyclic if the last permutation is equal to
the first one.
Sawada and Williams in \cite{sawadahomepage} have given an iterative construction 
of a cyclic $\sigma\tau$-sequence. They have shown how to partition the graph
of permutations into two edge disjoint cycles (2-cycles cover)
$C',C''$ of respectively {\it inner} and {\it outer} cycles. Below we give an example 
of this structure for $n=7$.

\medskip
\centerline{\includegraphics[width=12cm]{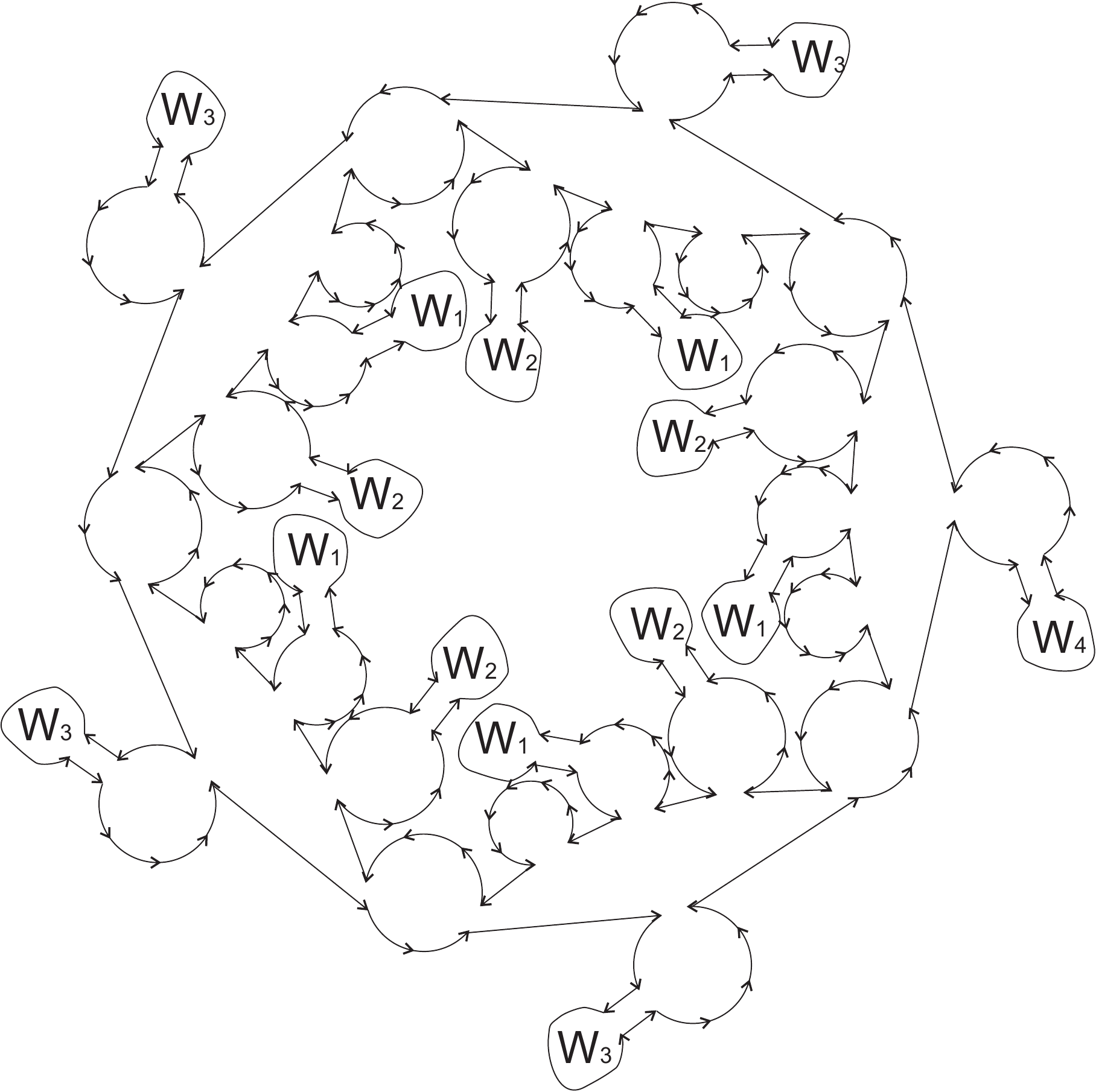}}

\medskip
\noindent We define "switches" as permutations of the form
$$(x,n,x\oplus 1,x\oplus2,...,x\ominus 1)$$
for $1\le x < n$. In other words they are cyclic shifts of the identity 
permutation $(1,2,3,..., n-1)$ in which $n$ is inserted 
into the second position.

\begin{observation}
There is a one switch on the inner cycle $C'$ and $n-2$ switches on the 
outer cycle. $C''$
\end{observation}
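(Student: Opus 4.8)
The plan is to count the \emph{switches}, permutations of the form $(x,n,x\oplus 1,\dots,x\ominus 1)$ for $1\le x<n$, and determine whether each one lies on the inner cycle $C'$ or the outer cycle $C''$. First I would observe that there are exactly $n-1$ switches in total, one for each choice of $x\in\{1,\dots,n-1\}$, since a switch is completely determined by its first element $x$ (the rest is forced: $n$ in the second position, followed by the cyclic run $x\oplus 1,\dots,x\ominus 1$). So the claim is equivalent to showing that exactly one of these $n-1$ switches lands on $C'$, the remaining $n-2$ landing on $C''$.

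The key step is to understand how the two cycles $C',C''$ partition the switches, which I expect to follow from the seed structure developed earlier. Each switch is a permutation lying in some package $perms(\psi)$, and since a switch has $n$ in its second position, it should correspond to one of the distinguished connecting permutations $\psi^{(i)}$ of a seed. I would relate the switch $(x,n,x\oplus1,\dots,x\ominus1)$ to a hub seed: removing $n$ leaves a cyclic shift of the identity, so the associated seed is a hub seed of the form $(n,b,b\ominus1,\dots,b\ominus(n-3))$ with $b$ determined by $x$. The inner cycle $C'$ is the ``small cycle'' from the Sawada--Williams construction (the \emph{starting path} neighborhood, cf.\ Figure~\ref{fig: example4}), which visits precisely the permutations cyclically equivalent to the descending identity with $n$ inserted in the first or second position; the outer cycle $C''$ carries everything else.

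Concretely, I would argue that a switch $(x,n,x\oplus1,\dots,x\ominus1)$ lies on $C'$ if and only if its underlying $(n-1)$-permutation, after deleting $n$, is the fully descending sequence $(n-1,n-2,\dots,1)$ in the appropriate rotation—this pins down a unique value of $x$, giving the single switch on $C'$. Every other value of $x$ produces a permutation whose deletion of $n$ is a nontrivial cyclic shift of the identity that is \emph{not} the canonical descending run, placing it on the outer cycle $C''$; there are $n-2$ such values, accounting for the $n-2$ switches on $C''$. The total count $1+(n-2)=n-1$ matches the number of switches, confirming consistency.

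The main obstacle will be pinning down the exact membership criterion—verifying which single switch falls on the inner cycle $C'$ rather than on $C''$. This requires matching the informal description of the $2$-cycle cover $C',C''$ from \cite{sawadahomepage} against the seed/package framework of the preceding sections, and checking that the switch associated to the descending-identity hub seed is indeed the one the inner cycle passes through while the other $n-2$ hub seeds contribute their switches to the outer cycle. I expect this to reduce to a direct inspection of where $n$ sits relative to the decreasing run in each switch, but making the inner-versus-outer dichotomy precise for all $n$ simultaneously is the delicate part.
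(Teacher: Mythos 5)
There is a genuine gap, and it sits exactly at the step you yourself flagged as delicate. Your identification of the inner cycle $C'$ is wrong: you equate it with the ``small cycle'' of the Hamiltonian \emph{path} construction (the starting path of $2n-2$ permutations cyclically equivalent to the descending sequence with $n$ in the first or second position). But the observation concerns the 2-cycle cover of \cite{sawadahomepage}, where by Lemma~\ref{two cycles representation} the inner cycle is $SEQ'=U^{n-2}$ and the outer cycle is $(\sigma W_{n-3}\gamma_{n-3}\gamma_2)(\sigma W_{n-4}\gamma_{n-3}\gamma_2)^{n-3}$; both are enormous cycles that together cover all of $\G_n$. The structural dichotomy is not ``descending-identity neighborhood versus everything else'' but rather: the outer cycle covers the bunches hanging off the \emph{first} sons of hub seeds, while the inner cycle covers the bunches of sons $3,\dots,n-3$. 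Your concrete membership criterion is also void: \emph{every} switch, after deleting $n$, is an ascending cyclic shift of $(1,2,\dots,n-1)$ (never the descending run), and all $n-1$ switches are cyclically equivalent modulo the placement of $n$, so ``is the deletion the canonical run in the appropriate rotation'' cannot single out one value of $x$.

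The argument the paper relies on (made explicit in the later lemmas on switches) goes through seeds and anchors instead. The switch $(x,n,x\oplus1,\dots,x\ominus1)$ lies in $perms(\psi)$ for exactly one seed, namely $\psi=(n,x\oplus1,x\oplus3,x\oplus4,\dots,x)$, which has height $1$; one then traces parent links to its anchor. For $x\neq 1$ one gets $anchor(\psi)=son(hub(\psi),1)$, a first son, so these $n-2$ switches lie in bunches covered by the outer cycle; for $x=1$ one gets $anchor(\psi)=son(hub(\psi),3)$, so this single switch lies in a $U$-part of the inner cycle. Your counting of $n-1$ total switches and the reduction to ``exactly one lands on $C'$'' are fine, but without the seed/anchor analysis (or some equivalent tracking of which hub-seed sons each cycle covers) the inner-versus-outer classification — the entire content of the observation — is not established.
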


\subsection{\bf Sawada Williams construction of Hamiltonian cycle.}
The algorithm basically computes the cycles $C', C''$, then 
the switches are appropriately ordered 
as $\Delta_1,\, \Delta_2,\, ...,\Delta_{n-1}$.
Afterwards the outgoing edges for the switches are redirected by choosing the outgoing $\tau$-edge (in the 2-cycle cover these were $\sigma$-edges).
More explicitly we redirect:

for  each $i>1$ redirect
$\Delta_i\overrightarrow{\tau} \sigma(\Delta_{i-1})$, 
 additionally 
redirect 
$\Delta_{1}\overrightarrow{\tau} \sigma(\Delta_{n-1})$.\newline

\subsection{Two cycles construction}

Let $\otimes$ denote a modified addition modulo $n-2$, where $n-1\otimes 1 = 2$.
It gives a cyclic order of elements $\{2,...,n-1\}$, with attached element $1$ ($1\otimes1=2$). For $a\neq 2$ we write $a\oslash 1\,=a-1$ and $2\oslash 1\,=n-1$.
\medskip
\begin{lemma}\label{two cycles representation}
The outer sequence $C''$ is represented (after removing one edge) 
by the linear sequence
$$SEQ''\;=\; (\sigma W_{n-3}\gamma_{n-3}\gamma_2) (\sigma 
W_{n-4}\gamma_{n-3}\gamma_2)^{n-3},$$
 and 
the inner sequence $C'$ is represented by the linear
sequence $SEQ'=U^{n-2}$,
 where $$U =\; \gamma_{n-4}\ \cdot\ \prod_{i=3}^{n-3}\,\sigma^i\, \W_{\Delta(n-3,i)}\,\gamma_{n-2-i} \ \cdot \ \gamma_{n-1};$$\newline
\end{lemma}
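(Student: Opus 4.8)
The plan is to read off the $\sigma\tau$-label of each of the two cycles $C',C''$ by expressing it as a concatenation of bunch-traversals, using the words $\W_k$ of Theorem~\ref{recurrences for W_k} as the basic building blocks, and then matching the resulting products to the claimed formulas. The only facts I need about $\W_k$ are the two already established properties that it traverses $perms(\psi)$ (indeed $bunch(\psi)$) from $\widetilde{\psi}$ to $\psi^{(n-1)}$, i.e. $\W_{k}(\widetilde{\psi})=\psi^{(n-1)}$ with $k=height(\psi)$, together with the gluing identity $\gamma_{n-1}(\psi^{(i)})=\psi^{(i+1)}$. Given these, reading a cycle reduces to determining, for each of $C'$ and $C''$, the ordered list of packages it visits and the $\gamma$-operations gluing them.

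First I would fix the seed-level picture. The switches $(x,n,x\oplus1,\dots,x\ominus1)$ are the distinguished permutations at which the cyclic structure repeats; since there is one on $C'$ and $n-2$ on $C''$, they determine the periods of the two label words. For the inner cycle I would verify that the word traced between two successive passages through its single switch is $U$, and that $U$ advances the governing cyclic parameter by one step of $\otimes$ on the $n-2$ values $\{2,\dots,n-1\}$; hence exactly $n-2$ repetitions $U^{n-2}$ return to the start and close $C'$. Expanding one such block through the son structure of a hub seed (height $n-3$, with sons of heights $\Delta(n-3,i)$ by Lemma~\ref{seed intersections}) produces the factor $\prod_{i=3}^{n-3}\sigma^i\W_{\Delta(n-3,i)}\gamma_{n-2-i}$, the opening $\gamma_{n-4}$ and closing $\gamma_{n-1}$ coming from how the inner cycle enters and leaves at the switch: the low-index sons $i=1,2$ are not descended into but are consumed by the crossing, exactly as the first son is dropped when passing from $\W_{n-3}$ to $\V_n$ in the construction of $\Seq_n$.

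For the outer cycle the same analysis applies, but now the $n-2$ switches split the word into $n-2$ blocks of the shape $\sigma\,\W_k\,\gamma_{n-3}\gamma_2$. The one distinguished block carries $\W_{n-3}$, because there $C''$ enters a full height-$(n-3)$ bunch, while the remaining $n-3$ blocks carry $\W_{n-4}$ (a truncated bunch), giving $(\sigma\W_{n-3}\gamma_{n-3}\gamma_2)(\sigma\W_{n-4}\gamma_{n-3}\gamma_2)^{n-3}$. The phrase \emph{after removing one edge} is accounted for here: the honest object is a closed cycle, and I would delete a single edge (at the distinguished block) to linearize it into the stated word, checking closure of the remaining blocks with $\gamma_{n-1}(\psi^{(i)})=\psi^{(i+1)}$ and $\W_k(\widetilde\psi)=\psi^{(n-1)}$.

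The hard part will be the bookkeeping of the second step: determining precisely which sons at a switch are descended into and which are bypassed, and matching the resulting truncations — the shifted product range, the opening and closing $\gamma$'s, and $\W_{n-3}$ versus $\W_{n-4}$ — to the cyclic arithmetic ($\otimes,\oslash$ for $C''$ against $\oplus,\ominus$ for the packages) that orders the switches. Once this correspondence is pinned down, equality of the two label words with $SEQ'$ and $SEQ''$ follows from Theorem~\ref{recurrences for W_k} and the endpoint identities by induction on the height, just as for $\V_n$.
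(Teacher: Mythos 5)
Your overall shape---decomposing each cycle into $\W_k$ blocks glued by $\gamma$'s, with the block $\prod_{i=3}^{n-3}\sigma^i\W_{\Delta(n-3,i)}\gamma_{n-2-i}$ coming from the sons of a height-$(n-3)$ seed---matches the paper, but the mechanism you use to organize the decomposition is wrong, and it hides the two facts that actually drive the proof. First, you let the switches determine the periodicity, and for $C'$ this is internally inconsistent: $C'$ contains exactly \emph{one} switch, so the word traced between two successive passages through it is the entire cycle $U^{n-2}$, not $U$. The true source of the period is that in the cycle construction there are $n-2$ hub seeds, namely those of the form $(n,x,x\oslash 1,\dots,x\otimes 2,1)$ for $x\in\{2,\dots,n-1\}$, and each contributes one $U$ (inner) or one $\sigma\W\gamma_{n-3}\gamma_2$ block (outer); the switches sit \emph{inside} the $\W$ factors (as the later lemma on routes of switches makes explicit), not at block boundaries. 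Second, the key structural fact you never isolate is that in this construction the hub-child of a hub seed $\psi$ is $son(\psi,2)$, whereas in the path construction it was $son(\psi,1)$. This is precisely what severs the first son from sons $3,\dots,n-3$: the outer cycle collects the first sons of all hub seeds, the inner cycle collects sons $3$ through $n-3$ together with the $\W_0$ cycles, the $\gamma_2$ in $SEQ''$ is the transition to the second son (the next hub seed), and the $\gamma_{n-4}$ in $U$ is the return to the parent (so the inner cycle visits hub seeds in reversed order). Without this fact your ``bookkeeping of which sons are descended into'' has no basis on which to be carried out.

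Two further points are misattributed or missing. Your explanation of $\W_{n-3}$ versus $\W_{n-4}$ on the outer cycle (``full bunch'' versus ``truncated bunch'') is incorrect: every block of $C''$ is a full bunch traversal of a first son; all these first sons have height $n-4$ except the single seed $(n,n-3,n-4,\dots,1,n-1)$, the first son of $(n,n-2,n-3,\dots,1)$, which has height $n-3$---that exceptional height accounts for the one $\W_{n-3}$ block. Finally, you take the identities $GEN(\widetilde{\psi},\W_k)=bunch(\psi)$ and $\W_k(\widetilde{\psi})=\psi^{(n-1)}$ as already available, but they were proved for the path construction, whose seed and missing-element structure uses $\oplus$ (mod $n-1$); the cycle construction replaces this by $\otimes$ (mod $n-2$ with $1$ attached), so these properties must be re-established. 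The paper does this by observing that for $k\ge 1$ the relevant ordering is inherited from the parent seed, so the proof of Theorem~\ref{recurrences for W_k} goes through unchanged, while the case $k=0$ (permutations lying in $perms(\psi)$ for a unique $\psi$, where the missing element is inserted directly after $n$) needs a separate one-line argument. As written, your proposal would not close without adding these three ingredients.
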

\begin{proof}
 First we need to prove, that for $\psi$ outside of hub $$GEN(\widetilde{\psi},\, \W_k)\,=\, bunch(\psi)\ \ \mbox{and}\ \ 
 \W_k(\widetilde{\psi})=\psi^{(n-1)},$$ for $bunch(\psi),\widetilde{\psi},\psi^{(n-1)}$ and $height(\psi)$ defined as before, but after replacing $\oplus$ with $\otimes$.
 For $k\ge1$ the ordering of elements (which element is considered missing in the seed, and thus equal to $p_2\oplus1$ or $p_2\otimes1$) is in fact inherited from
 the parent seed (the first $k$ elements after $n$ in the permutation), thus there is no change from the proof of \ref{recurrences for W_k}.
 For $"k=0"$ (the cycle of permutations which belong to $perms(\psi)$ for just one $\psi$) the missing element is inserted just after $n$,
 thus there can be no further descent in the tree of seeds and at the same time there is only one permutation in the cycle for which
 the SW function gives a $\tau$-edge.\newline
 Every hub seed has one child which is also a hub seed. In the previous construction it was always $son(\psi,1)$. In this construction it is $son(\psi,2)$ as hub seeds
 are those of a form $(n,x,x\oslash1,...,x\otimes2,1)$. Hence the construction divides the first son from sons $3$ to $n-3$ (and the cycle with $W_0$).
 The {\it outer} cycle covers the first sons of hub seeds, and the {\it inner} cycle covers the remaining ones.
 Son of each hub seed have height $n-4$ with one exception -- seed $(n,n-3,n-4,...,1,n-1)$ which is the first son of a hub seed $(n,n-2,n-3,...,1)$ has height $n-3$.
 Hence the outer cycle has the stated representation (additional $\gamma_2$ represents transition to the second son -- next hub seed).
 In the inner cycle each child of hub seeds has the same height as in the previous construction ($height(son(\psi,i))=\Delta(n-3,i)$)
 and are visited in the same order. After all those children are visited
 there appears $\gamma_{n-4}$ which represents the "return to the parent seed" (thus in this cycle hub seeds are visited in the reversed order).
\end{proof}

\subsection{Alternative path construction}
\begin{claim}
 In the path obtained with the new method the sequence representing it is equal to
 $$(\sigma W_{n-3}\gamma_{n-3}\gamma_2) (\sigma W_{n-4}\gamma_{n-3}\gamma_2)^{n-4}\sigma W_{n-4}\gamma_{n-3}\sigma^{3}U^{n-2}$$ (with the last $\tau$ removed)
 It is equal to the concatenation of representations from Lemma \ref{two cycles representation} with the ending $\tau$ removed in both representations and with $\sigma$
 added between them.
\end{claim}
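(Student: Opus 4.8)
The plan is to separate the statement into two independent parts: the purely syntactic identity asserting that the displayed word coincides with the announced concatenation, and the combinatorial claim that this word is actually the label sequence of the path produced by the new method.

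The syntactic part is a direct expansion of the two words of Lemma~\ref{two cycles representation}. Since $\gamma_2=\sigma^2\tau$, the outer word
$$SEQ''=(\sigma W_{n-3}\gamma_{n-3}\gamma_2)(\sigma W_{n-4}\gamma_{n-3}\gamma_2)^{n-3}$$
ends in $\tau$; deleting that final $\tau$ turns its last factor $\sigma W_{n-4}\gamma_{n-3}\gamma_2$ into $\sigma W_{n-4}\gamma_{n-3}\sigma^2$, and appending the bridging $\sigma$ makes this tail $\sigma W_{n-4}\gamma_{n-3}\sigma^3$. Concatenating the resulting outer word with $SEQ'=U^{n-2}$ and removing the single overall trailing $\tau$ (the last symbol of the final $\gamma_{n-1}$ in $U^{n-2}$) reproduces, after that deletion, exactly the word
$$(\sigma W_{n-3}\gamma_{n-3}\gamma_2)(\sigma W_{n-4}\gamma_{n-3}\gamma_2)^{n-4}\sigma W_{n-4}\gamma_{n-3}\sigma^{3}U^{n-2}$$
displayed in the claim, so this half of the statement is a one-line computation.

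For the combinatorial part I would start from the two edge-disjoint cycles of the $2$-cycle cover, whose cyclic label traversals are $SEQ''$ (for the outer $C''$) and $SEQ'=U^{n-2}$ (for the inner $C'$) by Lemma~\ref{two cycles representation}; together they visit every permutation exactly once. The new method turns this pair of cycles into a single Hamiltonian path by opening each cycle at one edge and joining the two resulting paths by a single $\sigma$-edge, which costs a net of one removed edge and is therefore consistent with passing from $|V|$ edges (two cycles) to $|V|-1$ edges (one path). I would argue that the edge opened in each cycle is precisely its closing edge --- the final $\tau$ of each representation --- so that opening $C''$ leaves a path ending at the cut vertex and opening $C'$ leaves a path beginning at the matching cut vertex; the claim then asserts that these two cut vertices are joined by a legal $\sigma$-edge, which supplies the inserted $\sigma$. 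Reading the labels along the concatenated path yields $SEQ''$ with its last $\tau$ deleted, followed by the bridging $\sigma$, followed by $U^{n-2}$ with its last $\tau$ deleted, i.e.\ exactly the word of the previous paragraph.

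The step I expect to be the main obstacle is verifying the splice itself: one must identify the two cut vertices explicitly and check that the end of the opened outer path is $\sigma$-adjacent to the start of the opened inner path, so that the inserted $\sigma$ is a genuine edge of the Cayley graph while the two deleted $\tau$-edges are exactly the cycle-closing edges of $SEQ''$ and $U^{n-2}$. One should also confirm that the bridge connects the two \emph{distinct} cycles (one endpoint in each), so that the result is a single spanning Hamiltonian path rather than a shorter path together with a leftover cycle. Once the cut vertices and the bridge edge are pinned down, everything else is a routine unfolding of Lemma~\ref{two cycles representation} together with the expansion already carried out above.
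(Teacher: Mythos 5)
Your handling of the second sentence of the Claim (the purely syntactic identity) is correct and complete: since $\gamma_2=\sigma^2\tau$ and $U$ ends in $\gamma_{n-1}=\sigma^{n-1}\tau$, deleting the final $\tau$ of $SEQ''$, appending the bridging $\sigma$, then appending $U^{n-2}$ and deleting its final $\tau$ yields exactly the displayed word. For reference, the paper states this Claim with no proof at all, so there is no authors' argument to compare your route against; your proposal must therefore stand on its own.

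It does not, because the first sentence of the Claim --- that this word is the label sequence of the path produced by the new method --- is the entire mathematical content, and there your proposal is an outline with a self-acknowledged hole. You reduce everything to ``verifying the splice'': naming the vertex where the opened outer cycle ends, the vertex where the opened inner cycle begins, checking that the former maps to the latter under $\sigma$, and checking that the two deleted edges are precisely the closing $\tau$-edges of the representations of Lemma~\ref{two cycles representation}. You then explicitly defer this verification instead of performing it. Carrying it out requires material you never touch: the concrete definition of the ``new method'' (in the paper the Hamiltonian cycle of Section 7.1 is built by redirecting the outgoing edges of the $n-1$ switches, $\Delta_i\overrightarrow{\tau}\sigma(\Delta_{i-1})$, so a path variant must specify which redirection is dropped and why the result is the single splice your argument assumes), and the endpoint bookkeeping hidden in the proof of Lemma~\ref{two cycles representation} (where the traversals $SEQ''$ and $U^{n-2}$ start and end, in terms of the hub seeds $(n,x,x\oslash 1,\ldots,x\otimes 2,1)$ and their first and second sons). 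Your edge count (two deletions plus one insertion gives $n!-1$ edges) establishes only numerical consistency; it cannot distinguish a genuine Hamiltonian path from a word of the right length that fails to follow edges of the Cayley graph at the splice point. As written, the proposal proves the easy half and restates the hard half as a task to be done.
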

\begin{lemma}
 In the new path we can rank a permutation in $\Oh(n\sqrt{\log n})$ time and unrank it in $\Oh(n\frac{\log n}{\log\log n})$.
\end{lemma}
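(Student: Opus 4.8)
The final statement to prove is the ranking/unranking lemma for the alternative (cyclic) path construction.

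The plan is to observe that the alternative path is, up to boundary adjustments, a concatenation of the two linear sequences $SEQ''$ and $SEQ'$ from Lemma~\ref{two cycles representation}, so ranking and unranking reduce to the same machinery already developed for $\Seq_n$. First I would set up the coordinate decomposition: given a permutation $\pi$, determine whether it lies in the outer part (governed by $\sigma W_{n-3}\gamma_{n-3}\gamma_2$ and the repeated $\sigma W_{n-4}\gamma_{n-3}\gamma_2$ blocks) or in the inner part $U^{n-2}$. The key structural fact, already established in Lemma~\ref{two cycles representation}, is that each hub seed has exactly one hub child, which is now $son(\psi,2)$ rather than $son(\psi,1)$; the outer cycle collects the first sons and the inner cycle collects sons $3,\dots,n-3$ together with the $W_0$-cycle. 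This gives a clean two-way split analogous to the hub/non-hub split handled in Lemma~\ref{2cases - ranking} and Lemma~\ref{2case - unranking}.

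For ranking, I would mirror the structure of the proof of the Ranking Theorem. The heights of the sons of hub seeds are identical to those in the original construction ($height(son(\psi,i))=\Delta(n-3,i)$), and the seeds in each bunch are visited in the same order, so the recurrences for $\W_k$ and the precomputed prefix sums $SUM(k,j)$ and $\sum_{i=0}^k(|W_i|+n-1)$ from Observation~\ref{sizes of trees} apply verbatim. The only genuinely new work is locating $\pi$ within the outer/inner block decomposition and computing the offset contributed by the hub traversal; because the inner cycle visits hub seeds in reversed order (the $\gamma_{n-4}$ "return to parent" edge), I would need to track this reversal when accumulating the rank, but this is an $\Oh(n)$ bookkeeping adjustment. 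Once $\pi$ is reduced to a seed representative, the distance-from-hub computation is exactly $route(\psi)$ together with the inversion-vector computation of Lemma~\ref{ord}, which dominates the cost at $\Oh(n\sqrt{\log n})$.

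For unranking, I would similarly adapt Lemma~\ref{2case - unranking} and Lemma~\ref{fast search in tree}: first test whether $t$ falls in the outer blocks or the inner $U^{n-2}$ segment by a single integer division against $|U|$ (respectively against the outer block length), then descend the seed tree using the stably-increasing search of Lemma~\ref{fast search & sums of W_i are stably increasing}, and finally reconstruct the seed via Claim~\ref{Nov15}. The stably-increasing property of the prefix sums of $|W_i|$ is unchanged, so the $\Oh(\log\log n)$-per-level search still holds and the seed reconstruction via Dietz's data structure~\cite{DBLP:conf/wads/Dietz89} gives the $\Oh(n\frac{\log n}{\log\log n})$ bound. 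The main obstacle I anticipate is handling the boundary terms correctly: the $\sigma$ inserted between the two sequences, the two removed trailing $\tau$-edges, and the reversed order of hub-seed visitation in the inner cycle all shift ranks by small constants, and getting these offsets exactly right across both the outer/inner boundary and the hub wrap-around is where the care is required. Everything else is a direct transcription of the already-proved $\Seq_n$ algorithms with $\oplus$ replaced by $\otimes$.
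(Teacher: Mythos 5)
Your proposal is correct and follows essentially the same route as the paper's proof: both reduce the problem to the four-part ranking/unranking machinery already built for $\Seq_n$ (locating the permutation or rank in the outer/inner block decomposition, reusing $SUM$ and the prefix sums, computing $route(\psi)$ via inversion vectors, and reconstructing the seed via the Dietz structure), with the genuinely new work confined to the hub-level case split and small boundary offsets. The paper is merely more explicit about two details you subsume under ``replace $\oplus$ with $\otimes$'': the exact offset formulas for the outer/inner cycles, and the fact that the element $1$ must be inserted into the linear order $\prec$ used in the inversion-vector computation (which is harmless since $1$ is never a missing element).
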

\begin{proof}
 Ranking algorithm in the previous construction contained four parts:
 \begin{enumerate}
  \item Counting $rank(\pi)-rank(\psi)$ for $\pi\in perms(\psi)$ in $\Oh(n)$.
  \item Computing $route(\psi)$ for a given $\psi$ in $\Oh(n\sqrt{\log n})$.
  \item Counting $rank(\psi)-rank(anchor(\psi))$ out of $route(\psi)$ in $\Oh(n)$.
  \item Ranking inside the hub in $\Oh(n)$.
 \end{enumerate}
 With a few minor changes we can adjust it to the new path construction.\newline
 The first part does not really change (it is enough to know what are the missing elements in the seed and its children).\newline
 In the second part the only difference is that in the order used in inversion vector problem we must insert element $1$ somewhere. As it is never a missing element, we can
 insert it as a minimal element (or leave the order untouched if $a_2=1$).\newline
 The third part is identical.\newline
 The biggest difference occurs in the fourth part -- we first need to determine to which cycle from the two cycle construction it belongs to.
 The permutation belongs to the outer cycle when it belongs to perms of two seeds $\psi,\phi$, where $\phi=parent(\psi)$,
 and either $\psi=son(\phi,1)$, or ($\psi=son(\psi,2)$ and it is one of three first permutations in the cycle).
 In this case we must rank the permutation in relation to $\widetilde{\phi}$
 (rank in $perms(\phi)$ + $|W_{n-4}|$ (or $|W_{n-3}|$) unless it is $\widetilde{\phi}$ or $\sigma(\widetilde{\phi})$) and add
 $$rank(\widetilde{\phi})=|W_{n-3}|+n+3+(|W_{n-4}|+n-2)\cdot(n-x-2)$$
 (where $x=mis(\phi)$) if $x\neq n-1$, and $1$ if $x=n-1$.\newline
 Otherwise it belongs to the inner cycle. In this case we rank it like in the normal construction (in relation to the first permutation in $perms(\phi)$), then we subtract
 $|W_{n-4}|+n+3$ (or $|W_{n-3}|+n+3$) and add $$(x-2)|U|+|W_{n-3}|+n+3+(n-3)\cdot(|W_{n-4}|+n+2)$$.
 
 Unranking algorithm contained four parts as well:
 \begin{enumerate}
  \item Finding appropriate tree of seeds (or returning permutation if it belongs to hub) in $\Oh(n)$.
  \item Computing $route(\psi)$ and $rank(\psi)$ in $\Oh(n\log\log n)$.
  \item Obtaining $\psi$ out of $route(\psi)$ in $\Oh(n\frac{\log n}{\log\log n})$.
  \item Unranking in $perms(\psi)$ in $\Oh(n)$.
 \end{enumerate}
 Parts 2,3,4 remain unchanged (the only change is the use of $\otimes$ instead of $\oplus$).\newline
 In the first part we first $$\text{determine whether\quad} t<|W_{n-3}|+n+3+(n-3)\cdot(|W_{n-4}|+n+2)$$ if that is the case we check if $t<|W_{n-3}|+n+3$
 (we unrank in the part with $W_{n-3}$) and if that is not the case we divide $t-|W_{n-3}|-n-2$ by 
 $|W_{n-4}|+n+2$ (using integer division) and unrank it in appropriate three of seeds.
 $$\text{If\quad} t'=t-(|W_{n-3}|+n+3+(n-3)\cdot(|W_{n-4}|+n+2))\ge 0$$ we unrank $t'$ in the inner cycle --
 we divide $t'$ by $|U|$ (using integer division) and proceed as in the previous algorithm.
\end{proof}

\subsection{Polynomial construction for the cycle}
 \medskip
\centerline{\includegraphics[width=12cm]{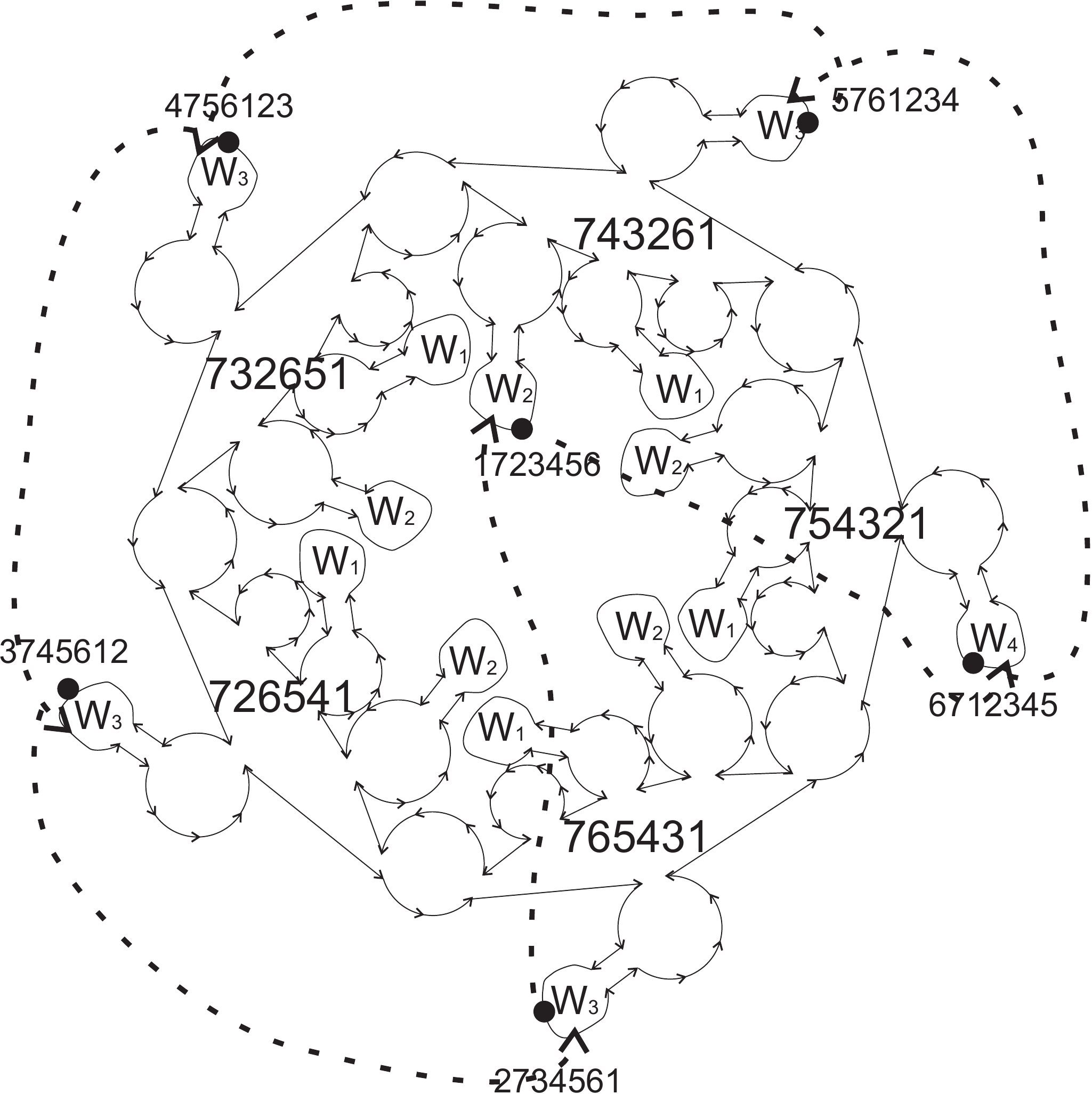}}

\medskip
\begin{lemma}
 There exist an SLP for Hamiltonian cycle of size $\Oh(n^3)$.
\end{lemma}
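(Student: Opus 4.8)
The plan is to build the cycle directly on top of the two linear representations already furnished by Lemma~\ref{two cycles representation}. These give SLPs for the inner cycle $C'$ (namely $SEQ'=U^{n-2}$) and the outer cycle $C''$ (namely $SEQ''$), each of size $\Oh(n^2)$ once the shared recurrences for the $\W_k$ of Theorem~\ref{recurrences for W_k} are included. The Hamiltonian cycle produced by the Sawada--Williams redirection differs from the edge-disjoint union $C'\cup C''$ only at the $n-1$ switches: at each switch $\Delta_i$ the outgoing $\sigma$-edge is deleted and replaced by the $\tau$-edge $\Delta_i\overrightarrow{\tau}\sigma(\Delta_{i-1})$ (with $\Delta_1\overrightarrow{\tau}\sigma(\Delta_{n-1})$). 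Thus, at the level of the syntactic sequence, forming the cycle amounts to cutting $SEQ'$ and $SEQ''$ at their switches and re-gluing the resulting arcs, with each connecting letter $\sigma$ turned into a $\tau$.

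First I would cut. Cutting the two cyclic words at the $n-1$ switch positions yields $n-1$ arcs: exactly one arc coming from $C'$ (since $C'$ carries a single switch, its arc is a cyclic rotation of the whole $U^{n-2}$, i.e. a word $U''\,U^{n-3}\,U'$ where $U=U'U''$ is the split of $U$ at the switch) and $n-2$ arcs coming from $C''$ (one per block of $SEQ''$, namely the individual blocks $\sigma\W_{n-3}\gamma_{n-3}\gamma_2$ and $\sigma\W_{n-4}\gamma_{n-3}\gamma_2$). Each arc is a factor of an SLP-compressed word of size $\Oh(n^2)$, and a factor of an SLP-compressed word admits an SLP of comparable size, so each arc has an SLP of size $\Oh(n^2)$.

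Next I would determine the order in which these arcs appear along the Hamiltonian cycle. Starting from $\sigma(\Delta_{i-1})$ one follows the arc of whichever original cycle contains $\Delta_{i-1}$ up to the next switch $\Delta_j$ met inside that cycle, and then jumps to $\sigma(\Delta_{j-1})$; hence the arcs are visited in the order of the orbit of the map $i\mapsto \mathit{succ}(i-1)$, where $\mathit{succ}$ sends a switch to the next switch of its own original cycle. The key point to verify is that this map is a single $(n-1)$-cycle, so that the re-glued object is one Hamiltonian cycle rather than a union of several; this is exactly the correctness of the redirection and can be read off from the fixed switch ordering $\Delta_1,\dots,\Delta_{n-1}$ together with the known positions of the switches inside $C'$ and $C''$. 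Concatenating the $n-1$ arcs in this orbit order, with the final letter of each arc replaced by $\tau$ as prescribed by the redirection, gives a word that is precisely the label sequence of the Hamiltonian cycle. I expect this step to be the main obstacle: one must prove the single-cycle property and pin each arc down as an explicit factor of $SEQ'$ or $SEQ''$, whereas the subsequent size count is routine.

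Finally I would bound the size. The top-level program is a concatenation of $\Oh(n)$ arcs in a fixed but scrambled order; the scrambling destroys the powers $(\cdot)^{n-2}$ and $(\cdot)^{n-3}$ that made $SEQ'$ and $SEQ''$ compressible, so the arcs must be listed individually. Charging each of the $\Oh(n)$ arcs its own $\Oh(n^2)$ sub-SLP (that is, not insisting on sharing the $\W_k$-recurrences across arcs) already yields a total of $\Oh(n^3)$, which proves the lemma. The cubic bound here is deliberately loose: the same analysis, sharing the $\W_k$-rules globally and keeping the inner arc power-compressed, brings the cost down to $\Oh(n^2)$, but $\Oh(n^3)$ suffices for our purposes.
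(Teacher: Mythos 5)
Your proposal is correct and takes essentially the same route as the paper: both cut the two cycles of the 2-cycle cover at the $n-1$ switches and concatenate the resulting $\Oh(n)$ arcs, each of which admits an SLP of size $\Oh(n^2)$ (you justify this by the generic fact that a factor of an SLP-compressed word has a comparable-size SLP, the paper by observing that each $W_k$ is split at most once per cut), giving $\Oh(n)\cdot\Oh(n^2)=\Oh(n^3)$ in total. The only real difference is presentational: the paper locates each switch explicitly inside its seed package (via $hub$ and $anchor$), and it does not need to address your ``main obstacle'' --- the single-cycle property of the arc ordering --- since Hamiltonicity of the redirected cycle is already guaranteed by the Sawada--Williams construction.
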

\begin{proof}
 Switch $(x,n,x\oplus1,...,x\ominus 1)\in perms(\psi)$ for $\psi=(n,x\oplus1,x\oplus3,...,x\ominus1,x)$. For $x\neq1$ $hub(\psi)=(n,x\oslash1,x\oslash2,...,x\otimes1,1)$,
 and $anchor(\psi)=son(hub(\psi),1)$, thus each $W_{n-4}$ (or $W_{n-3}$) on the outer cycle contains one such switch.
 For $\psi=(n,2,4,...,n-1,1)$ $hub(\psi)=(n,n-3,n-4,...,2,n-1,1)$ and $anchor(\psi)=son(hub(\psi),3)$,
 thus the remaining switch belongs to one of the $U$ parts of the inner cycle.\newline
 We can divide each such part into two -- the one before the switch and the one after it. Each such part can be represented by an $SLP$ of size $\Oh(n^2)$
 as a word $W_k$ can be divided at most once for each $k$ and $x$ (each other does not contain a switch, thus remain undivided).
\end{proof}
\begin{lemma}\label{rank in worse time}
 We can rank and unrank in the cycle in $\Oh(n^2\cdot\sqrt{\log n})$.
\end{lemma}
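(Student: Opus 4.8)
The plan is to reuse the machinery already developed for the linear-path versions of ranking and unranking, adapting it to the cyclic construction whose SLP was built in the previous lemma. Recall from that lemma that each $U$-part and each $W_{n-4}$ (or $W_{n-3}$) part of the two cycles $C',C''$ contains at most one switch, so dividing at a switch cuts each word $W_k$ at most once per value of $k$ and per shift $x$. The overall $\Oh(n^3)$ SLP therefore decomposes into $\Oh(n)$ top-level segments (indexed by the shift $x$, i.e.\ by which hub seed we are in), each of which is internally represented by an $\Oh(n^2)$-size SLP of the familiar $\W_k$-form. The key observation driving the complexity is that inside any \emph{single} such segment the ranking/unranking problem is structurally identical to the linear-path problem already solved, so the per-segment cost is exactly the cost established in the Ranking and Unranking theorems, namely $\Oh(n\sqrt{\log n})$ and $\Oh(n\frac{\log n}{\log\log n})$ respectively.

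First I would handle ranking. Given $\pi$, I would locate it in constant (or linear) time within the cyclic structure: determine the shift $x$ and hence the relevant hub seed, decide whether $\pi$ lies on the inner cycle $C'$ or the outer cycle $C''$ using the same neighbor/parent test as in the adapted fourth part of the earlier proof, and compute the cumulative offset up to the beginning of that segment. This offset is a closed-form expression in $x$, $|U|$, $|W_{n-3}|$, $|W_{n-4}|$ and $n$ (exactly the formulas $rank(\widetilde\phi)$ and the $(x-2)|U|+\dots$ adjustment written out earlier), so it is computed in $\Oh(n)$ time. Within the segment I then invoke the already-proven ranking procedure at cost $\Oh(n\sqrt{\log n})$. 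The extra factor of $n$ in the claimed bound $\Oh(n^2\sqrt{\log n})$ comes from the fact that, unlike the single linear path, here the decomposition forces us to sum over $\Oh(n)$ segment-boundary contributions (equivalently, the polynomial SLP has $\Oh(n)$ more structure to walk through), so the route-reconstruction and distance-from-hub computation must be repeated or aggregated across $\Oh(n)$ pieces.

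For unranking the plan is symmetric: from $t$ I would first identify, in $\Oh(n)$ time, which of the $\Oh(n)$ top-level segments contains the $t$-th permutation by comparing $t$ against the precomputed prefix sums of segment lengths (these sums are the same stably-increasing quantities controlled by Lemma~\ref{fast search & sums of W_i are stably increasing}, so the search is cheap). Having fixed the segment, I subtract its starting offset and run the linear-path unranking procedure of cost $\Oh(n\frac{\log n}{\log\log n})$; since $\frac{\log n}{\log\log n}\le\sqrt{\log n}$ only fails in the wrong direction, the dominant term across the $\Oh(n)$-fold aggregation is $\Oh(n^2\sqrt{\log n})$, matching the statement. Throughout I would keep the $\otimes$-arithmetic (rather than $\oplus$) consistent with the cyclic construction, as flagged in the previous lemma.

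The main obstacle I anticipate is not any single step but the bookkeeping at the switch boundaries: because the SLP was obtained by splitting words $\W_k$ at switches, the two halves of a split $\W_k$ no longer correspond to a clean $bunch(\psi)$, so the invariants $GEN(\widetilde\psi,\W_k)=bunch(\psi)$ and $\W_k(\widetilde\psi)=\psi^{(n-1)}$ must be re-examined at exactly the segment where the switch lives. Verifying that the offset formulas correctly account for the one switch per segment (and for the two exceptional seeds $(n,n-3,\dots,1,n-1)$ of height $n-3$ and $(n,2,4,\dots,n-1,1)$ identified earlier) is where the real care is needed; once those boundary contributions are shown to be computable in $\Oh(n)$ each and to number $\Oh(n)$ overall, the stated $\Oh(n^2\sqrt{\log n})$ bound follows immediately by summing the per-segment costs.
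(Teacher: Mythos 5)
Your proposal computes the wrong quantity. Every offset formula you invoke --- the $rank(\widetilde{\phi})$ expression, the $(x-2)|U|+\dots$ adjustment, the prefix sums of segment lengths --- comes from the alternative \emph{path} construction, so the rank your procedure returns is the rank of $\pi$ in the two-cycle/path order, not in the Hamiltonian cycle. Those two orders genuinely differ: the cycle is obtained by redirecting the outgoing edges of the $n-1$ switches, which re-glues the pieces in a different order (the second half of one seed tree, then $U$-parts, then the first half of another tree, with switches visited in the $\oslash\,2i$ pattern of the cycle SLP), so the difference between path rank and cycle rank is not a global constant; it depends on which inter-switch segment $\pi$ lies in. The missing idea --- which is the entire content of the paper's proof --- is a conversion table: rank all $n-1$ switches in the path (this is where the $\Oh(n^2\sqrt{\log n})$ actually comes from, namely $n-1$ invocations of the $\Oh(n\sqrt{\log n})$ path-ranking algorithm), walk once around the cycle to obtain every switch's cycle rank in $\Oh(n)$ total time, and then answer a query by ranking $\pi$ in the path, finding the last switch preceding it in path order, and adding the offset to that switch's cycle rank. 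The conversion is valid precisely because path and cycle traverse identical edge sequences between consecutive switches (only switch out-edges were redirected). Consequently your attribution of the extra factor of $n$ to ``route-reconstruction repeated or aggregated across $\Oh(n)$ pieces'' during a query is incorrect: after the switch table is built, a single rank query costs only $\Oh(n\sqrt{\log n})$; the $n^2$ lives entirely in preprocessing.

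Relatedly, your key claim that inside a single cycle segment ``the ranking/unranking problem is structurally identical to the linear-path problem'' is exactly what fails. An inter-switch segment of the cycle consists of the tail of one seed tree (after its switch), some whole $U$-parts, and the head of another tree (up to its switch), because the SLP of the preceding lemma splits each tree's $\W_k$-words at its switch. To locate $\pi$ relative to the segment's starting switch you need the rank of that switch inside its own tree --- i.e.\ the very switch ranks described above, which your proposal defers as ``bookkeeping at the switch boundaries'' but never computes. (The paper computes them here by brute force, one path-ranking call per switch; only in the \emph{following} lemma does it show that switch routes have the special form $(n-3,n-4,\dots,1)$ with at most one element erased, which brings the preprocessing down to $\Oh(n)$ and yields the final $\Oh(n\sqrt{\log n})$ theorem.) So the gap is not boundary bookkeeping: it is the absence of any mechanism translating between path order and cycle order.
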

\begin{proof}
 Scheme of algorithm:\newline
 We count ranks for all the "switches" in the path in $\Oh(n^2\cdot \sqrt{\log n})$. Then we count differences between ranks of two next "switches" (in the order of the path)
 and ranks of switches in the cycle (iterating through switches in the order of the cycle) in $\Oh(n)$ total time and space.\newline
 Rank:
 \begin{enumerate}
  \item Rank permutation in path in $\Oh(n\cdot\sqrt{\log n})$.
  \item Find last "switch" with smaller or equal path rank, and count the difference (if rank is smaller then the rank of first switch we count everything modulo $n!$) in $\Oh(n)$.
  \item Add the difference to cycle rank of that "switch".
 \end{enumerate}
 Unrank:
 \begin{enumerate}
  \item Find last "switch" with smaller or equal cycle rank and count the difference in $\Oh(n)$.
  \item Add the difference to path rank of that "switch".
  \item Unrank in the path with the new value in $\Oh(n\frac{\log n}{\log \log n})$.
 \end{enumerate}
\end{proof}

\subsection{Efficient construction for cycle}
\begin{lemma}
 Routes of (seeds which perms contain) "switches" are always of the form $(n-3,n-4,...,1)$ with $0$ or $1$ element erased (for example $(6,5,3,2,1)$ for $n=9$)
 or are equal to $(n-3,n-4,...,3)$.
 Furthermore we can compute all the ranks of "switches" on the new path in $\Oh(n)$ total time.
\end{lemma}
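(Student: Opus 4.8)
The plan is to combine the identification of the switch-carrying seeds already obtained in the preceding $\Oh(n^3)$-SLP construction with a direct walk up the parent-chain, and then to exploit the rigidity of these seeds to amortise the rank computation over all $n-1$ switches.

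First I would recall that the switch $(x,n,x\oplus1,\dots,x\ominus1)$ lies in $perms(\psi_x)$ for the single seed $\psi_x=(n,x\oplus1,x\oplus3,x\oplus4,\dots,x\ominus1,x)$: writing the switch cyclically as $(n,x\oplus1,x\oplus2,\dots,x\ominus1,x)$, only the removal of the missing element $x\oplus2$ yields a seed, since removing $p_2$ fails (here $p_2\ne p_3\oplus1$). The key point is that $\psi_x$ is the fully sorted hub pattern with exactly one entry displaced to the front and one entry missing: after $n$ it consists of the isolated element $x\oplus1$ followed by the monotone run $x\oplus3,x\oplus4,\dots,x\ominus1,x$.

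Next I would compute $route(\psi_x)=(ord(\psi_0),ord(\psi_1),\dots)$ by climbing $parent$-links. Going up one level replaces the missing element $m$ by $m\oplus1$ (resp.\ $m\otimes1$ in the cyclic tree) and pulls the next staircase element to the front, so the displaced prefix grows one longer while the tail shrinks. I would show by induction on the level that $mis(\psi_i)\oplus1$ always sits one position further to the left than at the previous step, so that the values $ord(\psi_i)$, each read as a position from the right, form the contiguous run $n-3,n-4,n-5,\dots$; the single place where the displaced entry $x\oplus1$ and the gap left by the missing element coincide produces at most one skip, which is exactly the erased coordinate in the statement. The chain stops when the parent becomes the hub seed $(n,x\oslash1,\dots,x\otimes1,1)$: for $x\ne1$ the anchor is $son(hub,1)$ and the run reaches $1$, giving $(n-3,\dots,1)$ with the possible single erasure, whereas for the distinguished seed $\psi_1=(n,2,4,\dots,n-1,1)$ the anchor is $son(hub,3)$ and the run is truncated to $(n-3,\dots,3)$.

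For the algorithmic claim I would avoid ranking each switch separately (costing $\Oh(n\sqrt{\log n})$ apiece) and instead accumulate all ranks in one left-to-right sweep over $x=1,\dots,n-1$. By Observation~\ref{Oct23} and Corollary~\ref{anchor-rank}, $rank(\psi_x)-rank(anchor(\psi_x))=\sum_i SUM(height(\psi_{i+1}),ord(\psi_i))$, and because the route is a contiguous decreasing run (with one fixed erasure) the heights $height(\psi_i)$ are given by the corresponding regular run of $\Delta$-values, so this sum collapses to a closed form in the precomputed prefix sums $\sum_{i\le k}(|W_i|+n-1)$ of Observation~\ref{sizes of trees}, evaluable in $\Oh(1)$. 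The anchors $son(hub_x,1)$ live in successive hub seeds, whose ranks form an arithmetic progression given by the hub-ranking formula of Lemma~\ref{2cases - ranking}(b); as $x$ increases by one the seed $\psi_x$ merely rotates, so the route, its single erasure position, and the anchor-rank each change by an $\Oh(1)$ update. Maintaining these running quantities yields every $rank(\psi_x)$, and hence every switch rank, in $\Oh(1)$ amortised time, i.e.\ $\Oh(n)$ total. The main obstacle is the exact bookkeeping of the erased coordinate: one must track the interaction of the displaced element $x\oplus1$, the missing element $x\oplus2$, and the modular wrap-around (together with the passage from $\oplus$ to the cyclic $\otimes$ and the attached element $1$) to prove that the route is precisely the full run, the run with one specified entry deleted, or the truncated run $(n-3,\dots,3)$, and never anything else; once this case analysis is settled the telescoping of $SUM$ and the $\Oh(1)$ incremental updates are routine consequences of the already-established formulas.
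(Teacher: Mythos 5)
Your plan follows the same skeleton as the paper's proof: the switch $(x,n,x\oplus1,\dots,x\ominus1)$ lies in $perms(\psi_x)$ for the unique height-$1$ seed $\psi_x=(n,x\oplus1,x\oplus3,\dots,x)$; climbing parent links re-inserts the missing element just after $n$ and erases the next one, so the orders form a run of consecutive integers decreasing by one with at most one skip; and the $\Oh(n)$ total bound comes from computing a few switch ranks from scratch and updating between consecutive values of $x$. But the proposal stops short exactly where the paper's proof does its work. The lemma's first claim \emph{is} the case analysis you defer as ``the main obstacle'': one must prove that the single skip is caused by jumping over the attached element $1$ (which under $\otimes$ is never a missing element), and derive the four cases --- $route=(n-3,\dots,1)$ for $x=n-1$ (no jump, since $n-1\oplus1=1$), $route=(n-3,\dots,x+1,x-1,\dots,1)$ for $1<x<n-2$ (the jump occurs at the $x$-th step from the hub), $route=(n-4,\dots,1)$ for $x=n-2$ (the jump occurs inside $\psi_x$ itself), and $route=(n-3,\dots,3)$ for $x=1$, whose anchor is $son(hub(\psi_1),3)$. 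Your proposed mechanism (``the displaced entry and the gap coincide'') is an unproven guess at this, and without the cases pinned down you cannot identify the anchors, the erased coordinate, or where the switch sits inside $perms(\psi_x)$.

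The complexity argument also rests on assertions that are not established. The claim that $\sum_i SUM(i+2,route[i])$ ``collapses to a closed form in the precomputed prefix sums'' of Observation~\ref{sizes of trees} is unsupported: those prefix sums make each single value $SUM(k,j)$ an $\Oh(1)$ evaluation (this is what Corollary~\ref{anchor-rank} exploits), but the route sum has $\Theta(n)$ terms whose first argument varies with the index --- a diagonal sum that the stored quantities do not yield in $\Oh(1)$. Evaluating it term by term costs $\Oh(n)$ per switch, hence $\Oh(n^2)$ overall, which is precisely the bound of Lemma~\ref{rank in worse time} that this lemma is meant to beat. What saves the bound --- and what the paper exhibits --- is the concrete identity $SUM(n-x-2,x+1)-SUM(n-x-2,x)=|W_{n-x-3}|+n$, showing that the route sums of consecutive switches differ in a single $\Oh(1)$-computable term; you assert an ``$\Oh(1)$ update'' but never exhibit it, and it only makes sense once the case analysis above fixes the routes. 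Finally, ``hence every switch rank'' glosses over $rank(\pi)-rank(\psi_x)$: this must be given by the closed formula $(n-3)\cdot(n+1)$ (respectively $(n-4)\cdot(n+1)$ for $x=n-2$), since invoking Lemma~\ref{2cases - ranking}(a) per switch would again cost $\Oh(n)$ each and break the $\Oh(n)$ total bound.
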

\begin{proof}
 Each "switch" $\pi=(x,n,x\oplus1,...,x\ominus1)$ belongs to perms of just one seed namely $\pi\in perms(\psi)$ for
 $\psi=(n,x\oplus1,x\oplus3,x\oplus4,...,x)$ ($height(\psi)=1$).\vskip 0.4cm
 When going through parent edges until reaching $hub(\psi)$ each time the missing symbol is inserted after $n$ shifting all elements till $x\oplus1$ by one, and
 erasing the element just after it (jumping over element $1$). Hence each time parent edge is used we erase element closer to right by one with the exception of the
 one time when $1$ appears just after $x\oplus1$ for the first time. In that case the element erased next is closer to the right by two.
 Thus each route is built of numbers decreasing by ones (sometimes with one number missing). Each time the height of the tree rises by exactly one.\vskip 0.4cm
 When $x\neq 1$, then $anchor(\psi)=son(hub(\psi),1)$, thus the route ends with $1$.\newline
 If $x=n-1$ then $n-1\oplus1=1$, thus $1$ is never jumped over. As we start in a seed with tree of height $1$, 
 $route(\psi)=(n-3,n-4,...,1)$ (this is the switch which appears in the $W_{n-3}$ part).\newline
 If $1<x<n-2$, then $1$ is jumped over at the $x$-th step from $hub(\psi)$ resulting in $route(\psi)=(n-3,...,x+1,x-1,...,1)$.\newline
 When $x=n-2$ the jump over appears inside $\psi$, and thus does not touch $route(\psi)=(n-4,...,1)$.\newline
 When $x=1$ $route(\psi)=(n-3,...,3)$ (this is the switch from $W_{n-5}$ in the inner cycle).\newline
 
 When $x\neq n-2$\quad $\pi$ appears on the $n-3$-rd place of $n-2$-nd (last) cycle in $perms(\psi)$.
 When $x=n-2$ the jump over appears inside $\psi$ and thus $\pi$ appears  on the $n-4$-th place of $n-3$-rd cycle in $perms(\psi)$.
 
 As the heights of trees grow always by $1$ $$rank(\psi)-rank(anchor(\psi))=\sum_{i=0}^{|route(\psi)|-2} SUM(i+2,route(\psi)[i]).$$
 For $1<x<n-3$ these two values for $x$ and $x+1$ differs by $$SUM(n-x-2,x+1)-SUM(n-x-2,x)=|W_{n-x-3}|+n.$$
 Hence all the $rank(\psi)-rank(anchor(\psi))$ can be counted in $\Oh(n)$ total time
 (we get 4 cases - each counted in $\Oh(n)$ (see Lemma \ref{anchor-rank}), plus $\Oh(n)$ time
 to compute the other ones by adding the differences).
 Each $anchor(\psi)$ is the first in a part of the sequence ($U$ or $(\sigma W\gamma_{n-3}\gamma_2)$), thus all $rank(anchor(\psi))$ can be counted in $\Oh(n)$ total time.
 $rank(\pi)-rank(\psi)=(n-3)\cdot(n+1)$ ( or $(n-4)\cdot(n+1)$ for $x=n-2$).
\end{proof}
\begin{theorem}
 $\sigma\tau$-cycle from \cite{sawadahomepage} has a $\Oh(n^2)$ SLP representation. 
\end{theorem}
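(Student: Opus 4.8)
The plan is to start from the two edge-disjoint cycles $C'$ and $C''$ whose compact representations $SEQ'=U^{n-2}$ and $SEQ''$ are already supplied by Lemma~\ref{two cycles representation}, both built on top of the $\Oh(n^2)$-size SLP for the words $\W_k$ from Theorem~\ref{recurrences for W_k}. The Sawada--Williams Hamiltonian cycle differs from $C'\cup C''$ only by the switch-redirections: at each of the $n-1$ switches one outgoing $\sigma$-edge is replaced by a $\tau$-edge, which amounts to cutting both cycles open at the switch permutations and re-gluing the resulting $n-1$ switch-free arcs in the prescribed cyclic order. So the whole task reduces to showing that these arcs, summed over all switches, admit an SLP with only $\Oh(n^2)$ productions; everything else (the uncut parts of $C'$ and $C''$) is already $\Oh(n^2)$.

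First I would locate each switch inside the nested $\W$-structure using the preceding lemma on routes of switches. Every switch lies in $perms(\psi)$ for a height-$1$ seed $\psi$, its height rises by exactly one at each parent step, and its route to the hub is the canonical decreasing sequence $(n-3,n-4,\dots,1)$ with at most one element erased (or $(n-3,\dots,3)$ for the single inner switch). Reversing the route, the descent from the top word $\W_{n-3}$ down to a switch always chooses at height $k$ the son of order $i_k=n-1-k$, landing in $\W_{k-1}$. The crucial point is that $i_k$ depends on $k$ alone, independently of which switch we are heading to, so a \emph{single} family of $\Oh(n)$ cut-nonterminals will describe all the splits at once.

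Concretely, I would introduce for each height $k$ two nonterminals $\mathbf{P}_k$ and $\mathbf{S}_k$ denoting the prefix of $\W_k$ up to entering its route-son and the suffix after leaving that son. Using the factorisation $\W_k=\tau\cdot\prod_{i=1}^{n-2}\sigma^i\W_{\Delta(k,i)}\gamma_{n-2-i}$ of Theorem~\ref{recurrences for W_k}, the prefix expands as (the partial product $\tau\cdot\prod_{i<i_k}\sigma^i\W_{\Delta(k,i)}\gamma_{n-2-i}$) followed by $\sigma^{i_k}\mathbf{P}_{k-1}$, and symmetrically $\mathbf{S}_k$ starts with $\mathbf{S}_{k-1}$ followed by $\gamma_{n-2-i_k}$ and the trailing partial product. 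Thus $\mathbf{P}_k$ and $\mathbf{S}_k$ form two linear chains, each referencing an already-defined $\W_j$ and the next cut-nonterminal; the lone erased-element routes need a second parallel chain, still of size $\Oh(n)$. Assembling the $n-1$ arcs then uses only $\Oh(n)$ top-level rules, and since consecutive switches on $C''$ sit in identical $\sigma W_{n-4}\gamma_{n-3}\gamma_2$ blocks (and on $C'$ in identical copies of $U$), the arcs are themselves highly repetitive, so the redirected $\tau$-junctions cost $\Oh(1)$ each. Together with the $\Oh(n^2)$ rules for the $\W_k$, this yields the claimed $\Oh(n^2)$ SLP.

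The hard part will be the bookkeeping of the cut: verifying that the partial products appearing in $\mathbf{P}_k,\mathbf{S}_k$ can be extracted with only $\Oh(n)$ extra productions \emph{in total}, so that the split propagates down one shared chain rather than spawning an independent $\Oh(n^2)$-size split of $\W_k$ for every switch — which is exactly what made the earlier $\Oh(n^3)$ construction costly. This rests on the routes lemma (all switch routes share a single canonical descent up to one gap) together with the compatibility of the $\Delta(k,i)$ heights and the $\gamma_{n-2-i}$ boundaries across consecutive heights established in Lemma~\ref{seed intersections}, which guarantees that the cut position $i_k$ and the inner word $\W_{\Delta(k,i_k)}=\W_{k-1}$ line up so that $\mathbf{P}_k$ and $\mathbf{S}_k$ genuinely telescope. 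Once these two chains are validated and the single inner switch (route $(n-3,\dots,3)$, anchored at $son(hub,3)$) is handled as a special case, the concatenation of the $n-1$ arcs with the redirected $\tau$-edges completes the $\Oh(n^2)$ SLP for the cycle.
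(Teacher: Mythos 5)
Your high-level plan is the paper's: cut $C'$ and $C''$ at the switches, re-glue the $n-1$ arcs, and represent each arc by per-height prefix/suffix nonterminals sitting on top of the $\Oh(n^2)$ SLP for the $\W_k$. The gap is in the step you yourself call crucial: the claim that the descent towards \emph{any} switch enters, at height $k$, the son of order $i_k=n-1-k$, so that a single family of cut-nonterminals handles all switches. This is true only for the switch $x=n-1$, whose route is the unerased sequence $(n-3,n-4,\dots,1)$. For $2\le x\le n-3$ the route is $(n-3,\dots,x+1,x-1,\dots,1)$, and erasing the one element $x$ is \emph{not} a deviation at a single height: since the height of $\psi_i$ is $i+1$ while its order is the $i$-th route entry, deleting one value shifts the value/position correspondence of the entire tail. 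Concretely, for all $i\ge n-3-x$ the order is $n-4-i$, i.e.\ at every height $k$ above the switch-dependent crossover $n-1-x$ the descent enters the \emph{penultimate} son of height $k-1$ (order $n-2-k$), and only below the crossover the \emph{last} one (order $n-1-k$). Hence the cut position genuinely depends on which switch you are heading to; your chain $\mathbf{P}_k,\mathbf{S}_k$ generates the correct arc only for $x=n-1$, and a ``second parallel chain'' of the same full-descent kind cannot be spliced at the crossover, so as described it does not repair the remaining arcs.

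This two-sided behaviour is precisely why the paper's proof keeps four families --- $W_k^{(1)},W_k^{(4)}$ (prefix before / suffix after the \emph{penultimate} $W_{k-1}$) and $W_k^{(3)},W_k^{(2)}$ (prefix before / suffix after the \emph{last} $W_{k-1}$) --- and then assembles, for each $x$, the two halves of its tree as mixed concatenations such as $V_n^{(x,1)}=W_{n-4}^{(1)}\cdots W_{n-x-1}^{(1)}W_{n-x-2}^{(3)}\cdots W_2^{(3)}(\gamma_{n-1})^{n-3}\sigma^{n-3}$, with the changeover at a height determined by $x$; this costs $\Oh(n)$ productions per switch and $\Oh(n^2)$ in total, which is where the bound actually comes from (your count of ``$\Oh(n)$ top-level rules'' also rests on the false uniformity, since the arcs are built from shared pieces but are not identical). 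Your argument becomes correct once you introduce both cut types and make the assembly switch-dependent; one can even keep $\Oh(1)$ per switch by telescoping the penultimate-type prefixes top-down ($\mathbf{A}_k=\mathbf{A}_{k+1}W_k^{(1)}$) and the last-type descent bottom-up ($\mathbf{B}_k=W_k^{(3)}\mathbf{B}_{k-1}$), writing the prefix of arc $x$ as $\mathbf{A}_{n-x-1}\mathbf{B}_{n-x-2}$ --- but some mechanism acknowledging the crossover is unavoidable.
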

\begin{proof}
 Each tree of seeds in the outer cycle (and one in the inner) is divided into two parts. Each tree is divided differently, but in each $W_k$ the division happens
 either in the last $W_{k-1}$ part, or the penultimate one, thus it is enough to divide the definitions of $W_k$ into three parts
 $W_k=W_k^{(1)}(W_{k-1}\gamma_k\sigma^{n-k-1}W_{k-1})W_k^{(2)}$, and for each tree divide one of the $W_{k-1}$ words.\newline
 
 More precisely for ($2\le k\le n-3$) let:\newline
 $$W_k^{(1)}=(\tau\prod_{i=1}^{n-k-4}\sigma^iW_{\Delta(k,i)}\gamma_{n-2-i})\sigma^{n-k-2}$$
 $$W_k^{(2)}=\gamma_{k-1}\sigma^{n-k}W_{k-2}W_{k-1}^{(2)}\quad (W_1^{(2)}=\tau)$$
 $$W_k^{(3)}=W_k^{(1)}W_{k-1}\gamma_k\sigma^{n-k-1}\quad\quad
 W_k^{(4)}=\gamma_k\sigma^{n-k-1}W_{k-1}W_k^{(2)}\quad( =W_{k+1}^{(2)})$$
 The division of the tree with $(x,n,x\oplus 1,...,x\ominus 1)$ for $2\le x\le n-3$:\newline
 $$V_n^{(x,1)}= W_{n-4}^{(1)}W_{n-5}^{(1)}...W_{n-x-1}^{(1)}W_{n-x-2}^{(3)}...W_2^{(3)}(\gamma_{n-1})^{n-3}\sigma^{n-3}$$
 $$V_n^{(x,2)}=\gamma_1W_2^{(2)}...W_{n-x-2}^{(2)}W_{n-x-1}^{(4)}...W_{n-4}^{(4)}$$
 $$V_n^{(1,1)}=W_{n-5}^{(3)}...W_2^{(3)}(\gamma_{n-1})^{n-3}\sigma^{n-3}\quad\quad
 V_n^{(1,2)}=\gamma_1W_2^{(2)}...W_{n-4}^{(2)}$$
 $$V_n^{(n-2,1)}=W_{n-4}^{(1)}...W_2^{(1)}(\gamma_{n-1})^{n-4}\sigma^{n-4}\quad\quad
 V_n^{(n-2,2)}=\gamma_2\gamma_{n-1}W_2^{(4)}...W_{n-4}^{(4)}$$
 $$V_n^{(n-1,1)}=W_{n-3}^{(3)}...W_2^{(3)}(\gamma_{n-1})^{n-3}\sigma^{n-3}\quad\quad
 V_n^{(n-1,2)}=\gamma_1W_2^{(2)}...W_{n-3}^{(2)}$$
 $$C=V_n^{(1,2)}U^{n-3}\gamma_{n-4}\sigma_3V_n^{(1,1)}\tau
 \prod_{i=0}^{n-3} (V_n^{((n-1)\oslash 2i,2)}\gamma_{n-3}\gamma_2\sigma^1V_n^{((n-2)\oslash 2i,1)}\tau)$$
\end{proof}

\begin{theorem}
 In the cycle we can rank in $\Oh(n\sqrt{\log n})$ time and unrank in $\Oh(n\frac{\log n}{\log \log n})$ time.
\end{theorem}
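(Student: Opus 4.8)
The plan is to combine the rank-conversion scheme between the new (alternative) path and the Hamiltonian cycle, already used in Lemma~\ref{rank in worse time}, with the linear-time batch computation of switch ranks from the previous lemma, thereby removing its $\Oh(n^2\sqrt{\log n})$ bottleneck. Recall that the cycle is obtained from the new path by redirecting only the outgoing $\tau$-edges of the $n-1$ switches $\Delta_1,\dots,\Delta_{n-1}$; consequently the cycle splits into $\Oh(n)$ maximal runs, each appearing as a contiguous block of the new path, with boundaries located exactly at the switches. Inside one run the cycle-position and the path-position of a permutation differ by a fixed offset, and this offset changes only when a switch is crossed, so to translate between the two rankings it is enough to know, for every switch, both its path-rank and its cycle-rank.

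First I would precompute these $2(n-1)$ numbers in $\Oh(n)$ total time. The path-ranks of all switches come out in $\Oh(n)$ by the previous lemma: the switch-seed routes have the structured form $(n-3,n-4,\dots,1)$ with at most one element erased (or $(n-3,\dots,3)$), the differences $rank(\psi)-rank(anchor(\psi))$ are obtained in $\Oh(n)$ overall by summing consecutive $SUM$-differences, and all $rank(anchor(\psi))$ are available in $\Oh(n)$ since the anchors head the $U$- and $(\sigma W\gamma_{n-3}\gamma_2)$-segments. Traversing the cycle once in the switch order and accumulating segment lengths then yields all cycle-ranks, again in $\Oh(n)$. I keep the switches sorted once by path-rank and once by cycle-rank.

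To rank a permutation $\pi$ I would compute its path-rank in $\Oh(n\sqrt{\log n})$ time with the ranking procedure for the new path, find the last switch whose path-rank does not exceed it, and add the corresponding offset, reducing modulo $n!$ to handle the wrap-around at the splicing switch. To unrank a value $t$ I would locate the last switch with cycle-rank at most $t$, add the reverse offset to obtain the target path-rank, and run the unranking procedure for the new path in $\Oh(n\frac{\log n}{\log\log n})$ time. Locating the relevant switch costs $\Oh(\log n)$ by binary search over the $\Oh(n)$ stored switches, so each query is dominated by the corresponding path operation.

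The main obstacle I anticipate is correctness rather than efficiency: one must verify that the redirections at the switches partition the cycle into runs that are genuinely contiguous in the path, that the offset is constant inside each run, and that the exceptional inner-cycle switch and the single arc deleted when splicing $C'$ and $C''$ into the path are correctly accounted for in the offsets and in the modular bookkeeping. These are precisely the structural facts furnished by the two-cycle and alternative-path constructions above, so the remaining task is to assemble them and track the offsets with care.
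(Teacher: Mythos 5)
Your proposal is correct and follows essentially the same route as the paper: the paper's proof is exactly to reuse the rank-conversion scheme of Lemma~\ref{rank in worse time} while replacing its $\Oh(n^2\sqrt{\log n})$ switch-ranking step by the $\Oh(n)$ batch computation of switch ranks from the preceding lemma, which is precisely your plan. Your additional details (binary search over switches, offset bookkeeping modulo $n!$) are consistent refinements that do not change the approach or the asymptotics.
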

\begin{proof}
 We use the algorithm from \ref{rank in worse time}, just count ranks of "switches" in $\Oh(n)$ total time.
\end{proof}

\bibliographystyle{plainurl}
\bibliography{sigtau}
\end{document}